% This is the camera_148.tex file, the camera ready version of
% the paper 148 for
% the Proceedings of,
% Lecture Notes in Computer Science, version 2.3 for LaTeX2e

\documentclass[english,runningheads,11pt]{llncs}
\usepackage{tabularx,booktabs,multirow,delarray,array}
\usepackage{graphicx,amssymb,amsmath,amssymb}
\usepackage{latexsym}

\usepackage[in]{fullpage}
%\usepackage[top=1in, bottom=1in, left=1in, right=1in]{geometry}
%\usepackage{cite}

%\usepackage{hyperref}

%=======================my definitions============================
\def\etal{\textsl{et~al. }}
\def\calP{\mathcal{P}}
\def\calM{\mathcal{M}}
\def\calR{\mathcal{R}}

\def\calF{\mathcal{F}}
\def\calQ{\mathcal{Q}}
\def\calD{\mathcal{D}}
\def\calV{\mathcal{V}}
\def\bay{bay(\overline{cd})}
\def\canal{canal(x,y)}
\def\st{$s$-$t$}
\newcommand{\Tri}{\mbox{$T\!r\!i$}}

\newenvironment{proof}{\noindent {\textbf{Proof:}}\rm}{\hfill $\Box$\rm}
\newtheorem{observation}{Observation}
%=======================End of my definitions======================

\begin{document}

\title{Two-Point $L_1$ Shortest Path Queries in the Plane\thanks{A preliminary version appeared in the 30th Annual
Symposium on Computational Geometry (SoCG 2014).}}

\author{Danny Z. Chen\inst{1}\thanks{D.Z.~Chen's research was supported in part by NSF
under Grant CCF-1217906.}
\and Rajasekhar Inkulu\inst{2}\thanks{R.~Inkulu's research was supported in part by IITG startup grant.}
\and
Haitao Wang\inst{3}\thanks{Corresponding author. H.~Wang's research was supported in part by NSF under Grant CCF-1317143.}
}

 \institute{
 Department of Computer Science and Engineering\\
 University of Notre Dame, Notre Dame, IN 46556, USA\\
 \email{dchen@nd.edu}
 \and
Department of Computer Science and Engineering \\
 Indian Institute of Technology Guwahati, Guwahati 781039, Assam, India \\
\email{rinkulu@iitg.ac.in}\\
\and
  Department of Computer Science\\
  Utah State University, Logan, UT 84322, USA\\
  \email{haitao.wang@usu.edu}
}

\maketitle

\pagenumbering{arabic}
\setcounter{page}{1}

%\vspace*{-0.2in}
\begin{abstract}
Let $\calP$ be a set of $h$ pairwise-disjoint polygonal
obstacles with a total of $n$ vertices in the plane.  We consider the problem of
building a data structure that can quickly compute an $L_1$ shortest obstacle-avoiding
path between any two query points $s$ and $t$.
Previously, a data structure of
size $O(n^2\log n)$ was constructed in $O(n^2\log^2 n)$ time that
answers each two-point query in $O(\log^2 n+k)$ time, i.e., the
shortest path length is reported in $O(\log^2 n)$ time and an
actual path is reported in additional $O(k)$ time, where $k$ is the
number of edges of the output path. In this paper, we build a new data structure
of size $O(n+h^2\cdot \log h \cdot 4^{\sqrt{\log h}})$ in
$O(n+h^2\cdot \log^{2} h \cdot 4^{\sqrt{\log h}})$ time that
answers each query in $O(\log n+k)$ time. Note that
$n+h^2\cdot \log^{2} h \cdot 4^{\sqrt{\log h}}=O(n+h^{2+\epsilon})$
for any constant $\epsilon>0$. (In contrast, for the Euclidean version of this two-point
query problem, the best known algorithm uses $O(n^{11})$ space to achieve an $O(\log n+k)$
query time.) In addition, we construct a data structure of
size $O(n+h^2\log^2 h)$ in $O(n+h^2\log^2 h)$ time that answers each query in
$O(\log n+\log^2h+k)$ time, and a data structure of size $O(nh\log h)$
in $O(nh\log h+h^2\log^2 h)$ time that answers each query in
$O(\log n\log h +k)$ time.  Further, we extend our techniques to the
weighted rectilinear version in which the ``obstacles" of $\calP$ are rectilinear regions
with ``weights" and allow $L_1$ paths to travel through them with weighted costs.
Previously,  a data structure of
size $O(n^2\log^2 n)$ was built in $O(n^2\log^2 n)$ time that
answers each query in $O(\log^2 n+k)$ time.
Our new algorithm answers each query in $O(\log n+k)$ time with a
data structure of size $O(n^2\cdot \log n\cdot 4^{\sqrt{\log n}})$ that
is built in $O(n^2\cdot \log^{2} n\cdot 4^{\sqrt{\log n}})$ time
(note that $n^2\cdot \log^{2} n\cdot 4^{\sqrt{\log n}}=
O(n^{2+\epsilon})$ for any constant $\epsilon>0$).
\end{abstract}

\section{Introduction}
\label{sec:intro}
Let $\calP$ be a set of $h$ pairwise-disjoint polygonal obstacles in the
plane with a total of $n$ vertices. We consider two-point shortest obstacle-avoiding
path queries for which the path lengths are measured in the $L_1$ metric.
The plane minus the interior of
the obstacles is called the {\em free space}. Our goal is to build a
data structure to quickly compute an $L_1$ shortest path
in the free space between any two query points $s$ and $t$.
Previously, Chen \etal \cite{ref:ChenSh00} constructed a data structure of
size $O(n^2\log n)$ in $O(n^2\log^2 n)$ time that computes the
length of the $L_1$ shortest \st\ path in $O(\log^2 n)$ time and an
actual path in additional $O(k)$ time,
where $k$ is the number of edges of the output path. Throughout this paper,
unless otherwise stated,
when we say that the query time of a data structure is $O(f(n,h))$ (which
may be a function of both $n$ and $h$), we
mean that the shortest path length can be reported in
$O(f(n,h))$ time and an actual path can be found in additional time
linear in the number of edges of the output path. Hence, the query
time of the data structure in \cite{ref:ChenSh00} is $O(\log^2 n)$.

In this paper, we build a new data structure of size
$O(n+h^2\cdot \log h \cdot 4^{\sqrt{\log h}})$ in
$O(n+h^2\cdot \log^{2} h \cdot 4^{\sqrt{\log h}})$ time, with $O(\log n)$
query time.  Note that $n+h^2\cdot \log^{2} h \cdot 4^{\sqrt{\log
h}}=O(n+h^{2+\epsilon})$ for any constant $\epsilon>0$. Hence, comparing with
the results in \cite{ref:ChenSh00}, we reduce the query
time by a logarithmic factor, and use less preprocessing time and
space when $h$ is small, e.g., $h=O(n^\delta)$ for any constant $\delta<1$.
In addition, we can also build a data structure of size $O(n+h^2\log^2h)$
in $O(n+h^2\log^2 h)$ time, with an $O(\log n+\log^2 h)$ query time,
and another data structure of size $O(nh\log h)$
in $O(nh\log h+h^2\log^2 h)$ time, with an $O(\log n\log h)$ query
time.

Further, we extend our techniques to the {\em weighted rectilinear
version} in which each ``obstacle" $P\in \calP$ is a region with a nonnegative
weight $w(P)$ and the edges of the obstacles in $\calP$ are all axis-parallel;
a path intersecting the interior of $P$ is charged a cost
depending on $w(P)$. For this problem, Chen \etal \cite{ref:ChenSh00}
constructed a data structure of
size $O(n^2\log^2 n)$ in $O(n^2\log^2 n)$ time that answers each
two-point shortest path query in $O(\log^2 n)$ time.
We build a new data structure of size $O(n^2\cdot \log n \cdot
4^{\sqrt{\log n}})$ in $O(n^2\cdot \log^{2} n \cdot 4^{\sqrt{\log
n}})$ time that answers each query in $O(\log n)$ time.
Note that $n^2\cdot \log^{2} n \cdot 4^{\sqrt{\log
n}}=O(n^{2+\epsilon})$ for any constant $\epsilon>0$.

\subsection{Related Work}

The problems of computing shortest paths among obstacles in the plane
have been studied extensively (e.g.,
\cite{ref:ChenCo13,ref:ChenSh00,ref:ChenA11ESA,ref:ChenCo12arXiv,ref:ChenCo13SoCG,ref:ChenL113STACS,ref:ElGindyOr94,ref:GuibasOp89,ref:GuibasLi87,ref:HershbergerA91,ref:HershbergerAn88,ref:HershbergerCo94,ref:HershbergerAn99,ref:HershbergerA13,ref:InkuluPl09,ref:InkuluA10,ref:KapoorAn97,ref:LeeEu84,ref:LeeSh91,ref:MitchellAn89,ref:MitchellA91,ref:MitchellL192,ref:MitchellSh96}).
There are three main types of such problems:
{\em finding a single shortest $s$-$t$ path} (both $s$ and $t$ are
given as part of the input and the goal is to find a single shortest \st\ path),
{\em single-source shortest path queries} ($s$ is given as part of the input
and the goal is to build a data structure to
answer shortest path queries for any query point $t$), and
{\em two-point shortest path queries} (as defined and considered in this
paper). The distance metrics can be the Euclidean (i.e., $L_2$) or $L_1$.
Refer to \cite{ref:MitchellGe00} for a comprehensive
survey on this topic.

For the simple polygon case, in which $\calP$ is a single simple polygon,
all three types of problems have been solved optimally
\cite{ref:GuibasOp89,ref:GuibasLi87,ref:HershbergerA91,ref:HershbergerCo94,ref:LeeEu84},
in both the Euclidean and $L_1$ metrics. Specifically, an $O(n)$-size data structure
can be built in $O(n)$ time that answers each
two-point Euclidean shortest path query in $O(\log n)$ time
\cite{ref:GuibasOp89,ref:HershbergerA91}.
Since in a simple polygon a Euclidean shortest path is also an $L_1$ shortest path
\cite{ref:HershbergerCo94}, the results in \cite{ref:GuibasOp89,ref:HershbergerA91}
hold for the $L_1$ metric as well.

The polygonal domain case (or ``a polygon with holes''), in which $\calP$ has $h$ obstacles as
defined above, is more difficult. For the Euclidean
metric, Hershberger and Suri \cite{ref:HershbergerAn99} built a {\em single source
shortest path map} of size $O(n\log n)$ in $O(n\log n)$ time that answers each
query in $O(\log n)$ time.
%Chen and Wang \cite{ref:ChenCo13SoCG} presented an algorithm that
%computes a single shortest path in $O(n+h^2)$ time, which is in favor of small $h$.
For the $L_1$ metric, Mitchell
\cite{ref:MitchellAn89,ref:MitchellL192} built an $O(n)$-size single source shortest
path map in $O(n\log n)$ time that answers each
query in $O(\log n)$ time. Later, Chen and Wang
\cite{ref:ChenA11ESA,ref:ChenCo12arXiv,ref:ChenL113STACS}
built an $L_1$ single source shortest path map of size $O(n)$ in $O(n+h\log h)$ time, with an
$O(\log n)$ query time, for a triangulated free
space (the current best triangulation algorithm takes
$O(n+h\log^{1+\epsilon}h)$ time for any constant $\epsilon>0$
\cite{ref:Bar-YehudaTr94}).
For two-point $L_1$ shortest path queries, Chen \etal \cite{ref:ChenSh00}
gave the previously best solution, as
mentioned above; for a special case where the obstacles are rectangles, ElGindy and Mitra \cite{ref:ElGindyOr94} gave an $O(n^2)$ size data structure that supports $O(\log n)$ time queries.
For two-point queries in the Euclidean metric,
Chiang and Mitchell \cite{ref:ChiangTw99}
constructed a data structure of size $O(n^{11})$ that answers each
query in $O(\log n)$ time, and alternatively, a data
structure of size $O(n^{10}\log n)$ with an $O(\log^2 n)$ query time; other data structures
with trade-off between preprocessing and query time were also given in
\cite{ref:ChiangTw99}. If the query points $s$ and $t$ are both restricted
to the boundaries of the obstacles of $\calP$, Bae and Okamato
\cite{ref:BaeQu12} built a data structure of size $O(n^5 {\rm poly}(\log n))$
that answers each query in $O(\log n)$ time, where
${\rm poly}(\log n)$ is a polylogarithmic factor.
Efficient algorithms were also given for the case when
the obstacles have curved boundaries
\cite{ref:ChenCo13,ref:ChenCo13SoCG,ref:ChewPl85,ref:HershbergerAn88,ref:HershbergerA13}.

For the weighted region case, in which
the ``obstacles" allow paths to pass through their
interior with weighted costs, Mitchell and Papadimitriou \cite{ref:MitchellTh91} gave an
algorithm that finds a weighted Euclidean shortest path in a time of $O(n^8)$
times a factor related to the precision of the problem instance.
For the weighted rectilinear case, Lee \etal \cite{ref:LeeSh91}
presented two algorithms for finding a weighted $L_1$
shortest path, and Chen \etal \cite{ref:ChenSh00} gave an improved algorithm with
$O(n\log ^{3/2}n)$ time and $O(n\log n)$ space. Chen
\etal \cite{ref:ChenSh00} also presented a data structure for
two-point weighted $L_1$ shortest path queries among
weighted rectilinear obstacles, as mentioned above.

\subsection{Our Approaches}

Our first main idea is to propose an enhanced graph model based on the
scheme in \cite{ref:ChenSh00,ref:ClarksonRe87,ref:ClarksonRe88}, to reduce the query
time from $O(\log^2 n)$ to $O(\log n)$. In \cite{ref:ChenSh00,ref:ClarksonRe87,ref:ClarksonRe88},
to build a graph, a total of $n$ vertical lines (called ``cut-lines'') are created recursively in $O(\log n)$ levels.
%More specifically, one first-level cut-line $l$ passes the median of the $x$-coordinates of all obstacle vertices of $\calP$; two second-level cut-lints are created based on the obstacle vertices on the left side and right side of $l$, respectively, and repeat the same procedure recursively until no obstacle vertex remains between cut-lines.
Then, each obstacle vertex $v$ is projected to $O(\log n)$ cut-lines (one cut-line per level) to create ``Steiner points'' if $v$ is horizontally visible to such cut-lines.
%Thus, the size of the graph is $O(n\log n)$.
For any two query points $s$ and $t$, to report an $L_1$ shortest \st\ path, the algorithm in \cite{ref:ChenSh00} finds $O(\log n)$ Steiner points (called ``gateways'') on $O(\log n)$ cut-lines for each of $s$ and $t$, such that there must be a shortest \st\ path containing a gateway of $s$ and a gateway of $t$. Consequently, a shortest path is obtained in $O(\log^2 n)$ time using the $O(\log n)$ gateways of $s$ and $t$.
%By computing the shortest path tree in the graph from each Steiner point in the preprocessing, the shortest path
%distance between any gateway of $s$ and any gateway of $t$ is known. Then, during any query, a
%shortest path distance from $s$ to $t$ can be found in $O(\log^2 n)$
%time by using the $O(\log n)$ gateways.

We propose an enhanced
graph $G_E$ by adding more Steiner points onto the cut-lines such that
we need only $O(\sqrt{\log n})$ gateways for any query points,
and consequently, computing the shortest path length takes $O(\log n)$ time. More
specifically, for each obstacle vertex, instead of projecting it to a single vertical
cut-line at each level, we project it to $O(2^{\sqrt{\log n}})$ cut-lines in
every $O(\sqrt{\log n})$
consecutive levels (thus creating $O(2^{\sqrt{\log n}})$ Steiner points); in fact,
these cut-lines form a binary tree structure of height $O(\sqrt{\log n})$ and they are carefully chosen to ensure that $O(\sqrt{\log n})$ gateways are sufficient for any query point.
Hence, the size
of the graph $G_E$ is $O(n\sqrt{\log n}2^{\sqrt{\log n}})$.
%With this new graph, the preprocessing takes
%$O(n^2\cdot \log^2 n \cdot 4^{\sqrt{\log n}})$ time and
%$O(n^2\cdot \log n \cdot 4^{\sqrt{\log n}})$ space, and each two-point
%query can be answered in $O(\log n+k)$ time. The same idea can be
%used to solve the weighted rectilinear case.

To improve the data structure construction so that its time and space bounds
depend linearly on $n$, we utilize the extended corridor
structure \cite{ref:ChenA11ESA,ref:ChenCo12arXiv,ref:ChenL113STACS},
which partitions the free space of $\calP$ into an ``ocean'' $\calM$, and
multiple ``bays'' and ``canals''. We build a graph $G_E(\calM)$
of size $O(h\sqrt{\log h} 2^{\sqrt{\log h}})$ on $\calM$ similar to $G_E$,
such that if both query points are in $\calM$, then the query can be answered in
$O(\log n)$ time. It remains to deal with the general case when at least one query
point is not in $\calM$. This is a major difficulty in our problem and our algorithm for
this case is another of our main contributions. Below, we use a bay as an example to
illustrate our main idea for this algorithm.

For two query points $s$ and $t$, suppose $s$ is in a bay $B$ and $t$ is outside $B$.
Since $B$ is a simple polygon, any shortest \st\ path must cross the ``gate'' $g$ of $B$,
which is a single edge shared by $B$ and $\calM$. We prove that there exists a shortest
\st\ path that must contain one of three special points $z(s)$,
$z_1(s)$, and $z_2(s)$, where $z(s)$ is in $B$ and the other
two points are on $g$ (and thus in $\calM$). For the case
when a shortest \st\ path contains either $z_1(s)$ or $z_2(s)$,
we can use the graph $G_E(\calM)$ to find such a shortest
path. For the other case, we build another graph $G_E(g)$ based on the
horizontal projections of the vertices of $G_E(\calM)$ on $g$, and use $G_E(g)$ to find
such a shortest path (along with a set of interesting observations) by a merge of
$G_E(g)$ and $G_E(\calM)$. Intuitively, $G_E(g)$ plays the role of connecting the shortest
path structure inside $B$ with those in $\calM$.

The case when a query point is in a canal can be handled similarly in spirit,
although it is more complicated because each canal has two gates.

% If there is a shortest path from
%$z(s)$ to $t$ contains a node of $G_{\calM}$, then we can use the
%above merged graph to find such a shortest path; otherwise, by using
%other properties of the point $z(s)$, we show that there exists a
%shortest path consisting of a vertical segment and a horizontal
%segment, which can be found easily by horizontal and vertical ray
%shootings. If $t$ is also in a bay, we also have three particular
%points $z(t)$, $z_1(t)$, and $z_2(t)$.
%If $s$ or $t$ is in a canal, which has two canals, the idea is similar
%and the difference is that we have to consider two gates. For example,
%there are three particular points defined for each gate.

The rest of the paper is organized as follows. In Section
\ref{sec:pre}, we introduce some notations and sketch the previous results
that will be needed by our algorithms. In Section
\ref{sec:newgraph}, we
propose our enhanced graph $G_E$ that helps reduce the query time to $O(\log n)$. In
Section \ref{sec:obstacle}, we further reduce
the preprocessing time and space by using the extended corridor
structure. In Section \ref{sec:weighted}, we
extend our techniques in Section \ref{sec:newgraph} to the weighted
rectilinear case.
%Section \ref{sec:conclude} concludes the paper.

Henceforth, unless otherwise stated, ``shortest paths'' always refer
to $L_1$ shortest paths and ``distances'' and ``lengths'' always refer to $L_1$
distances and lengths. To distinguish from graphs, the vertices/edges of
$\calP$ are always referred to as obstacle vertices/edges,
and graph vertices are referred to as ``nodes''. For
simplicity of discussion, we make a general position assumption that
no two obstacle vertices have the same $x$- or
$y$-coordinate except for the weighted rectilinear case.

\section{Preliminaries}
\label{sec:pre}

A path in the plane is {\em $x$-monotone} (resp., {\em $y$-monotone})
if its intersection with any
vertical (resp., horizontal) line is either empty or connected. A
path is {\em $xy$-monotone} if it is both {\em $x$-monotone} and {\em
$y$-monotone}. It is well-known that any $xy$-monotone path is an
$L_1$ shortest path.

%For any two points $p$ and $q$, we use $d(p,q)$ to denote the shortest
%path distance between $p$ and $q$, and we also use $d(\overline{pq})$
%to denote the $L_1$ length of the line segment $\overline{pq}$.
A point $p$ is {\em visible} to another point $q$ if the line segment
$\overline{pq}$ entirely is in the free space.
A point $p$ is {\it horizontally} {\it visible} to a line $l$ if
there is a point $q$ on $l$ such that $\overline{pq}$
is horizontal and is in the free space.
%In the paper, we will always horizontally project points on lines.
For a line $l$ and a point $p$, the point $q\in l$ is the {\em horizontal}
{\it projection} of $p$ on $l$ if $\overline{pq}$ is horizontal, and we denote it by $p_h(l)=q$.
Let $\partial\calP$ denote the boundaries
of all obstacles in $\calP$. For a point $p$ in the free space of
$\calP$, if we shoot a horizontal ray from $p$ to the left, the first point on
$\partial\calP$ hit by the ray is called the {\em leftward projection} of $p$ on
$\partial\calP$, denoted by $p^l$; similarly, we define the
{\em rightward, upward}, and {\em downward} projections of $p$ on
$\partial\calP$, denoted by $p^r$, $p^u$, and $p^d$, respectively.

We sketch the graph in \cite{ref:ChenSh00}, denoted by $G_{old}$,
for answering two-point queries, which was originally proposed in
\cite{ref:ClarksonRe87,ref:ClarksonRe88} for computing a single
shortest path. To define $G_{old}$, two types of
{\em Steiner points} are specified, as follows.
For each obstacle vertex $p$, its
four projections on $\partial\calP$, i.e., $p^l,p^r,p^u$, and $p^d$, are {\em type-1}
Steiner points. Clearly, there are $O(n)$ type-1 Steiner points in total.

The {\em type-2 Steiner points} are on {\em cut-lines}. In
order to facilitate an explanation on our new graph model in Section
\ref{sec:newgraph}, we organize the cut-lines in a binary tree structure,
called the {\em cut-line tree} and denoted by $T(\calP)$. The tree $T(\calP)$ is defined as follows.
For each node $u$ of $T(\calP)$, a set $V(u)$ of obstacle vertices
and a cut-line $l(u)$ are associated with $u$, where $l(u)$ is a
vertical line through the median of the $x$-coordinates of the
obstacle vertices in $V(u)$. For the root $r$ of $T(\calP)$,
$V(r)$ is the set of all obstacle vertices of $\calP$. For the left
(resp., right) child $v$ of $u$, $V(v)$ consists of the obstacle
vertices of $V(u)$ on the left (resp., right) of $l(u)$.
Since the number of vertices of $\calP$ is $n$, the height of
$T(\calP)$ is $O(\log n)$. For every node $u$ of
$T(\calP)$, for each vertex $p\in V(u)$, if $p$ is horizontally visible to
$l(u)$, then the point $p_h(l(u))$, i.e., the horizontal projection of $p$ on
$l(u)$, is a type-2 Steiner point. Since each obstacle vertex defines
a type-2 Steiner point on at most one cut-line at each level of $T(\calP)$,
there are $O(n\log n)$ type-2 Steiner points.

The node set of $G_{old}$ consists of all obstacle vertices of $\calP$
and all Steiner points thus defined.

The edges of $G_{old}$ are defined as follows. First, for every obstacle vertex $p$,
there is an edge $\overline{pq}$ in $G_{old}$ for each $q\in
\{p^l,p^r,p^u,p^d\}$. Second, for every obstacle edge $e$ of $\calP$, $e$ may contain
multiple type-1 Steiner points, and these Steiner points and the two endpoints
of $e$ are the nodes of $G_{old}$ on $e$;
the segment connecting each pair of consecutive graph nodes on $e$ defines an
edge in $G_{old}$. Third, for each cut-line $l$, any two
consecutive type-2 Steiner points on $l$ define an edge in $G_{old}$ if
these two points are visible to each other.
Finally, for each obstacle vertex $p$, if $p$ defines a type-2
Steiner point $p'$ on a cut-line, then $\overline{pp'}$ defines an
edge in $G_{old}$.  Clearly, $G_{old}$ has $O(n\log n)$ nodes and
$O(n\log n)$ edges.

%The graph $G$ can be constructed in $O(n\log^2 n)$ time. Since $G$
%has $O(n\log n)$ vertices and $O(n\log n)$ edges, a shortest path can
%be found in $O(n\log^2n)$ time.
%Based on $G$, another graph
%$G'$ of $O(n\log^{1/2} n)$ vertices and $O(n\log^{3/2} n)$ edges is
%constructed so that a shortest path can be found in $O(n\log^{3/2} n)$
%time \cite{ref:ClarksonRe88}. Based on $G$ and $G'$, Chen \etal
%\cite{ref:ChenSh00} built
%a graph of $O(n\log n)$ vertices and $O(n\log n)$ edges, but a
%shortest path can still be found in $O(n\log^{3/2} n))$ time. For
%building the two-point shortest path query data structure, we will
%mainly use the graph structure of $G$.
%
%Next, we briefly discuss the approach in \cite{ref:ChenSh00} for
%answering two-point shortest path queries.
%Let $G$ be the graph we built above without the two points $s$  and
%$t$. As preprocessing, we compute a shortest path tree in $G$ for each
%note of $G$, which can be done in $O(n^2\log^2 n)$ time and space
%\cite{ref:ChenSh00}.

It was shown in \cite{ref:ClarksonRe87,ref:ClarksonRe88} that
$G_{old}$ contains a shortest path between any two obstacle vertices.
%Hence, for any two points $s$ and $t$,
%if we build $G_{old}$ by treating $s$ and $t$ as
%two obstacle vertices, then $G_{old}$ contains a shortest \st\ path.
Chen \etal \cite{ref:ChenSh00} used $G_{old}$ to answer two-point
queries by ``inserting'' the query points $s$ and $t$ into $G_{old}$ so that shortest
\st\ paths are ``controlled'' by only $O(\log n)$ nodes of $G_{old}$,
called ``gateways''. The gateways of $s$ are defined as follows.
Intuitively, the gateways of $s$ are those nodes of $G_{old}$ that would be
adjacent to $s$ if we had built $G_{old}$ by treating $s$ as an obstacle vertex.
Let $V_g(s,G_{old})$ be the set of gateways of $s$, which we further
partition into two subsets $V^1_g(s,G_{old})$ and
$V^2_g(s,G_{old})$. We first define $V^1_g(s,G_{old})$, whose size is
$O(1)$. For each $q\in \{s^l,s^r,s^u,s^d\}$, let $v_1$
and $v_2$ be the two graph nodes adjacent to $q$ on the obstacle edge
containing $q$; then $v_1$ and $v_2$ are in $V^1_g(s,G_{old})$, and the
paths $\overline{sq}\cup\overline{qv_1}$ and
$\overline{sq}\cup\overline{qv_2}$ are the {\em gateway edges} from
$s$ to $v_1$ and $v_2$, respectively. Next, we define
$V^2_g(s,G_{old})$, recursively, on the cut-line tree $T(\calP)$.
Let $v$ be the root of $T(\calP)$.
Suppose $s$ is
horizontally visible to the cut-line $l(v)$. Let $q$ be the Steiner point
on $l(v)$ immediately above (resp., below) the projection point $s_h(l(v))$;
if $q$ is visible to $s_h(l(v))$, then $q$ is in $V^2_g(s,G_{old})$
and the path $\overline{ss_h(l(v))}\cup \overline{s_h(l(v))q}$ is the gateway
edge from $s$ to $q$. We also call $l(v)$ a {\em projection cut-line}
of $s$ if $s$ is horizontally visible to $l(v)$.
We proceed to the left (resp., right) child of $v$ in $T(\calP)$ if $s$
is to the left (resp., right) of $l(v)$.
% and on the right child of $v$ if $s$ is to the right of $l(v)$.
We continue in this way until reaching a leaf of
$T(\calP)$. Therefore, $V^2_g(s,G_{old})$ contains $O(\log n)$
type-2 Steiner points on $O(\log n)$ projection cut-lines.

The above defines the gateway set $V_g(s,G_{old})$, and each
gateway $q \in V_g(s,G_{old})$ is associated with a gateway edge between $s$ and $q$. Henceforth,
when we say ``a path from $s$ contains a gateway $q$'', we
implicitly mean that the path contains the corresponding gateway edge as well.
The above also defines $O(\log n)$ projection cut-lines for $s$, which will
be used later in Section \ref{sec:newgraph}.
It was shown in \cite{ref:ChenSh00} that for any obstacle vertex $v$,
there is a shortest $s$-$v$ path using $G_{old}$ that contains a gateway of $s$.

Similarly, we define the gateway set $V_g(t,G_{old})$  for $t$.
Assume that there is a shortest \st\ path containing an obstacle
vertex. Then, there must be a shortest \st\ path that contains a gateway
$v_s\in V_g(s,G_{old})$, a gateway $v_t\in V_g(t,G_{old})$, and a
shortest path from $v_s$ to $v_t$ in the graph $G_{old}$ \cite{ref:ChenSh00}. Based on
this result, a {\em gateway graph} $G_g(s,t)$ is built for the query
on $s$ and $t$, as follows. The node
set of $G_g(s,t)$ is $\{s,t\}\cup V_g(s,G_{old})\cup V_g(t,G_{old})$.
Its edge set consists of all gateway edges and the edges $(v_s,v_t)$
for each $v_s\in V_g(s,G_{old})$ and each $v_t\in V_g(t,G_{old})$,
where the weight of  $(v_s,v_t)$ is the length of a shortest path
from $v_s$ to $v_t$ in $G_{old}$. Hence, $G_g(s,t)$ has $O(\log n)$ nodes and
$O(\log^2 n)$ edges, and if we know the weights of all edges $(v_s,v_t)$,
then a shortest \st\ path in $G_g(s,t)$ can be found in
$O(\log^2 n)$ time. To obtain the weights of all edges $(v_s,v_t)$, we
compute a single source shortest path tree in $G_{old}$ from each node of $G_{old}$
in the preprocessing. Then, the weight of each such edge $(v_s,v_t)$
is obtained in $O(1)$ time. Further, suppose we find a shortest
\st\ path in $G_g(s,t)$ that contains a gateway $v_s\in
V_g(s,G_{old})$ and a gateway $v_t\in V_g(t,G_{old})$; then we can
report an actual shortest \st\ path in time linear to the
number of edges of the output path by using the shortest path tree from $v_s$
in $G_{old}$ (which has been computed in the preprocessing).

As discussed in \cite{ref:ChenSh00},
it is possible that no shortest \st\ path contains any
obstacle vertex.
%This case may happen in the following ways.
For example, consider
a projection point $s^r$ of $s$ and a projection point $t^d$ of $t$. If
$\overline{ss^r}$ intersects $\overline{tt^d}$, say at a point $q$, then
$\overline{sq}\cup \overline{qt}$ is a shortest \st\ path; otherwise, if $s^r$ and
$t^d$ are both on the same obstacle edge, then $\overline{ss^r}\cup
\overline{s^rt^d}\cup\overline{t^dt}$ is a shortest \st\ path. We call
such shortest \st\ paths {\em trivial shortest paths}. Similarly, trivial shortest
\st\ paths can also be defined by other projection points in $\{s^l,s^r,s^u,s^d\}$
and $\{t^l,t^r,t^u,t^d\}$.
It was shown in \cite{ref:ChenSh00} that if there is no trivial
shortest \st\ path, then there exists a shortest \st\ path that
contains an obstacle vertex. If we know $\{s^l,s^r,s^u,s^d\}$
and $\{t^l,t^r,t^u,t^d\}$, then we can determine whether there exists
a trivial shortest \st\ path in $O(1)$ time. For any query points $s$ and $t$,
their projection points can be computed easily in $O(\log n)$ time by using
the horizontal and vertical visibility decompositions of $\calP$, as shown in
\cite{ref:ChenSh00}.

\section{Reducing the Query Time Based on an Enhanced Graph}
\label{sec:newgraph}

In this section, we propose an ``enhanced graph'' $G_E$ that allows us to reduce
the query time to $O(\log n)$, although
$G_E$ has a larger size than $G_{old}$. We
first define $G_E$, and then show how to answer two-point queries by using $G_E$.

\subsection{The Enhanced Graph $G_E$}

%We can somehow view $G_{old}$ as a subgraph of $G_E$. Each node of
%$G_{old}$ is also a node of $G_E$. For each edge of $G_{old}$, if
%there is no such edge in $G_E$, then the edge corresponding to a path
%in $G_E$.
On the nodes of $G_E$, first, every node of $G_{old}$ is also a node
in $G_E$. In addition, $G_E$
contains the following {\em type-3} Steiner points as nodes.
To define the type-3 Steiner points, we introduce the concepts of ``levels'' and
``super-levels'' on  the cut-line tree $T(\calP)$ defined in Section
\ref{sec:pre}.
$T(\calP)$ has $O(\log n)$ levels. We
define the level numbers recursively: The root $v$
is at the first level, and its {\em level number} is denoted by $ln(v)$ $=$ $1$;
for any node $v$ of $T(\calP)$, if $u$ is a child of $v$, then $ln(u)=ln(v)+1$.
We further partition the $O(\log n)$ levels of $T(\calP)$ into $O(\sqrt{\log
n})$ {\em super-levels}: For any $i$, $1\leq i\leq O(\sqrt{\log
n})$, the $i$-th super-level contains the levels from
$(i-1)\cdot \sqrt{\log n}+1$ to $i\cdot \sqrt{\log n}$.

\begin{figure}[t]
\begin{minipage}[t]{\linewidth}
\begin{center}
\includegraphics[totalheight=2.0in]{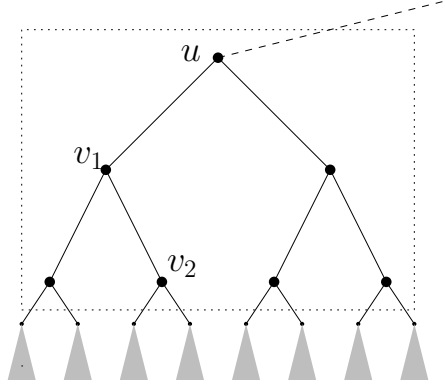}
\caption{\footnotesize Illustrating $T_u(\calP)$, i.e., the portion of the tree in the dotted box, where $\sqrt{\log n}=3$. }
\label{fig:type3}
\end{center}
\end{minipage}
\vspace*{-0.15in}
\end{figure}

Consider the $i$-th super-level.
Let $u$ be any node at the highest level (i.e., the level with the smallest level number) of this super-level. Let
$T_u(\calP)$ denote the subtree of $T(\calP)$ rooted at $u$ without including
any node outside the $i$-th super-level (e.g., see Fig.~\ref{fig:type3} and its corresponding cut-lines and level numbers in Fig.~\ref{fig:levelno}). Since $T_u(\calP)$ has
$O(\sqrt{\log n})$ levels, $T_u(\calP)$ has
$O(2^{\sqrt{\log n}})$ nodes. Recall that $u$ is associated with a
subset $V(u)$ of obstacle vertices and a vertical cut-line $l(u)$, and for any vertex $p$
in $V(u)$, if $p$ is horizontally visible to
$l(u)$, then its projection point $p_h(l(u))$ is a type-2 Steiner
point. Each point $p\in V(u)$ defines the following type-3 Steiner points. For
each node $v$ in $T_u(\calP)$, if $p$ is horizontally visible to
$l(v)$, then its projection point $p_h(l(v))$ is a type-3 Steiner
point (e.g., see Fig.~\ref{fig:levelno}; note that if $p\in V(v)$, then the Steiner point is also a type-2 Steiner point).
Hence, $p$ defines $O(2^{\sqrt{\log n}})$ type-3 Steiner
points in the $i$-th super-level of $T(\calP)$. Let $S(p)$ be the set of all
type-2 and type-3 Steiner points on the cut-lines of the subtree
$T_u(\calP)$ induced by $p$, and let $S(p)$ also contain $p$. In the order of the
points in $S(p)$ from left to right, we put an edge in
$G_E$ connecting every two consecutive points in $S(p)$ (e.g., see Fig.~\ref{fig:levelno}).
Since the total number of obstacle vertices in $V(u)$ for all nodes $u$
at the same level of $T(\calP)$ is $n$, the number of type-3
Steiner points thus defined in each super-level is $O(n2^{\sqrt{\log
n}})$, and the total number of type-3 Steiner points on all cut-lines
in $T(\calP)$ is $O(n\sqrt{\log n}2^{\sqrt{\log n}})$. The number of edges
thus added to $G_E$ is also $O(n\sqrt{\log n}2^{\sqrt{\log n}})$.

Hence, the total number of nodes in $G_E$ is $O(n\sqrt{\log
n}2^{\sqrt{\log n}})$, which is dominated by the number of type-3
Steiner points.
We have also defined above some edges in $G_E$. The rest of
edges in $G_E$ are defined similarly as in $G_{old}$. Specifically,
first, as in $G_{old}$, for every obstacle vertex $p$,
there is an edge $\overline{pq}$ in $G_E$ for each $q\in
\{p^l,p^r,p^u,p^d\}$. Second, as in $G_{old}$,
for each obstacle edge $e$, $e$ may contain
multiple type-1 Steiner points;
the segment connecting each pair of consecutive graph nodes on $e$ defines an
edge in $G_E$. Third, for each cut-line $l$, every pair of
consecutive Steiner points (type-2 or type-3)
on $l$ defines an edge in $G_E$ if
these two points are visible to each other.
%The above edges are similar to those in $G_{old}$.
%Finally, for each obstacle vertex $p$, suppose $p$ is in $V(u)$ for a
%node $u$ in the highest level of a super-level. Recall that $p$ defines a
%Steiner point (either type-1 or type-3)
%on the cut-line $l(v)$ for each $v$ in the subtree $T_u(\calP)$ as
%defined above if $p$ is horizontally visible to $l(v)$. For each two
%adjacent cut-lines to the right (resp., left) of $p$ that are horizontally visible
%to $p$, we define an edge in $G_E$ connecting the two Steiner points
%on the two cut-lines defined by $p$. Since $p$ defines
%$O(2^{\sqrt{\log n}})$ Steiner points at this super-level, the number
%of edges connecting those Steiner points are $O(2^{\sqrt{\log n}})$.
Clearly, the total number of edges in $G_E$ is $O(n\sqrt{\log
n}2^{\sqrt{\log n}})$.

\begin{figure}[t]
\begin{minipage}[t]{\linewidth}
\begin{center}
\includegraphics[totalheight=2.0in]{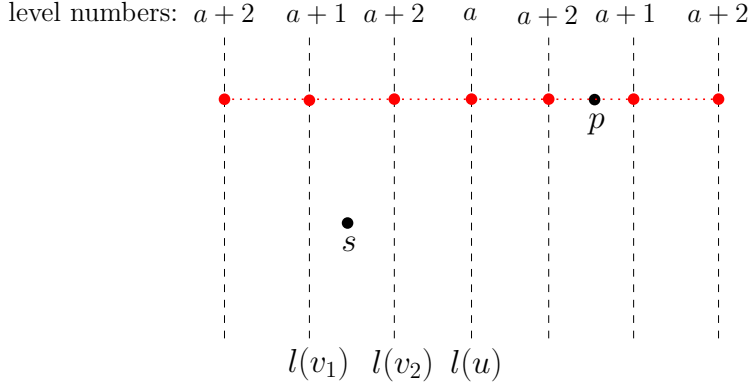}
\caption{\footnotesize Illustrating the cut-lines and level numbers of the subtree $T_u(\calP)$ in Fig.~\ref{fig:type3}, where $a$ is the level number $ln(u)$ of the node $u$. $p$ is an obstacle vertex. If $p$ is visible to all cut-lines, then the red points are type-2 and type-3 Steiner points defined by $p$ and the (red) dotted segments are the corresponding graph edges. }
\label{fig:levelno}
\end{center}
\end{minipage}
\vspace*{-0.15in}
\end{figure}

This finishes the definition of our enhanced graph $G_E$, which has
$O(n\sqrt{\log n}2^{\sqrt{\log n}})$ nodes and $O(n\sqrt{\log n}2^{\sqrt{\log n}})$
edges. The following lemma gives an algorithm for computing $G_E$.

\begin{lemma}\label{lem:10}
The enhanced graph $G_E$ can be constructed
in $O(n\log^{3/2}n2^{\sqrt{\log n}})$ time.
\end{lemma}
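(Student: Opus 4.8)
The plan is to construct $G_E$ in three stages: first build the cut-line tree $T(\calP)$ together with all type-1 and type-2 Steiner points (this is exactly the graph $G_{old}$, which is known to be computable in $O(n\log^2 n)$ time by \cite{ref:ChenSh00}, well within our target); then compute all type-3 Steiner points super-level by super-level; then add the graph edges (the ``consecutive Steiner points on a cut-line'' edges and the ``$S(p)$-chain'' edges). The dominant cost — and the only part that needs care — is the second stage, so I will concentrate the effort there. Fix the $i$-th super-level and a node $u$ at its highest level; let $T_u(\calP)$ be the bounded-height subtree rooted at $u$, which has $O(2^{\sqrt{\log n}})$ cut-lines. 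For each obstacle vertex $p \in V(u)$ I must decide, for every cut-line $l(v)$ with $v \in T_u(\calP)$, whether $p$ is horizontally visible to $l(v)$, and if so compute the projection $p_h(l(v))$.

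First I would observe that horizontal visibility from $p$ toward a vertical line $l$ is monotone in the following sense: if $l'$ lies between $p$ and $l$ horizontally, then visibility of $p$ to $l$ implies visibility of $p$ to $l'$, and moreover the projection $p_h(l)$, when it exists, is determined by $p^l$ or $p^r$ (the leftward/rightward projection of $p$ on $\partial\calP$) — namely $p$ is horizontally visible to $l(v)$ iff $l(v)$ lies, in $x$-coordinate, strictly between $p$ and the obstacle edge containing $p^l$ (or $p^r$, depending on which side of $p$ the line lies), so the set of cut-lines in $T_u(\calP)$ visible to $p$ on each side is a contiguous prefix when the cut-lines are sorted by $x$-coordinate. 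Hence, given $p^l$ and $p^r$ (all of which are computed in the $O(n\log^2 n)$ preprocessing for $G_{old}$, or directly in $O(n\log n)$ time via the horizontal visibility decomposition of $\calP$), determining the visible cut-lines for a single $p$ reduces to comparing the $x$-coordinate of $p$ and of its two horizontal obstacle hits against the (at most) $O(2^{\sqrt{\log n}})$ sorted cut-line $x$-coordinates of $T_u(\calP)$; this takes $O(2^{\sqrt{\log n}})$ time per vertex, and the projection point $p_h(l(v))$ itself has the same $y$-coordinate as $p$ and the $x$-coordinate of $l(v)$, so it is produced in $O(1)$ time each. Summing over all obstacle vertices $p$ (each $p$ lies in exactly one $V(u)$ per super-level, for $O(\sqrt{\log n})$ super-levels) gives $O(n\sqrt{\log n}\,2^{\sqrt{\log n}})$ type-3 Steiner points computed in $O(n\sqrt{\log n}\,2^{\sqrt{\log n}})$ total time.

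It remains to install the edges. The chain edges through $S(p)$ for a fixed $p$: we have $|S(p)| = O(2^{\sqrt{\log n}})$ points, each lying on a cut-line of known $x$-coordinate, so sorting $S(p)$ left to right costs $O(2^{\sqrt{\log n}}\sqrt{\log n})$ and adding the $|S(p)|-1$ connecting edges is linear; over all $p$ and all super-levels this is $O(n\log^{3/2} n\, 2^{\sqrt{\log n}})$ — this is the term that gives the claimed bound. For the per-cut-line edges joining consecutive visible Steiner points, I collect on each cut-line $l$ all its type-2 and type-3 Steiner points, sort them by $y$-coordinate, and connect consecutive ones that are mutually visible; a single cut-line $l(v)$ receives at most $O(n)$ Steiner points and the sort costs $O(n\log n)$, but summed more carefully: the total number of (Steiner point, cut-line) incidences is $O(n\sqrt{\log n}\,2^{\sqrt{\log n}})$, so the total sorting cost is $O(n\log^{3/2} n\,2^{\sqrt{\log n}})$, and the visibility test between two vertically consecutive points on the same cut-line is $O(1)$ given the vertical visibility decomposition of $\calP$ (again computable in $O(n\log n)$ time). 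Adding in the $O(n\log^2 n)$ cost of building $G_{old}$, which is dominated since $2^{\sqrt{\log n}} = \omega(\log^{c} n)$ for every constant $c$, the overall construction time is $O(n\log^{3/2} n\,2^{\sqrt{\log n}})$. The main obstacle I anticipate is the bookkeeping for the per-cut-line edges — arguing that although a single cut-line near the root of $T(\calP)$ may carry $\Theta(n)$ Steiner points, the aggregate incidence count is still only $O(n\sqrt{\log n}\,2^{\sqrt{\log n}})$ so that no super-logarithmic overhead beyond $2^{\sqrt{\log n}}$ creeps in; the visibility reductions themselves are routine once the standard visibility decompositions of $\calP$ are in hand.
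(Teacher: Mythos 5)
Your proposal follows essentially the same route as the paper: compute $p^l, p^r$ once per obstacle vertex to decide horizontal visibility to each cut-line in $T_u(\calP)$ in constant time, sum the incidence count $O(n\sqrt{\log n}\,2^{\sqrt{\log n}})$, and then spend $O(\log n)$ per incidence on the remaining work to land at $O(n\log^{3/2} n\,2^{\sqrt{\log n}})$. The one imprecise claim is that the visibility test between two consecutive Steiner points on a cut-line is $O(1)$ ``given the vertical visibility decomposition.'' The decomposition alone does not give constant-time answers: you still need a point-location query (or an upward/downward ray shoot) per Steiner point to learn $p^u,p^d$, which is $O(\log n)$ each. This does not break your bound, since $O(\log n)$ per Steiner-point incidence still totals $O(n\log^{3/2} n\,2^{\sqrt{\log n}})$, but you should charge it correctly. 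The paper handles exactly this step with a left-to-right sweep that maintains, in a balanced BST, the maximal free intervals of the current vertical line, paying $O(\log n)$ per update and per test; it also explicitly mentions the alternative of doing vertical ray shootings from all Steiner points, which is what your plan amounts to once the $O(1)$ claim is corrected. One small additional simplification in the paper: instead of sorting each $S(p)$ by $x$-coordinate, it obtains the left-to-right order of all cut-lines in $T_u(\calP)$ once by an in-order traversal and then processes each $p$ in linear time, avoiding the extra $\sqrt{\log n}$ factor you pay per vertex there (though that factor is absorbed anyway).
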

\begin{proof}
First of all, all type-1 Steiner points are
computed easily in $O(n\log n)$ time, e.g., by using the vertical and horizontal visibility decompositions of $\calP$. The edges of $G_E$ connecting the obstacle vertices and their
corresponding type-1 Steiner points can also be computed. For each obstacle
edge $e$, we sort all graph nodes on $e$ and then
compute the edges of $G_E$ connecting the consecutive nodes on $e$. Since
there are $O(n)$ type-1 Steiner points, computing these
edges takes $O(n\log n)$ time.

Next, we compute both the type-2 and type-3 Steiner points and their adjacent
edges. For this, we need to use the two projection
points $p^l$ and $p^r$ for each obstacle vertex $p$ of $\calP$, which have been computed as type-1 Steiner points.
Consider an obstacle vertex $p$ in $V(u)$ for a
node $u$ at the highest level of a super-level. For each node $v$ in
$T_u(\calP)$, we need to determine whether $p$ is horizontally
visible to $l(v)$, which can be done in $O(1)$ time since $p^l$
and $p^r$ are already known. We also need to have a sorted order of all
cut-lines in $T_u(\calP)$ from left to right, and this ordered list
can be obtained by an in-order traversal of $T_u(\calP)$ in linear
time. Therefore, the edges of $G_E$ connecting the Steiner points defined by $p$ on
consecutive cut-lines in this super-level can be computed in time linear to the number of nodes in $T_u(\calP)$.
Since there are $O(n\sqrt{\log n}2^{\sqrt{\log n}})$ type-2 and type-3
Steiner points, computing all such edges takes $O(n\sqrt{\log
n}2^{\sqrt{\log n}})$ time.

It remains to compute the graph edges on all cut-lines connecting consecutive Steiner
points (if they are visible to each other).
This step is done in $O(n\log^{3/2}n2^{\sqrt{\log n}})$ time by a
sweeping algorithm, as follows. For each cut-line $l$, we sort
the Steiner points on $l$ by their $y$-coordinates, and
determine whether every two consecutive Steiner points on $l$ are visible to
each other. For this, we sweep a vertical line $L$ from left to right.
During the sweeping, we use a balanced binary search tree $T$ to maintain
the maximal intervals of $L$ that are in the free space of $\calP$ (there are $O(n)$ such intervals).
At each obstacle vertex, we
update $T$ in $O(\log n)$ time. At each (vertical) cut-line $l$,
for every two consecutive Steiner points, we determine whether they
are visible to each other in $O(\log n)$ time by checking whether they
are in the same maximal interval maintained by $T$. Since
there are $O(n\sqrt{\log n}2^{\sqrt{\log n}})$ pairs of consecutive
Steiner points on all cut-lines, computing all edges of $G_E$ on the cut-lines takes
totally $O(n\log^{3/2}n2^{\sqrt{\log n}})$ time. Another
approach for computing these edges in $O(n\log^{3/2}n2^{\sqrt{\log
n}})$ time is to perform vertical ray-shootings from all Steiner
points (we omit the details).

In summary, the enhanced graph $G_E$ can be computed in
$O(n\log^{3/2}n2^{\sqrt{\log n}})$ time.
\end{proof}

\subsection{Reducing the Query Time}

We use the enhanced graph $G_E$ to reduce the query time to $O(\log n)$.
Consider two query points $s$ and $t$. One of our key ideas is: We
define a new set of gateways for $s$, denoted by $V_g(s,G_E)$, which
contains $O(\sqrt{\log n})$ nodes of $G_E$, such that
for any obstacle vertex $p$ of $\calP$, there
exists a shortest path from $s$ to $p$ through a gateway of
$V_g(s,G_E)$. The set $V_g(s,G_E)$ can be divided into two subsets $V^1_g(s,G_E)$ and
$V^2_g(s,G_E)$, where $V^1_g(s,G_E)$ (of size $O(1)$) is exactly the same as
$V^1_g(s,G_{old})$ defined on $G_{old}$ in Section \ref{sec:pre}. Below,
we define the subset $V^2_g(s,G_E)$.

Recall that $s$ has $O(\log n)$ projection cut-lines, as defined in
Section \ref{sec:pre}. By definition, $s$ is horizontally visible to all its
projection cut-lines. Since $G_E$ has more Steiner points than $G_{odd}$, the intuition is that we do not have to include gateways in each projection cut-line of $s$. More specifically, we only need to include gateways in two projection cut-lines in each super-level (one to the left of $s$ and the other to the right of $s$). The details are given below.

 We define the {\em relevant projection
cut-lines} of $s$, as follows. Let $S$ be the set of projection cut-lines of
$s$ to the right of $s$.  Consider a cut-line $l\in S$
and suppose $l$ is associated with a node $u$ in the $i$-th super-level
of the cut-line tree $T(\calP)$ for some $i$. Then $l$ is a {\em relevant
projection cut-line} of $s$ if $ln(u)> ln(v)$ (i.e., their level
numbers) for every node $v$ with $v\neq u$ in the
$i$-th super-level of $T(\calP)$
such that the cut-line $l(v)$ of $v$ is also in $S$. In other words,
$l(u)$ is a relevant projection cut-line of $s$ if $u$ has the largest distance
in $T(\calP)$ from the root
among all nodes $v$ in the $i$-th super-level of $T(\calP)$ whose
cut-lines $l(v)$ are in $S$. For example, in Fig.~\ref{fig:type3} and Fig.~\ref{fig:levelno}, suppose $s$ is between the cut-lines $l(v_1)$ and $l(v_2)$ and both $l(u)$ and $l(v_2)$ are horizontally visible to $s$; then among the cut-lines of all nodes in $T_u(\calP)$, only $l(v_2)$ and $l(u)$ are in $S$, but only $l(v_2)$ is the relevant projection cut-line of $s$.  The relevant projection cut-lines
of $s$ to the left of $s$ are defined similarly.  Since $s$ has $O(\log
n)$ projection cut-lines and any two of them are at different levels
of $T(\calP)$, the number of relevant projection cut-lines of $s$ is
$O(\sqrt{\log n})$, i.e., at most two from each super-level of
$T(\calP)$ (one to the left of $s$ and the other to the right of $s$). For each relevant projection cut-line $l$ of $s$, the Steiner point $p$
(if any) immediately above (resp., below) the projection point $s_h(l)$ of $s$ on $l$ is in
$V^2_g(s,G_E)$ if $p$ is visible to $s_h(l)$. Thus,
$|V^2_g(s,G_E)|=O(\sqrt{\log n})$.

$V_g(s,G_E)$ thus defined is of size
$O(\sqrt{\log n})$. We also define the gateway edge for
each gateway of $V_g(s,G_E)$ and $s$ in the same way as in Section \ref{sec:pre}.
Below, when we say a shortest path from $s$ containing a gateway, we
mean the path containing the corresponding gateway edge as well.

\begin{lemma}\label{lem:20}
For any obstacle vertex $p$ of $\calP$, there exists a shortest path
from $s$ to $p$ using $G_E$ that contains a gateway of $s$ in $V_g(s,G_E)$.
\end{lemma}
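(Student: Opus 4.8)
The plan is to reduce the claim for $G_E$ to the analogous known fact for $G_{old}$, namely that for any obstacle vertex $p$ there is a shortest $s$-to-$p$ path using $G_{old}$ that passes through a gateway of $V_g(s,G_{old})$. First I would fix such a shortest path $\pi$ in $G_{old}$ and let $q\in V_g(s,G_{old})$ be the gateway it contains. If $q\in V^1_g(s,G_{old})=V^1_g(s,G_E)$, we are immediately done, since the same initial segment $\overline{sq}$ and the subpath of $\pi$ from $q$ to $p$ are present in $G_E$ (every node and edge of $G_{old}$ is a node and edge of $G_E$). So the substantive case is $q\in V^2_g(s,G_{old})$: here $q$ is the Steiner point immediately above (or below) the projection point $s_h(l)$ on some projection cut-line $l=l(v)$ of $s$, and the gateway edge is $\overline{ss_h(l)}\cup\overline{s_h(l)q}$.

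The crux is then a local rerouting argument near the cut-line $l$. Let the $i$-th super-level be the one containing $v$, and let $u$ be the node at the highest level of that super-level whose subtree $T_u(\calP)$ contains $v$; by the recursive construction of $G_{old}$, since $l(v)$ is a projection cut-line of $s$ and $s$ descends through $u$ before reaching $v$, the cut-line $l(u)$ is also a projection cut-line of $s$, and among the projection cut-lines of $s$ in this super-level on the same side of $s$, there is (by definition) a unique one $l^*=l(u^*)$ that is \emph{relevant}, i.e. deepest in $T(\calP)$; note $l^*$ is horizontally visible to $s$ and $s_h(l^*)$ exists. I would argue that $s$ is horizontally visible to $l$ \emph{and} to $l^*$, and that all of $s$, $s_h(l)$, $s_h(l^*)$ lie on the same horizontal line in a common maximal free interval, so the horizontal segment from $s$ through $s_h(l^*)$ to $s_h(l)$ is in free space. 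The key structural point is that $u^*$ is a node of $T_u(\calP)$ (same super-level, and it must be a descendant of $u$ since $s$'s root-to-leaf path passes through $u$ then through $u^*$), hence by the definition of type-3 Steiner points, the point $p_h(l(u^*))$ is a type-3 Steiner point for every vertex $p\in V(u)$ that is horizontally visible to $l(u^*)$; in particular the obstacle vertices defining the type-2 Steiner points on $l$ immediately above and below $s_h(l)$ also have type-3 Steiner points on $l^*$ at the same heights as on $l$ — more precisely, the sets $S(p)$ that furnish the horizontal ``rung'' edges of $G_E$ connect the Steiner point on $l^*$ directly to the Steiner point on $l$ at the appropriate place. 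This lets me replace the gateway edge $\overline{ss_h(l)}\cup\overline{s_h(l)q}$ by: the gateway edge of $s$ in $V^2_g(s,G_E)$ associated with $l^*$ (ending at the Steiner point $q^*$ of $G_E$ immediately above/below $s_h(l^*)$), followed by a monotone sub-path in $G_E$ from $q^*$ to $q$ that travels horizontally along the rung edges from $l^*$ toward $l$ and, if necessary, a short vertical segment on $l$, staying within free space and not increasing the $L_1$ length because the whole detour is $xy$-monotone.

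Having rerouted the prefix of $\pi$ so that it starts with a gateway edge of $V_g(s,G_E)$ and reaches the node $q$ (a node of $G_E$ as well), I would then concatenate with the suffix of $\pi$ from $q$ to $p$, which already lies in $G_E$; since the rerouted prefix has length no larger than the original prefix $\overline{ss_h(l)}\cup\overline{s_h(l)q}$ (both being $xy$-monotone paths between $s$ and $q$ lying in a thin rectangular region, hence of equal $L_1$ length), the total path is still a shortest $s$-to-$p$ path, and it uses $G_E$ and contains a gateway of $V_g(s,G_E)$.

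I expect the main obstacle to be the visibility/monotonicity bookkeeping in the rerouting: verifying that the rungs of $G_E$ between the relevant cut-line $l^*$ and the actual projection cut-line $l$ (together with the requisite vertical hops on $l^*$ and $l$) genuinely exist as edges and lie in free space — this hinges on the fact that $s$'s horizontal sightline to $l$ sweeps through a region all of whose bounding obstacle vertices ``live'' in $V(u)$ and hence generate type-3 Steiner points on $l^*$ — and that the detour never forces the path to backtrack, so that $L_1$ length is preserved. Once that local geometric claim is nailed down, the rest is a routine prefix-replacement argument.
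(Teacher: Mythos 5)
Your high-level plan matches the paper's: reduce to the known fact for $G_{old}$, dispense immediately with the $V^1_g$ case, and for a gateway $q$ on a non-relevant projection cut-line $l$ reroute the prefix through the relevant projection cut-line $l^*$ of the same super-level, using the horizontal ``rung'' edges of $G_E$. You also correctly identify that the crux is a local rerouting claim near $l^*$. However, two details in the rerouting are off.

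First, your stated detour from the new gateway $q^*$ on $l^*$ ``travels horizontally along the rung edges from $l^*$ toward $l$ and, if necessary, a short vertical segment on $l$.'' This order is wrong and, as stated, can fail: the rung at the height of $q^*$ is generated by whichever obstacle vertex defines $q^*$, and that vertex need not be horizontally visible all the way to $l$, so the horizontal hop from $q^*$ to $l$ may not exist in $G_E$. The correct and existing path (as in the paper) goes \emph{vertical first, then horizontal}: from $q^*$ climb $l^*$ to the Steiner point $q_{obs,h}(l^*)$ that the \emph{same} obstacle vertex $q_{obs}$ defining $q$ on $l$ induces on $l^*$, then take the rung edges from $q_{obs,h}(l^*)$ over to $q$. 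These rungs exist precisely because they are the $S(q_{obs})$-edges of $G_E$; and $q^*$ lies at or below $q_{obs,h}(l^*)$ because $q^*$ is the lowest Steiner point above $s_h(l^*)$ while $q_{obs,h}(l^*)$ is another Steiner point above it, so the climb is upward and the whole detour is $xy$-monotone.

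Second, you rely implicitly on the ordering $l^*$ between $s$ and $l$ (so that the detour is monotone and so that $q_{obs}$ is horizontally visible to $l^*$ once it is visible to $l$ and $a$ is horizontally visible to the vertical line through $s$), but you never establish it. This is exactly where the paper invokes the monotonicity of level numbers of projection cut-lines of $s$ on one side (Lemma 3.4 of \cite{ref:ChenSh00}): since $l^*$ is deeper than $l$ in the same super-level, it must lie between $s$ and $l$. You should cite or re-derive this, and also invoke the fact from \cite{ref:ChenSh00} that when no via point is of type $V^1_g$, the via point $a$ is horizontally visible to the vertical line through $s$; together these are what make the visibility of $q_{obs}$ to $l^*$ go through.
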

\begin{proof}
Recall that $V_g(s,G_{old})$ is the gateway set of $s$ defined on
$G_{old}$ in Section \ref{sec:pre}, and by \cite{ref:ChenSh00}, there exists a
shortest path $\pi(s,p)$ from $s$ to $p$ using $G_{old}$
that contains a point $a\in V_g(s,G_{old})$.

By the definition of $G_E$, if any edge $e$ of $G_{old}$ connecting two nodes $u$ and $v$ is not an edge of $G_E$, then $e$
can be viewed as being ``divided" into many edges in $G_E$ such that the concatenation
of these edges is a path from $u$ to $v$ in $G_E$ with the same length
as $e$. Hence, $\pi(s,p)$ is still a shortest path
along $G_E$.
For any point $a \in V_g(s,G_{old})$ that is on a shortest $s$-$p$ path, we call
it a {\it via point}.
If any via point $a$ is in $V^1_g(s,G_{old})$, then $a$ is in $V_g(s,G_E)$
since $V^1_g(s,G_E)=V^1_g(s,G_{old})$, and we are done.
Otherwise, all via points must be in $V^2_g(s,G_{old})$. If any such via point $a\in V^2_g(s,G_{old})$ is also in
$V^2_g(s,G_E)$, then we are done as well. It remains to prove for the
case that for every via point $a$, $a\in V^2_g(s,G_{old})$ and $a\not\in
V^2_g(s,G_E)$ hold. Recall that every node of $G_{old}$, including each via point
$a$, is also a node of $G_E$.
Below, we find an $xy$-monotone path from $s$ to such a via point $a$ along $G_E$ that
contains a gateway $b \in V^2_g(s,G_E)$. Since any $xy$-monotone path
is a shortest path, this gives a shortest $s$-$p$ path (through $a$)
containing a gateway $b$ of $s$ in $V_g(s,G_E)$, thus proving the lemma.

Without loss of generality, we assume that $a$ is to the right of $s$ and
above $s$ (i.e., $a$ is to the northeast of $s$, see
Fig.~\ref{fig:gatewayproof}). Suppose $a$ is on the cut-line $l(v)$ of a node $v$ in
the $i$-th super-level of $T(\calP)$.
%Note that $l(v)$ is not a relevant cut-line of $s$ since otherwise $a$ would be in
%$V^2_g(s,G_E)$.
If $l(v)$ is a relevant cut-line of $s$, then there must be a gateway $b$ of $s$ in $V^2_g(s,G_E)$
lying in the vertical segment $\overline{s_h(l(v))a}$ on $l(v)$ (possibly $b=a$), and thus
we are done.  Otherwise, $l(v)$ is not a relevant cut-line of $s$, and
there exists a relevant cut-line $l(v')$ of $s$
to the right of $s$ such that $v'$ is in the $i$-th super-level of $T(\calP)$
and $ln(v')>ln(v)$.
Next, we show that the sought gateway $b$ lies on $l(v')$.

It was shown in \cite{ref:ChenSh00} (Lemma 3.4) that the level numbers of the
projection cut-lines of $s$ to the right of $s$,
in the left-to-right order, are decreasing.
This observation can also be seen easily by considering the projection cut-lines of $T(\calP)$ in a top-down manner. Hence,
$l(v')$ is to the left of $l(v)$ (see Fig.~\ref{fig:gatewayproof}). Let $q$ be the obstacle vertex that
defines the Steiner point $a$ on $l(v)$. By our definition of
Steiner points, $q$ must be in $V(u)$ for the
node $u$ that is the highest ancestor of $v$ (and $v'$) in the $i$-th
super-level. Therefore, if $q$ is horizontally visible to
$l(v')$, then $q$ also defines a Steiner point on $l(v')$. We now show
that $q$ is horizontally visible to $l(v')$, and for this,
it suffices to prove
that $a$ is horizontally visible to $l(v')$ since $q$ is horizontally visible to $a$.
Because $a \in V^2_g(s,G_{old})$ and no via point is in $V^1_g(s,G_{old})$,
it was shown in \cite{ref:ChenSh00} that
$a$ must be horizontally visible to the vertical line through $s$. Since
$l(v')$ is between $s$ and $a\in l(v)$, $a$ is also horizontally
visible to $l(v')$.

\begin{figure}[t]
\begin{minipage}[t]{\linewidth}
\begin{center}
\includegraphics[totalheight=1.2in]{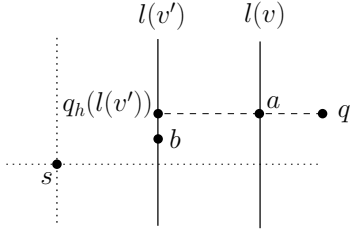}
\caption{\footnotesize Illustrating the proof of Lemma \ref{lem:20}:
$q$ is the obstacle vertex that defines the Steiner point $a$; $l(v')$
is between $s$ and $l(v)$.}
\label{fig:gatewayproof}
\end{center}
\end{minipage}
\vspace*{-0.15in}
\end{figure}

Thus, $q$ defines a Steiner point on $l(v')$, i.e., the point
$q_h(l(v'))$ (see Fig.~\ref{fig:gatewayproof}).
By the definition of $V^2_g(s,G_E)$, the
lowest Steiner point $b$ on $l(v')$ above $s$ must be a gateway in
$V^2_g(s,G_E)$. Note that $b$ may or may not be $q_h(l(v'))$, but $b$
cannot be higher than $q_h(l(v'))$. Thus, the concatenation of the
gateway edge from $s$ to $b$, $\overline{bq_h(l(v'))}$, and
$\overline{q_h(l(v'))a}$, which is an $xy$-monotone path from $s$ to
$a$ using $G_E$, contains the gateway $b$ of $V^2_g(s,G_E)$.
The lemma thus follows.
\end{proof}

Similarly, we define the gateway set $V_g(t,G_E)$ for $t$ in $G_E$.
The similar result for $t$ as Lemma \ref{lem:20} for $s$ also holds. Thus, we have
the following corollary.

\begin{corollary}\label{cor:10}
If there exists a shortest \st\ path through an obstacle vertex of
$\calP$, then there exists a shortest \st\ path through a gateway of
$s$ in $V_g(s,G_E)$ and a gateway of $t$ in $V_g(t,G_E)$.
\end{corollary}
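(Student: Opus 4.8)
The plan is to reduce Corollary \ref{cor:10} to Lemma \ref{lem:20} (and its analogue for $t$) together with the corresponding result already established in \cite{ref:ChenSh00} for the old graph $G_{old}$. First I would recall the starting point: by \cite{ref:ChenSh00}, if some shortest \st\ path passes through an obstacle vertex, then there is a shortest \st\ path $\pi$ that in $G_{old}$ decomposes as a gateway edge from $s$ to some $v_s\in V_g(s,G_{old})$, a $G_{old}$-path from $v_s$ to some $v_t\in V_g(t,G_{old})$, and a gateway edge from $v_t$ to $t$. Since $G_E$ contains every node of $G_{old}$ and every edge of $G_{old}$ is realized as a same-length concatenation of $G_E$-edges (the observation used in the proof of Lemma \ref{lem:20}), this $\pi$ is still a shortest \st\ path along $G_E$, and in particular it passes through the obstacle vertex $p$ guaranteed by the hypothesis.

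Next I would apply Lemma \ref{lem:20} to the subpath of $\pi$ from $s$ to $p$: there is a shortest $s$-to-$p$ path $\pi_1$ along $G_E$ containing some gateway $b_s\in V_g(s,G_E)$. Symmetrically, applying the $t$-analogue of Lemma \ref{lem:20} to a shortest $p$-to-$t$ subpath yields a shortest $p$-to-$t$ path $\pi_2$ along $G_E$ containing some gateway $b_t\in V_g(t,G_E)$. Concatenating $\pi_1$ and $\pi_2$ at $p$ gives a path from $s$ to $t$ along $G_E$ whose length is $d(s,p)+d(p,t)$; since $p$ lies on a shortest \st\ path, this equals $d(s,t)$, so the concatenation is itself a shortest \st\ path, and it contains both $b_s\in V_g(s,G_E)$ and $b_t\in V_g(t,G_E)$, as required.

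The only real point needing care — and the step I expect to be the main obstacle — is making sure Lemma \ref{lem:20} is being invoked with $p$ playing the role of the ``target obstacle vertex,'' which is legitimate precisely because the hypothesis of the corollary supplies such a $p$ on some shortest \st\ path; without an obstacle vertex on a shortest path there is nothing to anchor the gateways to (this is exactly the trivial-shortest-path exception discussed in Section \ref{sec:pre}, which the hypothesis rules out). A secondary subtlety is that after concatenating $\pi_1$ and $\pi_2$ one should note the result need not be simple, but since its length already equals $d(s,t)$ it is a shortest path regardless, and one may extract a simple shortest \st\ path from it if desired; in any case it still contains the gateway edges from $s$ to $b_s$ and from $t$ to $b_t$, which is all the statement asserts. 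I would also remark, for completeness, that the $t$-analogue of Lemma \ref{lem:20} holds by a mirror-image argument (reflecting the roles of left/right and up/down), so no new proof is needed there.
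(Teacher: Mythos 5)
Your proposal is correct and takes essentially the same route the paper intends: anchor on the obstacle vertex $p$ guaranteed by the hypothesis, apply Lemma \ref{lem:20} to the $s$-to-$p$ side and its $t$-analogue to the $p$-to-$t$ side, and concatenate, noting that $d(s,p)+d(p,t)=d(s,t)$ so the concatenation is still shortest. The opening detour through the $G_{old}$ gateway decomposition from \cite{ref:ChenSh00} is superfluous --- the hypothesis already hands you an obstacle vertex $p$ on a shortest \st\ path, which is all Lemma \ref{lem:20} needs --- but it does no harm.
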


Next, we give an algorithm for computing the two gateway sets $V_g(s,G_E)$ and
$V_g(t,G_E)$.
% in Lemma \ref{lem:30}.

\begin{lemma}\label{lem:30}
With a preprocessing of $O(n\log^{3/2} n2^{\sqrt{\log n}})$ time
and $O(n\sqrt{\log n}2^{\sqrt{\log n}})$ space,
we can compute the gateway sets $V_g(s,G_E)$ and
$V_g(t,G_E)$  in $O(\log n)$ time for any query points $s$ and $t$.
\end{lemma}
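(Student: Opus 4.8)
The plan is to build, in the preprocessing phase, the data structures that support locating a query point among the cut-lines and performing the relevant visibility tests, and then to show that a single root-to-leaf descent of the cut-line tree $T(\calP)$ suffices to extract all $O(\sqrt{\log n})$ gateways. First I would observe that the set $V^1_g(s,G_E)=V^1_g(s,G_{old})$ depends only on the four projections $\{s^l,s^r,s^u,s^d\}$ and on the obstacle edges containing them; using the horizontal and vertical visibility decompositions of $\calP$ (built in $O(n\log n)$ time and $O(n)$ space), these four projections and the two graph nodes adjacent to each are found by point location in $O(\log n)$ time, so $V^1_g(s,G_E)$ and its gateway edges are obtained in $O(\log n)$ time. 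The same holds for $t$.

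Next I would handle $V^2_g(s,G_E)$. The key structural fact is that the $O(\log n)$ projection cut-lines of $s$ are exactly the cut-lines $l(v)$ encountered on the root-to-leaf path of $T(\calP)$ determined by the $x$-coordinate of $s$, restricted to those $v$ to which $s$ is horizontally visible; and by Lemma 3.4 of \cite{ref:ChenSh00} their level numbers are decreasing to the left-to-right order on each side of $s$. Hence within each super-level the \emph{relevant} projection cut-line to the right of $s$ is simply the \emph{deepest} (largest level number) node $v$ in that super-level on the descent path with $l(v)$ to the right of $s$ and $s$ horizontally visible to $l(v)$, and symmetrically to the left. So I would walk down $T(\calP)$ from the root, at each node $v$ testing in $O(1)$ time whether $s^r$ (resp.\ $s^l$) reaches or crosses $l(v)$ — i.e.\ whether $s$ is horizontally visible to $l(v)$ on the appropriate side — using the already-computed projections $s^l, s^r$; whenever we enter a new super-level we reset the ``current candidate'' for each side, and whenever the visibility test succeeds we update it. When the descent finishes (or when we leave a super-level) the recorded candidate for each side is exactly the relevant projection cut-line for that super-level. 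Since the descent has length $O(\log n)$ and each step costs $O(1)$, this produces the $O(\sqrt{\log n})$ relevant projection cut-lines in $O(\log n)$ total time.

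It then remains, for each relevant projection cut-line $l$, to find the Steiner point immediately above (resp.\ below) $s_h(l)$ on $l$ and test whether it is visible to $s_h(l)$. For this I would, in the preprocessing, store for each cut-line the sorted list (by $y$-coordinate) of its type-2 and type-3 Steiner points together with, for each consecutive pair, the precomputed visibility bit — all of which is available as a byproduct of the construction of $G_E$ in Lemma \ref{lem:10}, and the total size is $O(n\sqrt{\log n}2^{\sqrt{\log n}})$. A binary search on such a list costs $O(\log n)$, and doing this for each of the $O(\sqrt{\log n})$ relevant cut-lines would give $O(\log^{3/2} n)$, which is too slow; so instead I would exploit that the relevant cut-lines are visited in order down a single root-to-leaf path and preprocess a fractional-cascading structure across the Steiner-point lists along this path (equivalently, along the $O(\log n)$ cut-lines of any root-to-leaf path), so that after one initial $O(\log n)$-time search each subsequent search costs $O(1)$. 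This brings the total for locating all of $V^2_g(s,G_E)$ down to $O(\log n)$. The visibility test between $s_h(l)$ and the located Steiner point is then $O(1)$ using the stored bits (or a direct ray-shoot), and the gateway edge for each gateway is recorded. The same procedure applied to $t$ gives $V_g(t,G_E)$ in $O(\log n)$ time, completing the lemma.

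The main obstacle is the last step: a naive per-cut-line binary search costs a logarithmic factor too much, so the argument hinges on recognizing that all relevant cut-lines lie on one descent path of $T(\calP)$ and on setting up a fractional-cascading (or a single combined search structure over the $O(\log n)$ lists on each root-to-leaf path) in preprocessing so that the $O(\sqrt{\log n})$ searches collectively cost only $O(\log n)$. I expect the bookkeeping of which super-level a node belongs to, and of resetting candidates at super-level boundaries during the descent, to be routine; the delicate point is verifying that the cascading is consistent with Lemma 3.4 of \cite{ref:ChenSh00} so that the point located on the deeper cut-line is always the one needed and no gateway is missed.
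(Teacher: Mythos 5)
Your proof is structurally the same as the paper's: handle $V^1_g$ with the visibility decompositions, find the relevant projection cut-lines by descending a root-to-leaf path of $T(\calP)$, and use fractional cascading across the Steiner-point lists along that path to avoid a per-cut-line binary search. However, there is a genuine gap in how you perform the visibility test once the neighboring Steiner point $p$ on a relevant cut-line $l$ has been located.

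You propose storing, for each pair of \emph{consecutive} Steiner points on a cut-line, a precomputed visibility bit, and then checking visibility of $p$ to $s_h(l)$ in $O(1)$ via these bits. This does not work. Suppose $q$ (below) and $p$ (above) are consecutive Steiner points on $l$ with $s_h(l)$ strictly between them. If the bit says $q,p$ are mutually visible, the segment $\overline{qp}$ is in free space, so indeed $p$ is visible to $s_h(l)$; fine. But if the bit says they are \emph{not} mutually visible, there is at least one obstacle boundary somewhere on $\overline{qp}$, and the bit gives no information about whether that obstruction lies above or below $s_h(l)$. In particular $p$ may or may not be visible to $s_h(l)$, and the gateway may or may not exist — the stored pair bit cannot decide this. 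Your fallback, a direct ray-shoot from $s_h(l)$, does decide it, but it costs $O(\log n)$ per cut-line and there are $\Theta(\sqrt{\log n})$ relevant cut-lines, giving $O(\log^{3/2}n)$ query time, which exceeds the $O(\log n)$ bound claimed in the lemma.

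The paper's fix, which you are missing, is to store with every Steiner point $p$ on every cut-line its upward and downward projections $p^u$ and $p^d$ on $\partial\calP$ (each computable in $O(\log n)$ time via the vertical visibility decomposition, so $O(n\log^{3/2}n\,2^{\sqrt{\log n}})$ total, within the preprocessing budget, and $O(n\sqrt{\log n}\,2^{\sqrt{\log n}})$ space). Then once $p$ is located via fractional cascading, visibility of $p$ to $s_h(l)$ is decided in $O(1)$ by comparing the $y$-coordinate of $s_h(l)$ with that of $p^d$ (or $p^u$, for the point below $s_h(l)$). This keeps the whole query at $O(\log n)$.
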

\begin{proof}
We only discuss the case for computing $V_g(s,G_E)$ since $V_g(t,G_E)$ can be
computed similarly.

To compute $V^1_g(s,G_E)$, it suffices to determine the four
projection points $\{s^l,s^r,s^u,s^d\}$ of $s$ on $\partial\calP$, which can be computed in
$O(\log n)$ time by using the horizontal and vertical
visibility decompositions of $\calP$. These two visibility decompositions can be built in
$O(n\log n)$ time by standard sweeping algorithms.
After that, we also need to build a point location
data structure \cite{ref:EdelsbrunnerOp86,ref:KirkpatrickOp83} on each of the
two decompositions in additional $O(n)$ time.

To compute $V^2_g(s,G_E)$, it might be possible to modify the approach in
\cite{ref:ChenSh00}. However, to explain the approach in
\cite{ref:ChenSh00}, we may have to review a number of observations given
in \cite{ref:ChenSh00}. To avoid a tedious discussion, we propose
the following algorithm that is simple.

We first obtain the set $S$ of all relevant projection cut-lines of $s$. This can be
done in $O(\log n)$ time by following the cut-line tree $T(\calP)$
from the root and using $s^l$ and $s^r$ to determine the horizontal visibility of $s$.
% Let $S$ be the set of all relevant projection cut-lines of $s$.
Note that the cut-lines of $S$ are at some
nodes on a path from the root to a leaf. To obtain
$V^2_g(s,G_E)$, for each cut-line $l\in S$, we need to: (1) find the
Steiner point $p$ on $l$ immediately above (resp., below) $s_h(l)$, and (2)
determine whether $p$ is visible to $s_h(l)$.

Consider a cut-line $l\in S$. Let $v_1(l)$ and $v_2(l)$ be the two
gateways of $V^2_g(s,G_E)$ on $l$ (if any) such that $v_1(l)$ is above
$v_2(l)$. That is, $v_1(l)$ (resp., $v_2(l)$) is the Steiner
point on $l$ immediately above (resp., below) $s_h(l)$ and visible to $s_h(l)$.
If we maintain a sorted list of all Steiner
points on $l$, then $v_1(l)$ and $v_2(l)$ can be found by binary
search  on the sorted list. However, there are two issues with this
approach. First, if we do binary search on each cut-line of $S$, since
$|S|=O(\sqrt{\log n})$, it takes $O(\log^{3/2}n)$ time
%to do binary search
on all cut-lines of $S$. Second, even if we find $v_1(l)$ and
$v_2(l)$, we still need to check whether $s_h(l)$ is visible to them. To
resolve these two issues, we take the following approach.

%Consider any cut-line $l$ of $T(\calP)$.
%For each intersection $q$
%between the cut-line $l$ and $\partial\calP$, if $q$ is visible to at least one
%Steiner point on $l$, we call $q$ an {\em interesting intersection
%point} on $l$. If we sort all Steiner points and interesting intersection points on $l$, then we can find such a point $q$ on $l$ immediately above $s_h(l)$. If $q$ is a Steiner point,
%then $v_1(l)$ is $q$; else, $v_1(l)$ does not exist. The gateway $v_2(l)$ is obtained similarly.

For every Steiner point $p$ on the cut-line $l$, suppose we associate with $p$ its upward and
downward projection points $p_u$ and $p_d$ on $\partial\calP$. Then once we
find the Steiner point $q$ on $l$ immediately above (resp., below) $s_h(l)$, we can
determine easily whether $q$ is visible to $s_h(l)$ using $q_u$ and $q_d$; if
$q$ is visible to $s_h(l)$, then $v_1(l)=q$ (resp., $v_2(l)=q$), or else $v_1(l)$
(resp., $v_2(l)$) does not exist.
%
%To implement the above idea, for each cut-line $l$ of $T(\calP)$, we compute all interesting intersection points on $l$. Note that the set of all interesting intersections points on $l$ consists of the projection points $p^u$ and $p^d$ for all Steiner points $p$ on $l$. Therefore, if we compute $p^l$ and $p^r$ for each Steiner point $p$ on $l$, then we can find the interesting intersection points on $l$.
For any Steiner point $p$ on $l$, $p^l$ and $p^r$ can be found in $O(\log n)$ time by using the vertical visibility decomposition of $\calP$.
Since there are $O(n\sqrt{\log n}2^{\sqrt{\log n}})$ Steiner points $p$ on all cut-lines of $T(\calP)$,
% the total number of interesting intersection points on all cut-lines is $O(n\sqrt{\log n}2^{\sqrt{\log n}})$ and
their projection points $p^u$ and $p^d$
can be computed in totally $O(n\log^{3/2} n2^{\sqrt{\log n}})$ time.

Next, for each cut-line $l$, we sort all Steiner points on $l$. With this, one can compute all gateways of
$V_g^2(s,G_E)$ in $O(\log^2 n)$ time by doing binary search on each relevant projection cut-line of $s$.
To reduce the query time to $O(\log n)$, we make use of the fact that all relevant projection cut-lines of $s$ are at the nodes on a path of $T(\calP)$ from the root to a leaf.
We build a fractional cascading structure \cite{ref:ChazelleFr86}
on the sorted lists of Steiner points on all cut-lines along $T(\calP)$, such that the searches on all cut-lines at the nodes on any path of $T(\calP)$ from the root to a leaf take $O(\log n)$ time.
Hence, all gateways of $V_g^2(s,G_E)$ can be computed in $O(\log n)$ time. Since the total number of Steiner points
in the sorted lists of all cut-lines of $T(\calP)$ is $O(n\sqrt{\log n}2^{\sqrt{\log n}})$, the fractional cascading structure can be built in $O(n\sqrt{\log n}2^{\sqrt{\log n}})$ space and  $O(n\log^{3/2} n2^{\sqrt{\log n}})$ time.
The lemma thus follows.
\end{proof}

\begin{theorem}\label{theo:10}
We can build a data structure of size $O(n^2\cdot \log n \cdot
4^{\sqrt{\log n}})$ in $O(n^2\cdot \log^2 n \cdot
4^{\sqrt{\log n}})$ time that can
answer each two-point $L_1$ shortest path query in $O(\log n)$ time (i.e., for
any two query points $s$ and $t$, the length of a shortest \st\ path can be found in $O(\log n)$
time and an actual path can be reported in additional time linear to the number of edges of the output path).
\end{theorem}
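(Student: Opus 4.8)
The plan is to assemble the pieces already developed in Section \ref{sec:newgraph} into a complete data structure. The overall query procedure has three parts: (1) compute the projection points $\{s^l,s^r,s^u,s^d\}$ and $\{t^l,t^r,t^u,t^d\}$ and check in $O(1)$ time whether a trivial shortest \st\ path exists (this reasoning is imported verbatim from Section \ref{sec:pre}); (2) if not, we know by the result of \cite{ref:ChenSh00} that there is a shortest \st\ path through an obstacle vertex, and then by Corollary \ref{cor:10} there is one passing through a gateway of $s$ in $V_g(s,G_E)$ and a gateway of $t$ in $V_g(t,G_E)$; (3) build the gateway graph $G_g(s,t)$ on node set $\{s,t\}\cup V_g(s,G_E)\cup V_g(t,G_E)$, with gateway edges and with an edge $(v_s,v_t)$ for each pair whose weight is the $G_E$-distance from $v_s$ to $v_t$, and run a shortest-path computation on it. Since $|V_g(s,G_E)|=|V_g(t,G_E)|=O(\sqrt{\log n})$, the graph $G_g(s,t)$ has $O(\sqrt{\log n})$ nodes and $O(\log n)$ edges, so once the edge weights are available a shortest \st\ path in it is found in $O(\log n)$ time (even $O(\log n)$ naively, and certainly with Dijkstra on so few nodes). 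This matches exactly the scheme of \cite{ref:ChenSh00} but with the smaller gateway sets, which is the whole point of the enhanced graph.

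For the preprocessing I would proceed as follows. First construct $G_E$ by Lemma \ref{lem:10} in $O(n\log^{3/2}n\,2^{\sqrt{\log n}})$ time; note $G_E$ has $N:=O(n\sqrt{\log n}\,2^{\sqrt{\log n}})$ nodes and $O(N)$ edges. Second, perform the preprocessing of Lemma \ref{lem:30}: build the horizontal and vertical visibility decompositions with point-location structures, compute $p^u,p^d$ for all Steiner points, sort the Steiner points on every cut-line, and build the fractional-cascading structure along $T(\calP)$; this costs $O(n\log^{3/2}n\,2^{\sqrt{\log n}})$ time and $O(N)$ space, and lets us answer step (3)'s gateway computation in $O(\log n)$ per query. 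Third — and this is the space/time bottleneck — to obtain the weights of the edges $(v_s,v_t)$ in $O(1)$ each at query time, we must know the $G_E$-distance between every ordered pair of nodes of $G_E$; following \cite{ref:ChenSh00}, we run a single-source shortest-path computation (Dijkstra) from every node of $G_E$ and store all pairwise distances, together with the shortest-path trees so that an actual path can be traced. Storing all pairwise distances takes $O(N^2)$ space, and running Dijkstra from each of the $N$ sources on a graph with $O(N)$ edges takes $O(N\log N)$ each, so $O(N^2\log N)$ total. Plugging $N=O(n\sqrt{\log n}\,2^{\sqrt{\log n}})$ gives space $O(N^2)=O(n^2\log n\cdot 4^{\sqrt{\log n}})$ and time $O(N^2\log N)=O(n^2\log^2 n\cdot 4^{\sqrt{\log n}})$, exactly the claimed bounds (here $\log N=\Theta(\log n)$ since $2^{\sqrt{\log n}}=n^{o(1)}$, which should be noted).

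For the query-time analysis I would spell out each contribution: computing the eight projection points and testing triviality is $O(\log n)$; computing $V_g(s,G_E)$ and $V_g(t,G_E)$ is $O(\log n)$ by Lemma \ref{lem:30}; building $G_g(s,t)$ reads $O(\log n)$ precomputed distances, hence $O(\log n)$; solving the shortest-path problem on $G_g(s,t)$ is $O(\log n)$. Correctness of the reported length follows from combining the trivial-path case analysis of Section \ref{sec:pre} with Corollary \ref{cor:10}. For reporting an actual path in time linear in its number $k$ of edges: if the answer is a trivial path it has $O(1)$ edges and is output directly; otherwise the $G_g(s,t)$ solution identifies a gateway $v_s$ and a gateway $v_t$, the two gateway edges contribute $O(1)$ edges, and the $v_s$-to-$v_t$ subpath is traced through the precomputed shortest-path tree rooted at $v_s$ in $G_E$ in time linear in its length — with the caveat, inherited from $G_E$'s construction, that an edge of $G_E$ that "subdivides" an edge of $G_{old}$ can be merged back so the total number of reported edges is $O(k)$.

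The main obstacle, or rather the main thing to argue carefully, is not any new combinatorial claim — the correctness is entirely carried by Corollary \ref{cor:10} and the trivial-path discussion — but the bookkeeping that ties the parameter $N$ to the stated exponents, in particular verifying $N^2=\Theta(n^2\log n\cdot 4^{\sqrt{\log n}})$ and $N^2\log N=\Theta(n^2\log^2 n\cdot 4^{\sqrt{\log n}})$ using $4^{\sqrt{\log n}}=(2^{\sqrt{\log n}})^2$ and $\log(2^{\sqrt{\log n}})=\sqrt{\log n}=o(\log n)$. A secondary subtlety is making sure that storing shortest-path trees for all $N$ sources (needed for path reporting) does not exceed $O(N^2)$ space, which holds since each tree has $O(N)$ size; and that the $O(k)$ path-reporting bound survives the edge-subdivision in $G_E$, which it does because consecutive collinear $G_E$-edges lying on a single cut-line or obstacle edge can be output as one segment.
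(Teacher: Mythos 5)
Your proposal is correct and follows essentially the same approach as the paper: build $G_E$, store all-pairs shortest-path trees/distances from every node, preprocess via Lemma~\ref{lem:30} for fast gateway computation, and at query time test for a trivial path and otherwise solve the $O(\sqrt{\log n})$-node, $O(\log n)$-edge gateway graph using Corollary~\ref{cor:10}. The only differences are cosmetic — you spell out the $N^2$/$N^2\log N$ bookkeeping and the collinear-edge-merging caveat for path reporting more explicitly than the paper does.
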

\begin{proof}
In the preprocessing, we first build the graph $G_E$. Then, for each node $v$ of $G_E$, we compute a shortest path tree in $G_E$ from $v$. We also maintain a shortest path length table such that
for any two nodes $u$ and $v$, the shortest $u$-$v$ path length in $G_E$ can be obtained in $O(1)$ time. Since $G_E$ is of a size $O(n\sqrt{\log n}2^{\sqrt{\log n}})$, computing and
maintaining all these shortest
path trees in $G_E$ take $O(n^2\log n 4^{\sqrt{\log n}})$ space and
$O(n^2 \log^2 n
4^{\sqrt{\log n}})$ time.
%
%DANNY
%
% A straightforward use of Dijkstra's algorithm to compute all these shortest path
% trees in $G_E$ would take altogether $O(n^2 \log^2 n 4^{\sqrt{\log n}})$ time,
% not $O(n^2 \log^{3/2} n 4^{\sqrt{\log n}})$ time.  Maybe some explanation is needed
% on why a factor of $O(log^{1/2} n)$ smaller is possible here for the graph $G_E$?
%HAITAO
%you are right, it should be log^2 n
%
We also do the preprocessing for Lemma \ref{lem:30}.

Given any two query points $s$ and $t$, we first check whether there
is a trivial shortest \st\ path, as discussed in Section
\ref{sec:pre}, in $O(\log n)$ time by using the algorithm
in \cite{ref:ChenSh00} (with an $O(n\log n)$ time preprocessing). If
there is a trivial shortest \st\ path, then we are done.
Otherwise, there must be a shortest \st\ path that contains an
obstacle vertex of $\calP$. Then, we first compute the gateway sets
$V_g(s,G_E)$ and $V_g(t,G_E)$ in $O(\log n)$ time by Lemma
\ref{lem:30}. Finally, we determine the shortest $s$-$t$ path length
by using the gateway graph as discussed in Section \ref{sec:pre}, in
$O(\log n)$ time, since there are $O(\sqrt{\log n})$ gateways and thus
the gateway graph has $O(\sqrt{\log n})$ nodes and $O(\log n)$ edges.

We can also report an actual shortest \st\ path in additional time linear to the number of edges of the output path by using the shortest path trees of $G_E$.
% that have been computed in the preprocessing.
%
This proves the theorem.
\end{proof}

%Note that $n^2\cdot \log^{3/2} n \cdot 4^{\sqrt{\log
%n}}=O(n^{2+\epsilon})$ for any $\epsilon>0$.

\section{Reducing the Time and Space Bounds of the Preprocessing}
\label{sec:obstacle}

In this section, we improve the preprocessing
 in Theorem \ref{theo:10} to $O(n+h^2\cdot \log h \cdot
4^{\sqrt{\log h}})$ space and $O(n+h^2\cdot \log^2 h \cdot 4^{\sqrt{\log h}})$ time,
while maintaining the $O(\log n)$ query time. For this, we shall make use
of the extended corridor data structure
\cite{ref:ChenA11ESA,ref:ChenCo12arXiv,ref:ChenL113STACS,ref:KapoorAn97}, and more importantly, explore
a number of new observations, which may be interesting in their own right.

The corridor structure has been used to solve shortest path problems (e.g., \cite{ref:InkuluPl09,ref:KapoorEf88,ref:KapoorAn97}), and new concepts like ``ocean'', ``bays'', and ``canals'' have been introduced \cite{ref:ChenA11ESA,ref:ChenCo12arXiv,ref:ChenCo12ICALP,ref:ChenCo13SoCG,ref:ChenL113STACS,ref:ChenVi13WADS}, which we refer to as the ``extended corridor structure''.
This structure is a subdivision of the free space on which algorithms for specific problems
rely.  While the extended corridor structure itself is relatively simple, the main difficulty is to design
efficient algorithms to exploit it. In some sense, the role played by the extended corridor structure is similar to that of triangulations for many geometric algorithms.
We briefly review the extended corridor structure in Section \ref{subsec:extended}, since our
presentation uses many notations introduced in it.

\subsection{The Extended Corridor Structure}
\label{subsec:extended}

For simplicity of discussion, we assume that the obstacles of $\calP$ are all contained in a
rectangle $\calR$.
%We also use $\calR$ to denote the space inside the rectangle.
Let $\calF$ denote the free space in $\calR$, and
$\Tri(\calF)$ denote a triangulation of $\calF$ (see Fig.~\ref{fig:triangulation}). The line segments of $\Tri(\calF)$ that are not obstacle edges are referred to as {\em diagonals}.

Let $G(\calF)$ denote the dual graph of $\Tri(\calF)$,
i.e., each node of $G(\calF)$ corresponds to a
triangle of $\Tri(\calF)$ and each edge connects two nodes
corresponding to two triangles sharing a diagonal of $\Tri(\calF)$.
%The degree of each node in $G(\calF)$ is at most three.
Based on $G(\calF)$, we
compute a planar 3-regular graph, denoted by $G^3$ (the degree of every node in $G^3$ is three),
possibly with loops and multi-edges,
as follows. First, we remove each degree-one node from $G(\calF)$
along with its incident edge; repeat this process until no
degree-one node remains in the graph. Second, remove every degree-two node from
$G(\calF)$ and replace its two incident edges by a single edge;
repeat this process until no degree-two node remains. The
resulted graph is $G^3$ (see Fig.~\ref{fig:triangulation}), which has
$O(h)$ faces, nodes, and  edges \cite{ref:KapoorAn97}. Every node of
$G^3$ corresponds to a triangle in $\Tri(\calF)$, called a
{\em junction triangle} (see Fig.~\ref{fig:triangulation}).
The removal of the nodes for all junction triangles from $G^3$ results in $O(h)$
{\em corridors}, each of which corresponds to an edge of $G^3$.

The boundary of each corridor $C$ consists of four parts (see
Fig.~\ref{fig:corridor}): (1) A boundary portion of an obstacle
$P_i\in \calP$, from a point $a$ to a point $b$; (2) a diagonal of a
junction triangle from $b$ to a point $e$ on an obstacle
$P_j\in \calP$ ($P_i=P_j$ is possible); (3) a boundary portion of
the obstacle $P_j$ from $e$ to a point $f$; (4) a diagonal of a
junction triangle from $f$ to $a$.
%The two diagonals $\overline{be}$
%and $\overline{af}$ are called the {\em doors} of $C$.
The corridor $C$ is a simple polygon.
%Let $|C|$ denote the number of obstacle
%vertices on the boundary of $C$. In $O(|C|)$ time, we can compute
Let $\pi(a,b)$ (resp., $\pi(e,f)$) be the Euclidean shortest path from $a$ to $b$
(resp., $e$ to $f$) in $C$. The region $H_C$ bounded by
$\pi(a,b), \pi(e,f)$, $\overline{be}$, and
$\overline{fa}$ is called an {\em hourglass}, which is {\em open} if
$\pi(a,b)\cap \pi(e,f)=\emptyset$ and {\em closed} otherwise (see
Fig.~\ref{fig:corridor}). If $H_C$ is open, then both $\pi(a,b)$ and
$\pi(e,f)$ are convex chains and are called the {\em sides} of
$H_C$; otherwise, $H_C$ consists of two ``funnels" and a path
$\pi_C=\pi(a,b)\cap \pi(e,f)$ joining the two apices of the two
funnels, and $\pi_C$ is called the {\em corridor path} of $C$.
The two funnel apices (e.g., $x$ and $y$ in Fig.~\ref{fig:corridor})
are called {\em corridor path terminals}.
Each side of a funnel is also a convex chain.
%We compute the hourglass for each
%corridor.  After $\Tri(\calF)$ is produced,
%the total time for computing all hourglasses is $O(n)$.

\begin{figure}[t]
\begin{minipage}[t]{0.47\linewidth}
\begin{center}
\includegraphics[totalheight=1.2in]{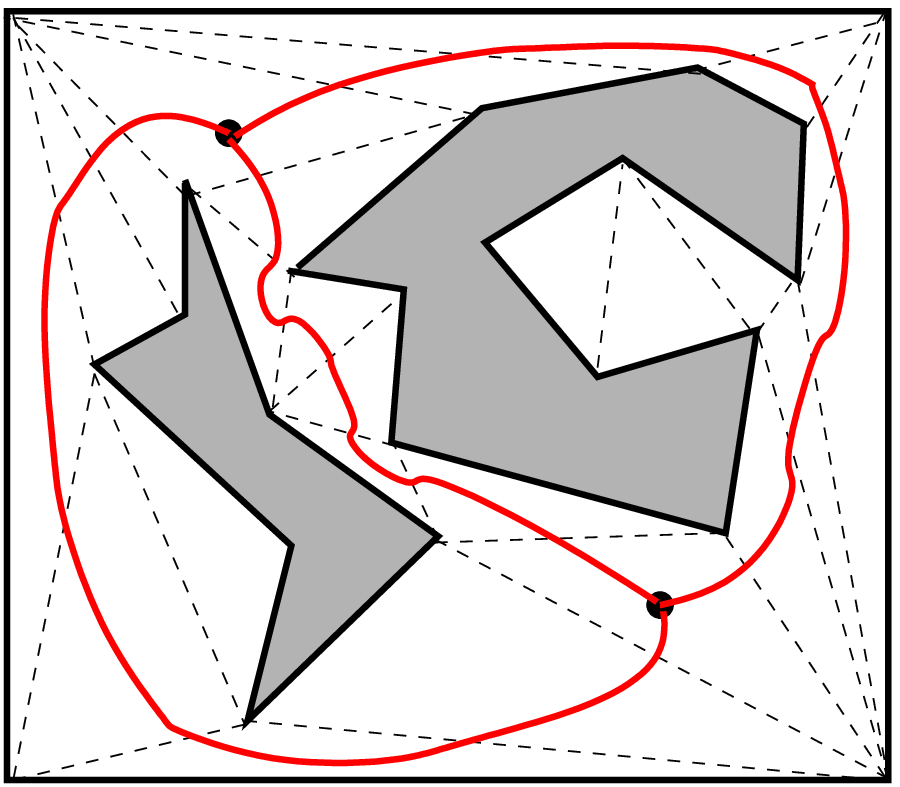}
\caption{\footnotesize \cite{ref:ChenCo12arXiv,ref:ChenCo12ICALP} Illustrating a triangulation of the free
space among two obstacles and the corridors (indicated by red solid curves).
There are two junction triangles marked by a large dot inside
each of them, connected by three solid (red) curves. Removing the two
junction triangles results in three corridors.}
\label{fig:triangulation}
\end{center}
\end{minipage}
\hspace*{0.04in}
\begin{minipage}[t]{0.52\linewidth}
\begin{center}
\includegraphics[totalheight=1.2in]{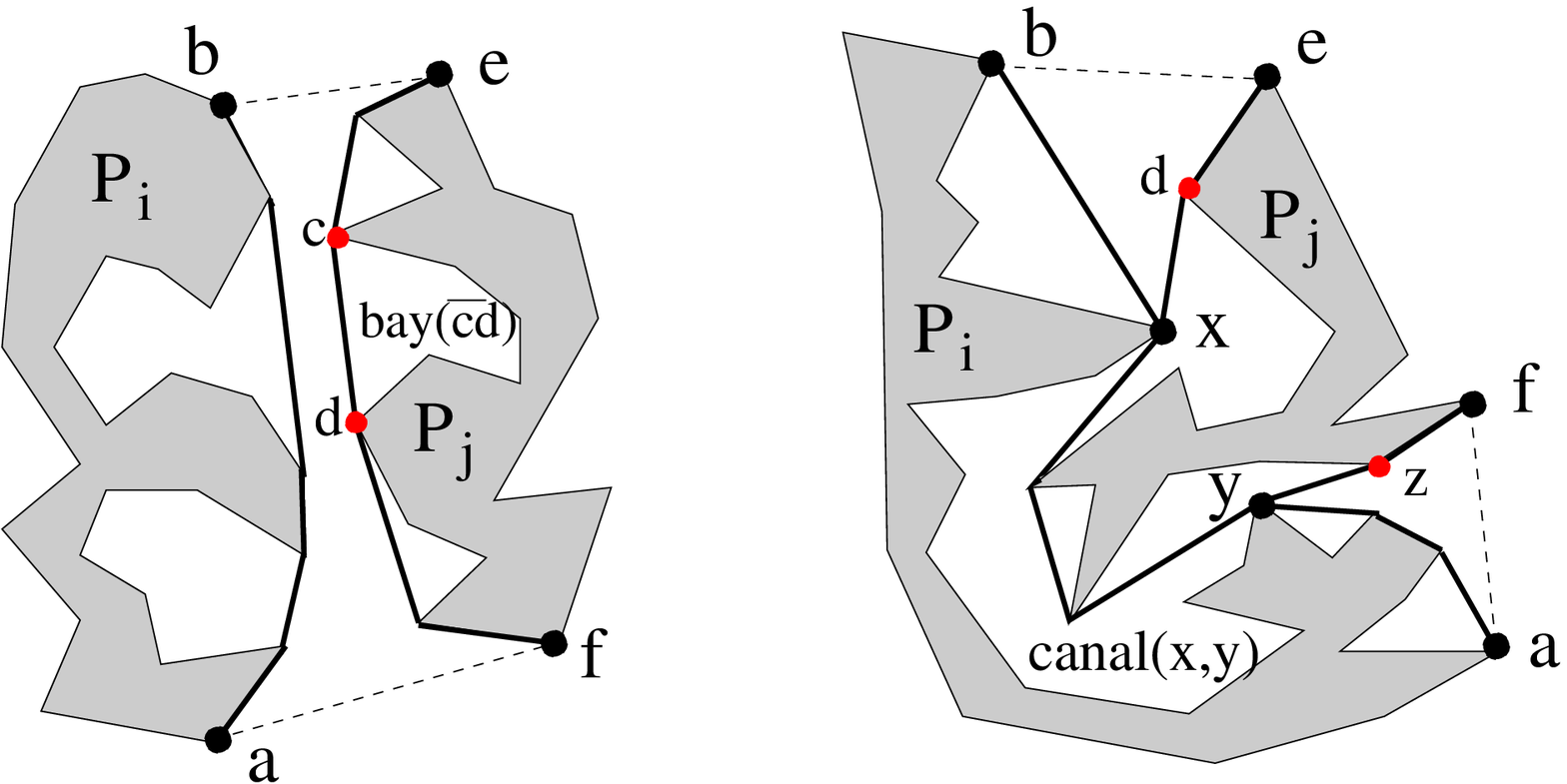}
\caption{\footnotesize \cite{ref:ChenCo12arXiv,ref:ChenCo12ICALP} Illustrating an open hourglass (left) and a
closed hourglass (right) with a corridor path connecting the apices
$x$ and $y$ of the two funnels. The dashed segments are diagonals.
The paths $\pi(a,b)$ and $\pi(e,f)$ are shown with thick solid
curves. A bay $bay(\overline{cd})$ with gate $\overline{cd}$ (left)
and a canal $\canal$ with gates $\overline{xd}$ and $\overline{yz}$
(right) are also indicated.} \label{fig:corridor}
\end{center}
\end{minipage}
\vspace*{-0.15in}
\end{figure}

Let $\calM$ be the union of the $O(h)$ junction triangles, open
hourglasses, and funnels.  Then $\calM\subseteq\calF$.
We call $\calM$ the {\em ocean}.
Since the sides of open hourglasses and funnels
are all convex, the boundary
$\partial\calM$ of $\calM$ consists of $O(h)$ convex chains with
a total of $O(n)$ vertices; also, there are $O(h)$ reflex vertices on $\partial\calM$,
which are corridor path terminals.
%This implies that the complementary region
%$\calR\setminus \calM$ consists of a set of polygons bounded by
%$O(h)$ convex chains with $O(h)$ reflex vertices. Thus,
%$\calR\setminus\calM$ can be partitioned into a set $\calP'$ of
%$O(h)$ pairwise interior-disjoint convex polygons of totally $O(n)$
%vertices (e.g., by extending an angle-bisecting segment inward from
%each reflex vertex) \cite{ref:KapoorAn97}. If we view the convex
%polygons in $\calP'$ as obstacles, then the ocean $\calM$ is the
%free space with respect to $\calP'$. A point on $\partial\calM$ must
%be on the boundary of a convex obstacle in $\calP'$.
%%The set $\calP'$ can be obtained easily in $O(n+h\log h)$ time.
%It should be pointed out that our algorithms given below do not
%rely on $\calP'$ at all. However, for convenience,
%we will use $\calP'$ to help analyze the running time of our algorithms.
%In the sequel, we examine the other free
%space of $\calF$ than $\calM$, i.e., $\calF\setminus\calM$, which
%consists of two types of regions: {\em bays} and {\em canals}, defined below.
We further partition the free space $\calF\setminus \calM$ into regions called {\em
bays} and {\em canals}, as follows.

Consider the hourglass $H_C$ of a corridor $C$.
%
%A region of the first type is a {\em bay} that is enclosed
%between an edge on an open hourglass side or on a funnel side and a
%boundary portion of an obstacle in $\calP$. A region of the second type
%is a {\em canal} that is enclosed between two obstacles that define a
%corridor path. The details are discussed below.
%
%\subsection{Bays, Canals, and Observations}
%\label{sec:baycanal}
%In this section, we introduce the concepts of bays and canals, and
%outline our SPM algorithm.
%
If $H_C$ is open, then $H_C$
has two sides. Let $S_1(H_C)$ be one side of $H_C$.
The obstacle vertices on $S_1(H_C)$ all lie on the same
obstacle, say $P\in\calP$. Let $c$ and $d$ be any two consecutive
vertices on $S_1(H_C)$ such that $\overline{cd}$ is
not an edge of $P$ (e.g., see the left figure in
Fig.~\ref{fig:corridor}, with $P=P_j$). The free region enclosed
by $\overline{cd}$ and the boundary portion of $P$ between $c$ and
$d$ is called a {\em bay}, denoted by
$bay(\overline{cd})$. We call
$\overline{cd}$ the {\em gate} of $bay(\overline{cd})$, which is
an edge shared by $\bay$ and $\calM$.
If $H_C$ is closed, let $x$ and $y$ be the two apices
of its two funnels. Consider two consecutive vertices $c$ and $d$ on
a side of any funnel such that $\overline{cd}$ is not an obstacle
edge. If neither $c$ nor $d$ is a funnel apex, then $c$ and $d$ must
lie on the same obstacle and the segment $\overline{cd}$ also
defines a bay with that obstacle. However, if $c$ or $d$ is a funnel
apex (say, $c=x$), then $c$ and $d$ may lie on different obstacles.
If they lie on the same obstacle, then they also define a bay;
otherwise, we call $\overline{xd}$ the {\em canal gate} at $x=c$
(see Fig.~\ref{fig:corridor}). Similarly, there is a canal gate
at the other funnel apex $y$, say $\overline{yz}$. Let $P_i$ and
$P_j$ be the two obstacles bounding the hourglass $H_C$. The
region enclosed by $P_i$, $P_j$,
$\overline{xd}$, and $\overline{yz}$ that contains the corridor path
of $H_C$ is called a {\em canal}, denoted by $canal(x,y)$.

%Let $\canal$ be the canal defined above in Section \ref{sec:baycanal},
%as shown in
%Fig.~\ref{fig:corridor}.
%The observation below on $\canal$ is self-evident.
%\begin{observation}\label{obser:canal}
%The corridor path of $\canal$ is the
%shortest path from $x$ to $y$ in $\canal$.
%Further, for any point $p$ on the gate $\overline{xd}$ and any point $q$ on the
%gate $\overline{yz}$ of $\canal$, the shortest path from $p$ to $q$ in $\canal$
%is the concatenation of $\overline{px}$, the corridor path of
%$\canal$, and $\overline{yq}$.
%\end{observation}

Every bay or canal is a simple polygon.
The ocean, bays, and canals together
constitute the free space $\calF$. While the
total number of all bays is $O(n)$, the total number of all canals is
$O(h)$.
%since each canal corresponds to a corridor and the number of corridors is $O(h)$.
%The following lemma, which will be useful later,
%follows the opaque property of canals proved in
%\cite{ref:ChenCo12ICALP}.

%Note that if the obstacles in $\calP$ are all convex, then the entire free space
%is the ocean $\calM$.

%\subsection{Observations}

%The fact that each bay has only one gate allows us to process a bay
%easily. Intuitively, an observer outside a bay cannot see
%any point outside the bay ``through" its gate. But, each canal has
%two gates, which could possibly cause trouble. The next lemma discovers an
%important property that an
%observer outside a canal cannot see any point outside the canal
%through the canal (and its two gates); we call it {\em the opaque
%property} of canals.

%\begin{lemma}\label{lem:opaque} \emph{\cite{ref:ChenCo12ICALP}}
%For any canal $B$, suppose a line segment $\overline{pq}$ is in
%$\calF$ such that $p$ is outside $B$ and $q$ is on a gate of $B$;
%if we extend $\overline{pq}$ into $\calF$ through the above
%gate, then the first point on the boundary of $B$ hit by the
%extension must be on an obstacle (i.e., not on the other gate of $B$).
%\end{lemma}

\subsection{Queries in the Ocean $\calM$}

For any two points $s$ and $t$ in the ocean $\calM$, it has been
proved that there exists an $L_1$ shortest \st\ path in the free space of the union of $\calM$
and all corridor paths \cite{ref:ChenA11ESA,ref:ChenCo12arXiv,ref:ChenL113STACS}.
Let $\calM'$ be the union of $\calM$
and all corridor paths. Thus, if $s$ and $t$ are both in $\calM$, then there is a shortest \st\ path in $\calM'$.

In this subsection, we will first construct a graph $G_E(\calM)$ of
size $O(h\cdot \sqrt{\log h} \cdot 2^{\sqrt{\log h}})$ on
$\calM$, in a similar fashion as $G_E$ in
Section \ref{sec:newgraph}. Using the graph $G_E(\calM)$ and with
additional $O(n)$ space, for any query points $s$ and $t$ in $\calM$, the
shortest path query can be answered in $O(\log n)$ time.

Let $\calQ=\calR\setminus\calM$. Note that $\partial\calQ$ is $\partial\calM$.
Hence, $\partial\calQ$ consists of $O(h)$ convex chains with totally $O(n)$ vertices,
and $\partial\calQ$ also contains $O(h)$ reflex vertices that are
corridor path terminals. Since $\calP$ has $h$ obstacles,
$\calQ$ contains at most $h$ connected components and each
obstacle of $\calP$ is contained in a component of $\calQ$.
For any point $q$ in $\calM$, in this subsection, let $q^l$,
$q^r,q^u$, and $q^d$ denote the leftward, rightward,
upward, and downward projection points of $q$ on $\partial\calQ$, respectively.

%We define {\em extreme vertices} on $\partial\calM$ as follows.
An obstacle vertex $p$ on $\partial\calQ$ is said to be {\em extreme} if both its
incident edges on $\partial\calQ$ are on the same side of the vertical or horizontal line through $p$. Let $V_e(\calQ)$ denote the set of all extreme vertices and corridor
path terminals of $\calQ$. Since $\partial\calQ$ consists of $O(h)$
convex chains and $O(h)$ reflex vertices that are corridor path
terminals, $|V_e(\calQ)|=O(h)$.  We could build a graph
on $V_e(\calQ)$ with respect to $\calQ$ in a similar way as we built
$G_E$ on the obstacle vertices of $\calP$ in Section \ref{sec:newgraph},
and then use this graph to answer queries when both query points are in
$\calM$. However, in order to handle the general queries (in
Section \ref{subsec:general}) for which at least one query point is not in
$\calM$, we need to consider more points for building the graph.
Specifically, let $\calV(\calQ)=\{p^l,p^r, p^u, p^d \ |\ p\in V_e(\calQ)\} \cup
V_e(\calQ)$, i.e., in addition to $V_e(\calQ)$,
$\calV(\calQ)$ also contains the four projections of all points in
$V_e(\calQ)$ on $\partial\calQ$.
Since $|V_e(\calQ)|=O(h)$, $|\calV(\calQ)|=O(h)$.
%We will build a graph on $\calV(\calQ)$ later.

For each connected component $Q$ of
$\calQ$, let $\calV(Q)$ denote the set of points of $\calV(\calQ)$ on $Q$.
%If we connect each pair of consecutive points of $\calV(Q)$ along the
%boundary $\partial Q$ of $Q$ by a line segment, since $\calV(Q)$
%contains all extreme vertices and reflex vertices on
%$\partial\calQ$, the above segments are all contained in $Q$ and no
%two segments intersect each other except at their endpoints (similar
%results were also given in
%\cite{ref:ChenA11ESA,ref:ChenCo12arXiv,ref:ChenL113STACS}).
Consider any two points $a$ and $b$ of $\calV(Q)$ that are consecutive on the
boundary $\partial Q$ of $Q$. By the definition of $a$ and
$b$, the boundary portion of $\partial Q$ between $a$ and $b$ that
contains no other points of $\calV(Q)$ must be an $xy$-monotone
path (similar results were also given in
\cite{ref:ChenA11ESA,ref:ChenCo12arXiv,ref:ChenL113STACS,ref:InkuluPl09}), and we
call it an {\em elementary curve} of $\partial Q$. Hence, for any two
points on an elementary curve, the portion of the curve between the
two points is a shortest path between the two points.

Our goal is to build a graph, denoted by $G_E(\calM)$, on $\calV(\calQ)$
with respect to $\calQ$
in a similar way as we built $G_E$ in Section \ref{sec:newgraph}, and use it to
answer queries. To argue the correctness of our approach, we
also define a graph $G_{old}(\calM)$ on $\calV(\calQ)$ and $\calQ$ in a
similar way as $G_{old}$ on $\calP$. Again,
$G_{old}(\calM)$ is only for showing the correctness of our approach
based on $G_E(\calM)$ (recall that we use $G_{old}$ to show the
correctness of using $G_E$). Below, we define $G_E(\calM)$ and
$G_{old}(\calM)$ simultaneously.

%It should be point out that for any two points $p$ and $q$ in $\calM$,
%based on the definition of the ocean $\calm$,
%$p$ is visible to $q$ with respect to $\calq$ if and only if they are visible to
%each other with respect to $\calp$.

We first define their node sets. Each point of $\calV(\calQ)$ defines a
node in both graphs. In addition, $G_{old}(\calM)$ has type-1 and
type-2 Steiner points as nodes; $G_E(\calM)$ has type-1,
type-2, and type-3 Steiner points as nodes. Such Steiner points are
defined using $\calV(\calQ)$ in a similar way as before, but with respect to $\partial\calQ$.
Specifically, for
each point $p \in \calV(\calQ)$, its four projections
$p^l,p^r,p^u$, and $p^d$ on $\partial \calQ$
are type-1 Steiner points. Let $T(\calM)$ be the cut-line tree
defined on the points of $\calV(\calQ)$, similar to $T(\calP)$.
Each node $u$ of $T(\calM)$ is associated
with a subset $V(u)\subseteq \calV(\calQ)$ and a vertical cut-line
$l(u)$ through the median of the $x$-coordinates of the points in $V(u)$.
%If $u$ is the root, $V(u)=\calV(\calQ)$. If $w$ is
%the left (resp., right) child of $u$, $V(w)$ is the vertices of $V(u)$
%to the left (resp., right) of $l(u)$.
Since $|\calV(\calQ)|=O(h)$, $T(\calM)$ has
$O(\log h)$ levels and $O(\sqrt{\log h})$ super-levels.
For every node $u\in T(\calM)$, for each point $p\in V(u)$,
if $p$ is horizontally visible to $l(u)$, then the projection of $p$
on $l(u)$ is a type-2 Steiner point.
%Unless otherwise stated, the ``visibility''
%mentioned in this Subsection is with respect to $\calQ$ (in fact,
%based on the definition of the ocean $\calM$, for any two points $p$
%and $q$ in $\calM$,
%$p$ is visible to $q$ with respect to $\calQ$ if and only if they are visible to
%each other with respect to $\calP$).
%Each type-1 Steiner point defines a node in both graphs.
Also, there are $O(h\sqrt{\log h}2^{\sqrt{\log h}})$ type-3 Steiner
points on the cut-lines of $T(\calM)$, which are defined in
a similar way as in Section \ref{sec:newgraph}, and we omit the details.

The edge sets of the two graphs are defined similarly as
those in $G_{old}$ and $G_E$. We only point out the differences here.
One big difference is that for each corridor path, since its two terminals
define two nodes in both $G_{old}(\calM)$ and $G_E(\calM)$, $G_E(\calM)$ has
an edge connecting these two nodes in both graphs whose weight
is the length of the corridor path.
Another subtle difference is as follows.
In $G_{old}$ and $G_E$, for each obstacle edge $e$ of $\calP$, both
graphs have an edge connecting each pair of consecutive graph nodes
on $e$. In contrast, here we consider
each individual elementary curve of $\calQ$ instead of each
individual edge of $\calQ$ because not every vertex of
$\calQ$ defines a node in $G_{old}(\calM)$ and $G_E(\calM)$.
Specifically, consider each elementary curve $\beta$ of $\calQ$. Note that
the two endpoints of $\beta$ must be in $\calV(Q)$ and thus define two nodes
in both graphs. For each pair of consecutive graph nodes along $\beta$, we put an edge
in both $G_{old}(\calM)$ and $G_E(\calM)$ whose weight is the
length of the portion of $\beta$ between these two points.
%Other edges of  $G_{old}(\calM)$ and
%$G_E(\calM)$  are defined similarly as those in  $G_{old}$ and
%$G_E$, respectively.
We then have the following lemma.

\begin{lemma}\label{lem:40}
For any two points $u$ and $v$ in $\calV(\calQ)$,
a shortest path from $u$ to $v$ in $G_{old}(\calM)$ (resp., $G_E(\calM)$)
corresponds to a shortest path from $u$ to $v$ in the plane.
\end{lemma}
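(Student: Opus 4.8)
The plan is to reduce Lemma~\ref{lem:40} to the already-established correctness of $G_{old}$ and $G_E$ on the obstacle set $\calP$, by treating the connected components of $\calQ$ as ``obstacles'' in their own right. The first step is to observe that a shortest path in the plane between two points $u,v$ of $\calV(\calQ)$ can be taken to avoid the interiors of all components of $\calQ$: each component of $\calQ$ contains an obstacle of $\calP$, and the boundary $\partial\calQ$ of each component is an $xy$-monotone union of convex chains, so any path crossing into $\calQ$ can be replaced by one running along $\partial\calQ$ without increasing its $L_1$-length, using the fact that elementary curves are themselves shortest paths between their endpoints. Hence a shortest $u$-$v$ path in the plane equals a shortest $u$-$v$ path in $\calR\setminus\mathrm{int}(\calQ)$, i.e.\ among the ``obstacles'' $\{Q_1,\dots\}$ together with the corridor paths (the corridor paths must be added back because, for two points of $\calM$, a plane-shortest path may need to use a corridor path, as recalled just before the lemma).

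Next I would check that the construction of $G_{old}(\calM)$ and $G_E(\calM)$ on the components of $\calQ$ is literally the construction of $G_{old}$ and $G_E$ on this modified obstacle set, \emph{plus} the extra edges for corridor paths and the bookkeeping with elementary curves. The three differences flagged in the text are exactly the points to verify: (i) for each elementary curve $\beta$ the consecutive graph nodes on $\beta$ are joined by edges weighted by sub-arc length, which faithfully represents movement along $\partial\calQ$ since $\beta$ is a shortest path between any two of its points; (ii) the two terminals of each corridor path are joined by an edge of weight equal to the corridor-path length, so that paths through a closed hourglass are representable; (iii) the node set $\calV(\calQ)$ contains all extreme vertices and corridor-path terminals (the analogue of ``obstacle vertices''), plus their four axis-parallel projections (the analogue of type-1 Steiner points), so that every vertex of $\partial\calQ$ that a shortest path could ``turn at'' is either a graph node or lies on an elementary curve between two graph nodes. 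Granting this, the correctness theorems of \cite{ref:ClarksonRe87,ref:ClarksonRe88} and \cite{ref:ChenSh00} for $G_{old}$, and Lemma~\ref{lem:10}--Lemma~\ref{lem:20} style reasoning for $G_E$ (the subdivision-of-edges argument showing any $G_{old}(\calM)$-path survives in $G_E(\calM)$ with the same length, together with the monotone-rerouting argument), apply verbatim to give: a shortest $u$-$v$ path in $G_{old}(\calM)$ (resp.\ $G_E(\calM)$) has length equal to the shortest $u$-$v$ path length among the components of $\calQ$ and the corridor paths, which by the first step equals the plane shortest-path length.

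I expect the main obstacle to be step one: justifying that it suffices to route plane-shortest paths along $\partial\calQ$, and in particular that the relevant ``turn points'' on $\partial\calQ$ are captured by $\calV(\calQ)$. Concretely, one must argue that when a plane-shortest path enters a bay or canal region or grazes a convex chain of $\partial\calM$, it can be rerouted to pass only through extreme vertices, corridor-path terminals, and their projections, without increasing length — this is where the convexity of the hourglass/funnel sides and the $xy$-monotonicity of elementary curves do the real work, and it is the analogue (for the ``fat'' obstacles $Q_i$ with $xy$-monotone boundary arcs) of the standard fact for polygonal obstacles that a shortest path turns only at obstacle vertices. A secondary subtlety is the corridor paths: one has to confirm that adding a single edge between the two corridor-path terminals, rather than representing the corridor path's internal geometry, loses nothing — this follows because for $u,v\in\calV(\calQ)$ any shortest path using a corridor path uses it in its entirety from terminal to terminal (the terminals are reflex vertices of $\partial\calM$ and the funnels on either side are convex). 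Once these two points are settled, the rest is a transcription of the already-cited results.
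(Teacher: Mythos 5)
Your high-level plan is right, and you correctly identify where the real work lies, but you leave exactly that work undone and the specific step you propose instead does not go through. The crux you flag as ``the main obstacle'' is: $\calQ$ has $O(n)$ boundary vertices, but $G_{old}(\calM)$ and $G_E(\calM)$ are built using only the $O(h)$ points of $\calV(\calQ)$ as the analogue of ``obstacle vertices.'' The Clarkson--Kapoor--Vaidya correctness theorem is proved for graphs where \emph{every} obstacle vertex seeds Steiner points; it does not ``apply verbatim'' to a graph that seeds Steiner points from only a sparse subset of the vertices of fat obstacles with monotone boundary arcs. So the sentence ``Granting this, the correctness theorems of [CKV] and [CKT] apply verbatim'' is precisely the unproved step, and as stated it is not a valid invocation of those results.

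The paper closes this gap with a concrete device your proposal does not find: it introduces the \emph{core} $Q_{core}$ of each component $Q$ of $\calQ$, a simple polygon whose edges are the bases of the ears and whose vertex set is \emph{exactly} $\calV(Q)$. One then builds an intermediate graph $G=G_{old}(\calQ_{core})$ on the cores (plus corridor-path edges). Because $\calV(\calQ)$ really is the full vertex set of the core polygons, Clarkson--Kapoor--Vaidya \emph{does} apply to $G$ directly; combined with the fact (also quoted in the proof) that some plane-shortest $u$-$v$ path stays outside all cores and uses corridor paths only wholesale, this gives length-equality between $G$-paths and plane paths. The remaining content is to show $G_{old}(\calM)$ and $G$ have equal shortest-path lengths, and here the paper does a two-case analysis on the empty rectangle $R(a,b)$: in each case a monotone staircase in $G$ via $\partial\calQ_{core}$ or a cut-line is matched by an equal-length monotone staircase in $G_{old}(\calM)$ via the elementary curve $\beta$ of the relevant ear, using that $\beta$ is $xy$-monotone and hence the arc of $\beta$ between two points has the same $L_1$-length as the chord. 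This case analysis is the ``real work'' you correctly anticipated but only gestured at; without the core-polygon intermediate you have no polygonal obstacle set on which the cited correctness theorems are actually applicable.

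Your secondary observations (the rerouting to avoid $\mathrm{int}(\calQ)$, the corridor-path edge being lossless because such paths are used terminal-to-terminal, and the $G_{old}(\calM)\to G_E(\calM)$ step by edge subdivision) match the paper and are fine. The gap is entirely in the first half: you need the cores to make the reduction rigorous.
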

\begin{proof}
We first show that a shortest path from $u$ to $v$ in
$G_{old}(\calM)$ corresponds to a shortest
path from $u$ to $v$ in the plane, and then show a shortest path from $u$ to $v$ in
$G_E(\calM)$ corresponds to a shortest path from $u$ to $v$ in
$G_{old}(\calM)$. This will prove the lemma.

To show a shortest path from $u$ to $v$ in
$G_{old}(\calM)$ corresponds to a shortest
path from $u$ to $v$ in the plane, we will build a new graph $G$ and
prove the following: (1) a shortest path from $u$ to $v$ in $G$
corresponds to a shortest path from $u$ to $v$ in $G_{old}(\calM)$,
and (2) a shortest path from $u$ to $v$ in $G$ corresponds to a
shortest path from $u$ to $v$ in the plane. Below, to define the
graph $G$, we first review some observations that have been discovered
before.

\begin{figure}[t]
\begin{minipage}[t]{\linewidth}
\begin{center}
\includegraphics[totalheight=1.2in]{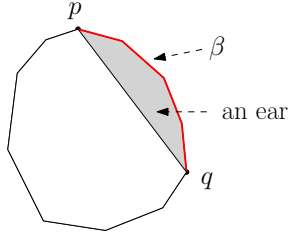}
\caption{\footnotesize Illustrating an ear bounded by $\overline{pq}$ and an elementary curve $\beta$.}
\label{fig:core}
\end{center}
\end{minipage}
\vspace*{-0.15in}
\end{figure}

Let $Q$ be any connected component of $\calQ$. Consider an elementary curve $\beta$ of
$Q$ with endpoints $p$ and $q$. By
the definition of elementary curves, the line segment $\overline{pq}$
must be inside $Q$ (similar results were given in
\cite{ref:ChenA11ESA,ref:ChenCo12arXiv,ref:ChenL113STACS}). We call
the region enclosed by $\beta$ and $\overline{pq}$ an {\em ear} of $Q$,
$\overline{pq}$ the {\em base} of the ear, and $\beta$ the elementary curve of the ear.
It is possible that $\beta$ is $\overline{pq}$, in which case
the ear is $\overline{pq}$. It is easy to see that the bases of all
elementary curves of $Q$ do not intersect except at their endpoints
\cite{ref:ChenA11ESA,ref:ChenCo12arXiv,ref:ChenL113STACS}. Hence, if
we connect the bases of its elementary curves, we obtain a simple
polygon that is contained in $Q$; we call this simple polygon the
{\em core} of $Q$, denoted by $Q_{core}$. Clearly, the union of
$Q_{core}$ and all the ears of $Q$ is $Q$. Denote by $\calQ_{core}$ the set of
cores of all components of $\calQ$. Note
that the vertex set of $\calQ_{core}$ is $\calV(\calQ)$ and the edges of
$\calQ_{core}$ are the bases of all ears of $\calQ$.
Thus, $\calQ_{core}$ has $O(h)$ vertices and edges.
By the results in
\cite{ref:ChenA11ESA,ref:ChenCo12arXiv,ref:ChenL113STACS}, for any
two points in $\calM$, in particular, any two vertices $u$ and $v$ in $\calV(\calQ)$,
there is a shortest $u$-$v$ path in the plane that avoids all
cores of $\calQ_{core}$ and possibly contains corridor paths. More
specifically, there exists a shortest path $\pi(u,v)$ from $u$ to $v$
that contains a sequence of vertices of $\calV(\calQ)$,
$p_1,p_2,\ldots,p_k$, in this order, with $u=p_1$ and $v=p_k$, such that for
any two consecutive vertices $p_i$ and $p_{i+1}$, $1\leq i\leq k-1$, if
$p_i$ and $p_{i+1}$ are terminals of the same corridor path, then the
entire corridor path is contained in $\pi(u,v)$, or else $\pi(u,v)$
contains the line segment $\overline{p_ip_{i+1}}$ which does not
intersect the interior of any core in $\calQ_{core}$.

%consisting of $k$ segments $\overline{p_0p_1}$,
%$\overline{p_1p_2}$,\ldots,$\overline{p_{k-1}p_k}$ with $p_0=s$ and
%$p_k=t$ such that for each $i$, $0\leq i\leq k-1$, if
%$\overline{p_ip_{i+1}}$ is in a corridor path, then $\pi(s,t)$
%contains the entire corridor path, otherwise both $p_i$ and $p_{i+1}$
%are vertices in $\calV(\calQ)$ and $\overline{p_ip_{i+1}}$ does not
%intersect the interior any core in $\calQ_{core}$.

We build a graph $G=G_{old}(\calQ_{core})$ on $\calV(\calQ)$ with
respect to the cores of $\calQ$, in the same way as $G_{old}$ on $\calP$ in
\cite{ref:ChenSh00,ref:ClarksonRe87,ref:ClarksonRe88}, with the only
difference that if two nodes of $G$ are terminals of the same corridor
path, then there is an extra edge in $G$ connecting these two nodes whose
weight is the length of the corridor path. Note that
$u$ and $v$ define two nodes in $G$. Based on the above discussion,
we claim that the shortest path $\pi(u,v)$ defined above must correspond to a
shortest path from $u$ to $v$ in $G$. Indeed, for any $i$, $1\leq
i\leq k-1$, if $p_i$ and $p_{i+1}$ are terminals of the same corridor
path, then recall that $\pi(u,v)$ contains the entire corridor path and
there is an edge in $G$ connecting $p_i$ and $p_{i+1}$ whose weight
is the length of that corridor path; otherwise, $\pi(u,v)$
contains the segment $\overline{p_ip_{i+1}}$ and by the proof in
\cite{ref:ClarksonRe87,ref:ClarksonRe88}, there must be a path in
$G$ whose length is equal to that of
$\overline{p_ip_{i+1}}$ since $p_i$ is visible to $p_{i+1}$ with
respect to the cores of $\calQ$. This proves that there is a shortest
$u$-$v$ path in $G$ whose length is equal to that of $\pi(u,v)$.

Next, we prove that a shortest $u$-$v$ path in $G$ must
correspond to a shortest $u$-$v$ path in $G_{old}(\calM)$.
To make the paper self-contained we give some details below; for complete details, please refer to \cite{ref:InkuluPl09,ref:ChenA11ESA,ref:ChenCo12arXiv}.
Both $G_{old}(\calM)$ and $G$ are built on $\calV(\calQ)$
in the same way, with
the only difference that $G_{old}(\calM)$ is built
with respect to $\calQ$ while $G$ is built
with respect to $\calQ_{core}$. A useful fact is that for
any two points $a$ and $b$ on any elementary curve $\beta$, the length of the portion of $\beta$ between
$a$ and $b$ is equal to that of the segment $\overline{ab}$ because $\beta$ is $xy$-monotone.
Note that the space outside $\calQ_{core}$ is the union
of the space outside $\calQ$ and all ears of $\calQ$.
Since both graphs have extra edges to connect corridor
path terminals, to prove that a shortest $u$-$v$ path in $G$
corresponds to a shortest $u$-$v$ path in $G_{old}(\calM)$,
based on the analysis in \cite{ref:ClarksonRe87,ref:ClarksonRe88}, we
only need to show the following: For any two vertices $a$ and $b$ of
$\calV(\calQ)$ visible to each other with respect to $\calQ_{core}$ such
that no other vertices of $\calV(\calQ)$ than $a$ and $b$ are in
the axis-parallel rectangle $R(a,b)$ that has $\overline{ab}$ as a diagonal,
there must be an $xy$-monotone path between $a$ and $b$ in
$G_{old}(\calM)$. Note that $a$ may not be visible to $b$ with respect
to $\calQ$.

By the construction of the graph $G$
\cite{ref:ClarksonRe87,ref:ClarksonRe88}, there must be an $xy$-monotone path from $a$ to $b$ in $G$, for which there are two possible cases. Below, we
prove in each case there is also an $xy$-monotone path from $a$
to $b$ in $G_{old}(\calM)$. Without loss of generality, we assume $b$ is
to the northeast of $a$.

\begin{enumerate}
\item
{\bf Case 1}.
If any core of $\calQ_{core}$ intersects the interior of the
rectangle $R(a,b)$, then as shown in
\cite{ref:ClarksonRe87,ref:ClarksonRe88}, either the rightward projection
of $a$ on $\partial\calQ_{core}$ and the downward projection of $b$ on
$\partial\calQ_{core}$ are both on the same edge of $\partial\calQ_{core}$
that intersects $R(a,b)$ (e.g., see Fig.~\ref{fig:case1}),
or the upward projection
of $a$ on $\partial\calQ_{core}$ and the leftward projection of $b$ on
$\partial\calQ_{core}$ are both on the same edge of $\partial\calQ_{core}$ that intersects
$R(a,b)$.  Here, we assume that the former case
occurs. Let $a_1$ be the rightward projection
of $a$ on $\partial\calQ_{core}$ and $b_1$ be the downward projection of $b$ on
$\partial\calQ_{core}$, and $\overline{a_2b_2}$ be the edge of $\calQ_{core}$ that
contains both $a_1$ and $b_1$.
By the construction of $G$, there is an
$xy$-monotone path from $a$ to $b$ consisting of
$\overline{aa_1}\cup\overline{a_1b_1}\cup\overline{b_1b}$. Below, we show
that there is also an $xy$-monotone path from $a$ to $b$ in
$G_{old}(\calM)$.

Let $ear(\overline{a_2b_2})$ be the ear of $\calQ$ whose base is
$\overline{a_2b_2}$. Let $\beta$ be the elementary curve of
$ear(\overline{a_2b_2})$. Since no vertex of $\calV(\calQ)-\{a,b\}$ is in $R(a,b)$
and all extreme points of $\calQ$ are in $\calV(\calQ)$, the rightward
projection of $a$ on $\partial\calQ$ and the downward projection of
$b$ on $\partial\calQ$ must be both on $\beta$ (e.g., see
Fig.~\ref{fig:case1new}); we denote these two
projection points by $a_3$ and $b_3$, respectively. By the
construction of $G_{old}(\calM)$, there must be an $xy$-monotone path
from $a$ to $b$ in $G_{old}(\calM)$ that is a concatenation of
$\overline{aa_3}$, the portion of $\beta$ between $a_3$ and $b_3$, and
$\overline{b_3b}$ (note that $a_3$ is a type-1 Steiner point defined by $a$ and $b_3$ is a type-1
Steiner point defined by $b$ in $G_{old}(\calM)$).

\begin{figure}[t]
\begin{minipage}[t]{0.49\linewidth}
\begin{center}
\includegraphics[totalheight=1.2in]{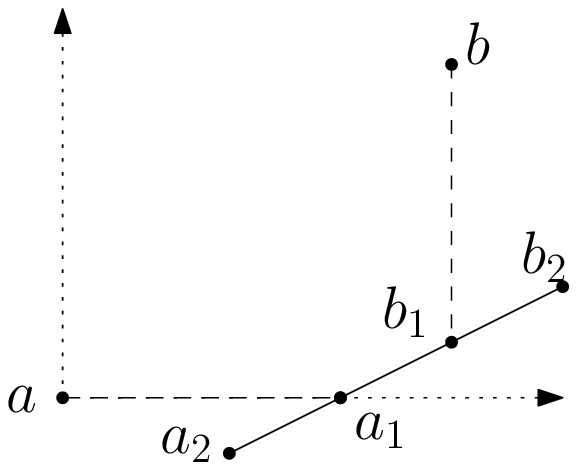}
\caption{\footnotesize Illustrating the proof of Lemma \ref{lem:40}:
$a_1$ is the rightward projection of $a$ on $\partial\calQ_{core}$ and $b_1$
is the downward projection of $b$ on $\partial\calQ_{core}$.}
\label{fig:case1}
\end{center}
\end{minipage}
\hspace*{0.04in}
\begin{minipage}[t]{0.49\linewidth}
\begin{center}
\includegraphics[totalheight=1.2in]{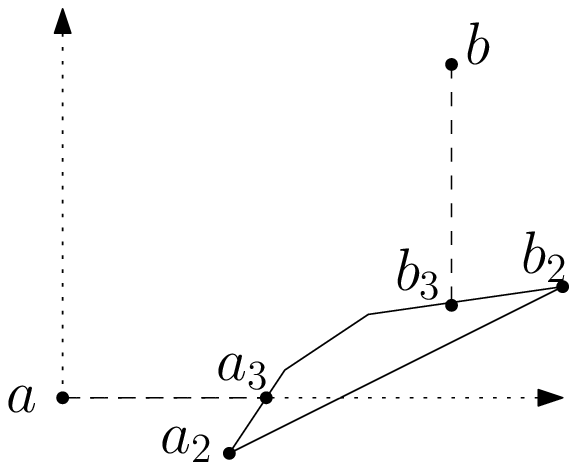}
\caption{\footnotesize Illustrating the proof of Lemma \ref{lem:40}:
$a_3$ is the rightward projection of $a$ on $\partial\calQ$ and $b_3$
is the downward projection of $b$ on $\partial\calQ$. Both $a_3$ and
$b_3$ must be on the same elementary curve $\beta$.}
\label{fig:case1new}
\end{center}
\end{minipage}
\vspace*{-0.15in}
\end{figure}

\item
{\bf Case 2}.
If no core of $\calQ_{core}$ intersects the interior of the rectangle
$R(a,b)$, then by the construction of $G$, there must be
a cut-line $l$ between $a$ and $b$ such that on $l$, $a$ defines a Steiner point
$a_h(l)$ and $b$ defines a Steiner point $b_h(l)$ (e.g., see
Fig.~\ref{fig:case2}). Thus, there is an
$xy$-monotone path from $a$ to $b$ in $G$ consisting of
$\overline{aa_h(l)}\cup
\overline{a_h(l)b_h(l)}\cup\overline{b_h(l)b}$. Below, we show that
there is also an $xy$-monotone path from $a$ to $b$ in
$G_{old}(\calM)$.

\begin{figure}[h]
\begin{minipage}[t]{0.49\linewidth}
\begin{center}
\includegraphics[totalheight=1.2in]{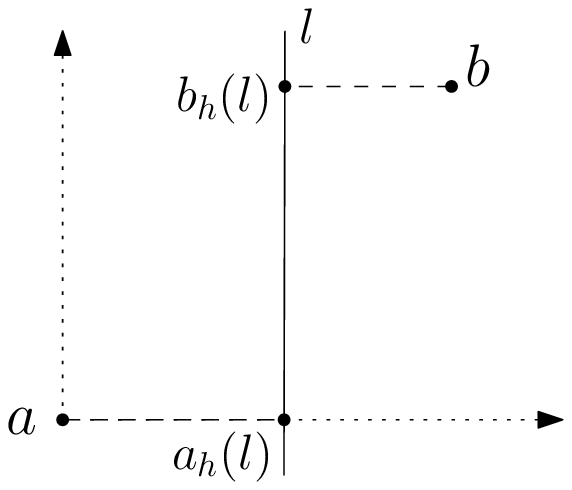}
\caption{\footnotesize Illustrating the proof of Lemma \ref{lem:40}:
$a_h(l)$ is the rightward projection of $a$ on $l$ and $b_h(l)$
is the leftward projection of $b$ on $l$.}
\label{fig:case2}
\end{center}
\end{minipage}
\hspace*{0.04in}
\begin{minipage}[t]{0.49\linewidth}
\begin{center}
\includegraphics[totalheight=1.2in]{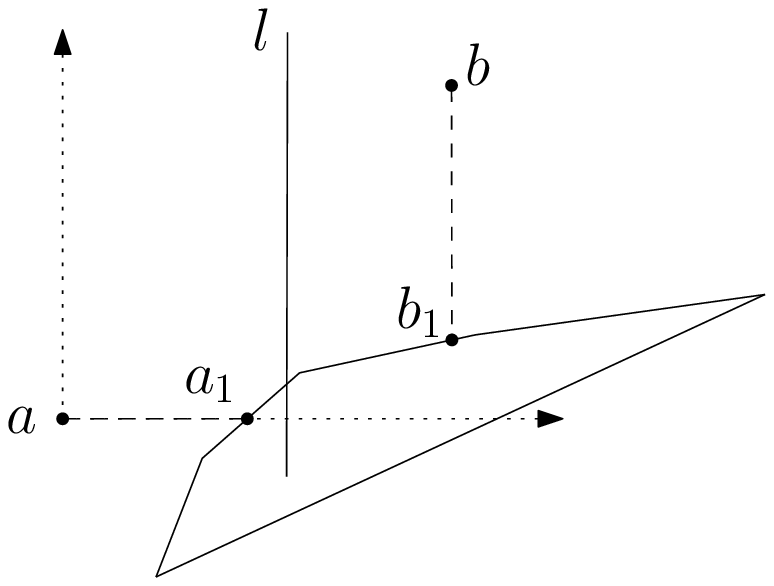}
\caption{\footnotesize Illustrating the proof of Lemma \ref{lem:40}:
$a_1$ is the rightward projection of $a$ on $\partial\calQ$ and $b_1$
is the downward projection of $b$ on $\partial\calQ$. Both $a_1$ and
$b_1$ must be on the same elementary curve $\beta$.}
\label{fig:case2new}
\end{center}
\end{minipage}
\vspace*{-0.15in}
\end{figure}

Since both $G$ and $G_{old}(\calM)$ are built on $\calV(\calQ)$, they have the same cut-line tree.
Hence, the cut-line $l$ still exists in $G_{old}(\calM)$. If
both $a$ and $b$ are horizontally visible to $l$, then they still
define Steiner points on $l$ and consequently there is also
an $xy$-monotone path from $a$ to $b$ in $G_{old}(\calM)$. Otherwise, we
assume that $a$ is not horizontally visible to $l$. Let $a_1$ be the
rightward projection of $a$ on $\partial\calQ$ (see
Fig.~\ref{fig:case2new}). Hence, $a_1$ must be
between $l$ and $a$. Let $\beta$ be the elementary curve that contains
$a_1$. Thus, $\beta$ intersects the lower edge of $R(a,b)$ at $a_1$.
Since $R(a,b)$ does not contain any point of $\calV(\calQ)-\{a,b\}$,
the two endpoints of $\beta$ are not in
$R(a,b)$ and thus the downward projection of $b$ on $\partial\calQ$,
denoted by $b_1$, must be on $\beta$ as well. By the
construction of $G_{old}(\calM)$, there must be an $xy$-monotone path
from $a$ to $b$ in $G_{old}(\calM)$ that is the concatenation of
$\overline{aa_1}$, the portion of $\beta$ between $a_1$ and $b_1$, and
$\overline{b_1b}$.
\end{enumerate}

The above arguments prove that a shortest path from $u$ to $v$ in
$G_{old}(\calM)$ corresponds to a shortest path from $u$ to $v$ in the
plane.

It remains to show
that a shortest $u$-$v$ path in $G_{old}(\calM)$ corresponds
to a shortest $u$-$v$ path in $G_E(\calM)$. This can be
seen easily since for any edge $e=\overline{pq}$ in $G_{old}(\calM)$,
if $e$ is not
in $G_E(\calM)$, then $e$ is ``divided" into many edges in $G_E(\calM)$ such
that their concatenation is a path from $p$ to $q$.

The lemma thus follows.
\end{proof}

The next lemma gives an algorithm for computing the graph $G_E(\calM)$.

\begin{lemma}\label{lem:50}
The graph $G_E(\calM)$ can be computed in
$O(n+h\log^{3/2}h2^{\sqrt{\log h}})$ time.
\end{lemma}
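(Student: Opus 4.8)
The plan is to mirror the construction algorithm for $G_E$ in Lemma \ref{lem:10}, but to be careful that the ambient space is now $\calQ$ (a region with $O(h)$ ``essential'' vertices but still $O(n)$ boundary vertices), and to absorb the $O(n)$ term up front via the triangulation and the extended corridor construction. First I would recall that the extended corridor structure of $\calP$ — the triangulation $\Tri(\calF)$, the 3-regular graph $G^3$, the junction triangles, corridors, hourglasses, bays, canals, and hence $\calM$, $\calQ$, the cores $\calQ_{core}$, and the corridor paths — can all be computed in $O(n+h\log^{1+\epsilon} h)$ time (or $O(n\log n)$ by a standard triangulation), which is within the claimed bound. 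From this, the set $V_e(\calQ)$ of extreme vertices and corridor path terminals, and then $\calV(\calQ)=\{p^l,p^r,p^u,p^d\mid p\in V_e(\calQ)\}\cup V_e(\calQ)$, are obtained: the four projections on $\partial\calQ$ of each of the $O(h)$ points of $V_e(\calQ)$ can be found in $O(n+h\log h)$ total time by building horizontal and vertical visibility decompositions of $\calQ$ (these decompositions have $O(n)$ size and are built in $O(n\log n)$ time) and doing $O(h)$ point locations. Along the way I record, for each point of $\calV(\calQ)$, the elementary curve of $\partial\calQ$ it lies on, so the $O(h)$ elementary curves and their endpoints are identified.

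Next I would build the type-1 Steiner points and the edges of $G_E(\calM)$ lying on $\partial\calQ$. Each point $p\in\calV(\calQ)$ contributes its four projections $p^l,p^r,p^u,p^d$ as type-1 Steiner points — $O(h)$ in all — computed in $O(n+h\log h)$ time via the same visibility decompositions, together with the adjacency edges $\overline{pq}$ for $q\in\{p^l,p^r,p^u,p^d\}$. Then, for each of the $O(h)$ elementary curves $\beta$, I sort the graph nodes lying on $\beta$ and connect consecutive ones by edges whose weights are the (easily computed) lengths of the corresponding sub-arcs of $\beta$; since there are $O(h)$ such nodes in total this costs $O(h\log h)$. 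I also add, for each corridor path, one edge joining its two terminals (both in $\calV(\calQ)$) weighted by the corridor-path length; there are $O(h)$ corridor paths, and their lengths are available from the corridor construction, so this is $O(h)$ extra work.

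The heart of the computation is the cut-line tree $T(\calM)$ on $\calV(\calQ)$ together with all type-2 and type-3 Steiner points and their incident edges. Building $T(\calM)$ (the recursive median-based partition of $O(h)$ points) takes $O(h\log h)$ time. The type-2/type-3 Steiner points are defined exactly as in Section \ref{sec:newgraph}: within each super-level, for the node $u$ at its highest level, each $p\in V(u)$ is projected onto the cut-lines $l(v)$ of all $v$ in $T_u(\calM)$ to which $p$ is horizontally visible. Visibility of $p$ to a vertical cut-line is tested in $O(1)$ time once $p^l$ and $p^r$ (already computed) are known; traversing each $T_u(\calM)$ in in-order gives the left-to-right sorted cut-lines so the edges of $G_E(\calM)$ through consecutive Steiner points of a fixed $p$ inside one super-level are produced in time linear in $|T_u(\calM)|$. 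Because $|\calV(\calQ)|=O(h)$, the tree $T(\calM)$ has $O(\log h)$ levels and $O(\sqrt{\log h})$ super-levels, so the number of type-3 Steiner points is $O(h\sqrt{\log h}\,2^{\sqrt{\log h}})$ and all these edges cost $O(h\sqrt{\log h}\,2^{\sqrt{\log h}})$ time. Finally, the edges along the cut-lines — connecting consecutive Steiner points on each cut-line that are mutually visible with respect to $\calQ$ — are produced by the same left-to-right vertical sweep as in Lemma \ref{lem:10}: maintain the maximal free intervals of the sweep line in a balanced tree (the relevant boundary here is $\partial\calQ$, which has $O(n)$ vertices, so the sweep has $O(n)$ event updates, each $O(\log n)$, contributing $O(n\log n)$), and at each cut-line spend $O(\log h)$ per consecutive Steiner-point pair to test co-interval membership; with $O(h\sqrt{\log h}\,2^{\sqrt{\log h}})$ such pairs this is $O(h\log^{3/2}h\,2^{\sqrt{\log h}})$. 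Summing all contributions gives $O(n+h\log^{3/2}h\,2^{\sqrt{\log h}})$, as claimed.

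The step I expect to require the most care is the cut-line sweep: one must sweep over the $\partial\calQ$-vertices (of which there are $O(n)$, not $O(h)$) to get the maximal free intervals right, yet the Steiner-point-pair visibility queries and the fractional-cascading-ready sorted lists are only $O(h\sqrt{\log h}\,2^{\sqrt{\log h}})$ in total, so the $O(n)$ part must be isolated into an additive $O(n\log n)$ term and not multiplied by the $2^{\sqrt{\log h}}$ factor. As in Lemma \ref{lem:10}, an equivalent alternative is to perform vertical ray-shootings from the Steiner points against $\partial\calQ$ using a ray-shooting structure built on $\calQ$; I would mention this as a substitute and omit the routine details.
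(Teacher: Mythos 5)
Your construction matches the paper's in overall structure, but there is a genuine flaw in the running-time accounting: two of your subroutines take $O(n\log n)$ time, and $O(n\log n)$ is \emph{not} $O(n+h\log^{3/2}h\,2^{\sqrt{\log h}})$ when $h$ is small relative to $n$ (e.g., $h=O(1)$ and $n$ large gives $n\log n$ versus $O(n)$). The whole point of Lemma~\ref{lem:50} --- and indeed of Section~\ref{sec:obstacle} --- is to keep the $n$-dependence strictly linear. Concretely, you state that the horizontal and vertical visibility decompositions of $\calQ$ ``are built in $O(n\log n)$ time,'' and your parenthetical that $O(n\log n)$ ``is within the claimed bound'' is incorrect. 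The paper instead builds these decompositions in $O(n+h\log^{1+\epsilon}h)$ time via Bar-Yehuda--Chazelle~\cite{ref:Bar-YehudaTr94}, which \emph{is} dominated by the target bound.

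The second problematic step is your vertical sweep for the cut-line edges: you charge $O(n)$ boundary-vertex events at $O(\log n)$ each, again accruing $O(n\log n)$. The paper avoids this by not sweeping at all: once the Steiner points on each cut-line of $T(\calM)$ are sorted (there are $O(h\sqrt{\log h}\,2^{\sqrt{\log h}})$ of them in total), it tests visibility of each consecutive pair by a single vertical ray-shooting query against the already-built visibility decomposition of $\calQ$, at $O(\log n)$ per query. This yields $O(h\sqrt{\log h}\,2^{\sqrt{\log h}}\cdot\log n)$ for this step, and the paper then closes the argument with the inequality $n+h\sqrt{\log h}\,2^{\sqrt{\log h}}\cdot\log n=O(n+h\log^{3/2}h\,2^{\sqrt{\log h}})$ (which holds by splitting on whether $\log n=O(\log h)$; in the complementary regime $h$ is so small that the $h$-term is $o(n)$). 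Your own last sentence already gestures at the ray-shooting alternative; the fix is simply to adopt it as the primary route, replace the $O(n\log n)$ decomposition construction by the $O(n+h\log^{1+\epsilon}h)$ one, and supply the final inequality. The rest of your argument --- the $O(h\log h)$ cut-line tree, the $O(1)$ horizontal-visibility test per cut-line using the stored $p^l,p^r$, the in-order traversal per super-level, the corridor-path edges, and the $O(h\log h)$ sorting of the $O(h)$ graph nodes along elementary curves --- is correct and mirrors the paper.
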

\begin{proof}
The algorithm for constructing $G_E(\calM)$ is similar to that for $G_E$ in Lemma \ref{lem:10}.
As a preprocessing, the free space $\calF$ can be triangulated in
$O(n+h\log^{1+\epsilon}h)$ time for any constant $\epsilon>0$
\cite{ref:Bar-YehudaTr94}, after which computing the extended corridor
structure, in particular, takes $O(n+h\log h)$ time
\cite{ref:ChenA11ESA,ref:ChenCo12arXiv,ref:ChenL113STACS}.
Consequently, we obtain
$\calQ$ and the vertex set $\calV(\calQ)$. All corridor paths are also available.

First, we compute the four projections of each point of
$\calV(\calQ)$ on $\partial\calQ$ as type-1 Steiner points, which can be
done after we compute the vertical and horizontal visibility
decompositions of $\calQ$ in $O(n+h\log^{1+\epsilon}h)$ time
\cite{ref:Bar-YehudaTr94}. The graph edges for connecting each point
of $\calV(\calQ)$ to its four projection points on $\partial\calQ$ can
be obtained as well.

Next, we compute the type-2 and type-3 Steiner points and the corresponding graph edges
connecting these Steiner points. Since $|\calV(\calQ)|=O(h)$, the
cut-line tree $T(\calM)$ can be computed in $O(h\log h)$ time.
Then, we determine the Steiner points on the cut-lines by traversing the tree $T(\calM)$ from top to bottom in a similar way as in Lemma \ref{lem:10}. Since we have obtained the four projection points for each point of $\calV(\calQ)$, computing all Steiner
points on the cut-lines takes $O(h\sqrt{\log h}2^{\sqrt{\log h}})$
time. Their corresponding edges can be computed in $O(h\log^{3/2} h2^{\sqrt{\log h}}\log n)$ time.

It remains to compute the graph edges of $G_E(\calM)$ connecting consecutive
graph nodes on each elementary curve of $\calQ$ and the graph edges connecting
every two consecutive Steiner points (if they are visible to each other)
on each cut-line.

On each connected component $Q$ of $\calQ$, we could compute a sorted
list of all Steiner points and the points of $\calV(Q)$
by sorting all these points and all obstacle
vertices of $Q$ along $\partial Q$. But that would take $O(n\log n)$ time
in total because there are $O(n)$ obstacle vertices on all components
of $\calQ$. To do better, we take the following approach. For
each elementary curve $\beta$, we sort all Steiner points on $\beta$ by either their
$x$-coordinates or $y$-coordinates. Since $\beta$ is $xy$-monotone,
such an order is also an order along $\beta$. Then, we merge
the Steiner points thus ordered with the obstacle vertices on $\beta$,
in linear time. Since there are $O(h)$ Steiner points on
$\partial\calQ$, it takes totally $O(n+h\log h)$ time to sort the
Steiner points and obstacle vertices on all elementary curves of
$\calQ$. After that, the edges of $G_E(\calM)$ on all elementary
curves can be computed immediately.

We now compute the graph edges on the cut-lines connecting consecutive
Steiner points. We first sort all Steiner points on each cut-line. This
sorting takes $O(h\log^{3/2} h2^{\sqrt{\log n}})$ time for all
cut-lines. For each pair of consecutive Steiner points $p$ and $q$ on
every cut-line, we determine whether $p$ is visible to $q$ by
checking whether the upward projections of $p$ and $q$ on
$\partial\calQ$ are equal, and these upward projections can be
performed in $O(\log n)$ time using the vertical visibility
decomposition of $\calQ$. Hence, the graph edges on all cut-lines are
computed in $O(h\sqrt{\log h}2^{\sqrt{\log h}}\cdot\log n)$ time.

In summary, we can compute the graph $G_E(\calM)$ in $O(n+h\sqrt{\log
h}2^{\sqrt{\log h}}\cdot\log n)$ time. Note that $n+h\sqrt{\log
h}2^{\sqrt{\log h}}\cdot\log n=O(n+h\log^{3/2} h2^{\sqrt{\log h}})$. The
lemma thus follows.
\end{proof}

Consider any two query points $s$ and $t$ in the ocean
$\calM$. We define the gateway sets
$V_g(s,G_E(\calM))$ for $s$ and $V_g(t,G_E(\calM))$ for $t$ on $G_E(\calM)$,
as follows. We only discuss $V_g(s,G_E(\calM))$;
$V_g(t,G_E(\calM))$ is similar. The definition of
$V_g(s,G_E(\calM))$ is very similar to that of $V_g(s,G_E)$, with only
slight differences. Specifically, $V_g(s,G_E(\calM))$ has two subsets
$V^1_g(s,G_E(\calM))$ and $V^2_g(s,G_E(\calM))$.
$V^2_g(s,G_E(\calM))$ is defined in the same way as
$V^2_g(s,G_E)$, and thus $|V^2_g(s,G_E(\calM))|=O(\sqrt{\log h})$.
$V^1_g(s,G_E(\calM))$ is defined with respect to the elementary
curves of $\calQ$, as follows. Let $q$ be the rightward projection point of $s$
on $\partial\calQ$. Suppose $q$ is on the elementary curve $\beta$ and
$p_1$ and $p_2$ are the two
nodes of $G_E(\calM)$ on $\beta$ adjacent to $q$. Then $p_1$ and $p_2$
are in $V^1_g(s,G_E(\calM))$, and for each $p\in
\{p_1,p_2\}$, we define a gateway edge from $s$ to $p$ consisting of
$\overline{sq}$ and the portion of $\beta$ between $q$ and $p$. Similarly,
for each of the leftward, upward, and downward projections of $s$ on
$\partial\calQ$, there are at most two gateways in $V^1_g(s,G_E(\calM))$.

The next lemma shows that the gateways of $V_g(s,G_E(\calM))$
``control'' the shortest paths from $s$ to all points of $\calV(\calQ)$.

\begin{lemma}\label{lem:60}
For any point $p$ of $\calV(\calQ)$, there exists a shortest path from
$s$ to $p$ using $G_E(\calM)$ that contains a gateway of $s$ in
$V_g(s,G_E(\calM))$.
\end{lemma}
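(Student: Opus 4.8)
The plan is to mirror the proof of Lemma~\ref{lem:20}, with the graph $G_{old}(\calM)$ playing the role of $G_{old}$, the boundary $\partial\calQ$ playing the role of $\partial\calP$, and the elementary curves of $\calQ$ playing the role of obstacle edges. First I would define the gateway set $V_g(s,G_{old}(\calM))$ of $s$ on $G_{old}(\calM)$ in the natural way: $V^1_g(s,G_{old}(\calM))$ consists, for each of the four projections $s^l,s^r,s^u,s^d$ of $s$ on $\partial\calQ$, of the (at most two) nodes of $G_{old}(\calM)$ adjacent to that projection on the elementary curve containing it, while $V^2_g(s,G_{old}(\calM))$ is defined recursively on the cut-line tree $T(\calM)$ exactly as $V^2_g(s,G_{old})$ was on $T(\calP)$. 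Since $G_{old}(\calM)$ and $G_E(\calM)$ have the same node set on each elementary curve, $V^1_g(s,G_E(\calM))=V^1_g(s,G_{old}(\calM))$. Using the analysis of \cite{ref:ChenSh00} applied to $\calQ_{core}$ together with Lemma~\ref{lem:40}, I would establish that there is a shortest path from $s$ to $p$ in the plane corresponding to a path in $G_{old}(\calM)$ that passes through some node $a\in V_g(s,G_{old}(\calM))$ (a ``via point''); and since every edge of $G_{old}(\calM)$ is subdivided in $G_E(\calM)$ into a path of the same length, this path is also a shortest path using $G_E(\calM)$.

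Next I would perform the same case analysis on the via point $a$ as in Lemma~\ref{lem:20}. If $a\in V^1_g(s,G_{old}(\calM))=V^1_g(s,G_E(\calM))$, or if $a\in V^2_g(s,G_{old}(\calM))\cap V^2_g(s,G_E(\calM))$, we are done. Otherwise every via point $a$ lies in $V^2_g(s,G_{old}(\calM))$ on a cut-line $l(v)$ that is not a relevant projection cut-line of $s$; assume without loss of generality that $a$ is to the northeast of $s$ and that $v$ is in the $i$-th super-level. Then there is a relevant projection cut-line $l(v')$ with $v'$ in the $i$-th super-level and $ln(v')>ln(v)$, and by the monotonicity of the level numbers of the right-side projection cut-lines of $s$ (Lemma~3.4 of \cite{ref:ChenSh00}, which holds verbatim since $T(\calM)$ is built the same way as $T(\calP)$), $l(v')$ lies between $s$ and $l(v)$. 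Let $q\in\calV(\calQ)$ be the point defining the Steiner point $a$ on $l(v)$; then $q\in V(u)$ for the top node $u$ of the $i$-th super-level, which is an ancestor of both $v$ and $v'$. Using the fact (re-derived for $G_{old}(\calM)$, since no via point is type-1) that $a$ is horizontally visible to the vertical line through $s$, I would conclude that $a$, hence $q$, is horizontally visible to $l(v')$, so $q$ defines a type-2/type-3 Steiner point $q_h(l(v'))$ on $l(v')$. The lowest Steiner point $b$ on $l(v')$ above $s$ is then a gateway in $V^2_g(s,G_E(\calM))$, $b$ is no higher than $q_h(l(v'))$, and the concatenation of the gateway edge from $s$ to $b$, $\overline{bq_h(l(v'))}$, and $\overline{q_h(l(v'))a}$ is an $xy$-monotone, hence shortest, path from $s$ to $a$ using $G_E(\calM)$ through $b$; prepending it to the portion of the path from $a$ to $p$ gives the claimed shortest $s$-$p$ path through a gateway of $V_g(s,G_E(\calM))$.

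The main obstacle is the first paragraph, namely establishing the via-point facts for $G_{old}(\calM)$ rather than for $G_{old}$: that some shortest $s$-$p$ path is captured by a gateway in $V_g(s,G_{old}(\calM))$, and that when all via points are type-2 each of them is horizontally visible to the vertical line through $s$. In \cite{ref:ChenSh00} these were proved with respect to the obstacles $\calP$; here ``visibility'' is taken with respect to $\calQ_{core}$ rather than $\calQ$, and shortest paths may run along corridor paths, so the facts do not transfer literally. Lemma~\ref{lem:40} is precisely the bridge for both issues, and what remains is to check that the purely combinatorial parts of the \cite{ref:ChenSh00} argument --- which concern only the cut-line tree and the local geometry inside the axis-parallel rectangle spanned by $s$ and a via point, where $\partial\calQ$ behaves exactly like an obstacle boundary --- carry over unchanged. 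Once this is in place, the rest is a routine adaptation of the proof of Lemma~\ref{lem:20}.
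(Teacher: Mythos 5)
Your proposal follows essentially the same route as the paper's proof: define $V_g(s,G_{old}(\calM))$ by analogy with $V_g(s,G_{old})$, invoke the ``insert $s$'' argument together with Lemma~\ref{lem:40} to obtain a via point $a$ on a shortest $s$-$p$ path in $G_{old}(\calM)$, and then transfer to $G_E(\calM)$ by the same relevant-projection-cut-line argument used in Lemma~\ref{lem:20}. The paper simply says ``by using a similar analysis as in the proof of Lemma~\ref{lem:20}'' and omits the details; your write-up supplies exactly those details, and correctly flags the one step that does not transfer verbatim (horizontal visibility of the type-2 via point to the vertical line through $s$, now with respect to $\calQ$ and in the presence of corridor-path edges), with Lemma~\ref{lem:40} as the bridge.
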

\begin{proof}
We define a gateway set $V_g(s,G_{old}(\calM))$ for $s$ on the graph
$G_{old}(\calM)$, as follows. The set $V_g(s,G_{old}(\calM))$
has two subsets $V^1_g(s,G_{old}(\calM))$ and
$V^2_g(s,G_{old}(\calM))$. The first subset $V^1_g(s,G_{old}(\calM))$
is exactly the same as $V^1_g(s,G_E(\calM))$, and the second subset
$V^2_g(s,G_{old}(\calM))$ contains gateways on the cut-lines of
$T(\calM)$, which are defined similarly as $V^2_g(s,G_{old})$ on
$G_{old}$ and $T(\calP)$, discussed in Section \ref{sec:pre}. Note
that the gateways in $V_g(s,G_{old}(\calM))$ are exactly those nodes
of $G_{old}(\calM)$ that are adjacent to $s$ if we ``insert'' $s$
into the graph $G_{old}(\calM)$ (similar arguments were used for
$V_g(s,G_{old})$ in \cite{ref:ChenSh00}). Hence, there exists a
shortest path from $s$ to $p$ using $G_{old}(\calM)$ that contains a
gateway of $s$ in $V_g(s,G_{old}(\calM))$.

Since the graph $G_E(\calM)$ is defined analogously as $G_E$ and
$G_{old}(\calM)$ is defined analogously as $G_{old}$, by using a similar
analysis as in the proof of Lemma \ref{lem:20}, we can show that there
exists a shortest path from $s$ to $p$ using $G_E(\calM)$ that contains a
gateway of $s$ in $V_g(s,G_E(\calM))$. We omit the details. The
lemma thus follows.
\end{proof}

Similar results also hold for the gateway set $V_g(t,G_E(\calM))$ of $t$.
We have the following corollary.

\begin{corollary}\label{cor:20}
If there exists a shortest \st\ path through a point of
$\calV(\calQ)$, then there exists a shortest \st\ path through a gateway of
$s$ in $V_g(s,G_E(\calM))$ and a gateway of $t$ in $V_g(t,G_E(\calM))$.
\end{corollary}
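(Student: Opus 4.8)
The plan is to obtain Corollary~\ref{cor:20} from Lemma~\ref{lem:60} and its symmetric version for $t$ in the same way that Corollary~\ref{cor:10} follows from Lemma~\ref{lem:20}, using Lemma~\ref{lem:40} to pass between paths in $G_E(\calM)$ and plane paths. Suppose a shortest \st\ path $\pi$ in the plane passes through some point $p\in\calV(\calQ)$, and split $\pi$ at $p$ into an $s$-to-$p$ subpath $\pi_1$ and a $p$-to-$t$ subpath $\pi_2$. Since every subpath of a shortest path is itself shortest, $\pi_1$ has the length of a shortest $s$-$p$ path and $\pi_2$ has the length of a shortest $p$-$t$ path, and $|\pi|=|\pi_1|+|\pi_2|$.

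First I would replace the $s$-side. Applying Lemma~\ref{lem:60} to the pair $(s,p)$ gives a shortest $s$-$p$ path ``using $G_E(\calM)$'' that contains a gateway $g_s\in V_g(s,G_E(\calM))$; it consists of the gateway edge from $s$ to $g_s$ followed by a path from $g_s$ to $p$ in $G_E(\calM)$. The gateway edge is a plane path by definition, and the $g_s$-to-$p$ part corresponds to a plane path of the same length, since every edge of $G_E(\calM)$ (a segment, a portion of an elementary curve, or a corridor-path edge) is realized by a plane subpath of equal length, which is exactly the mechanism used in the proof of Lemma~\ref{lem:40}. Hence I obtain a plane $s$-$p$ path $\pi_1'$ through $g_s$ whose length equals that of $\pi_1$. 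Then I would repeat the argument symmetrically with the counterpart of Lemma~\ref{lem:60} for $t$, obtaining a plane $p$-$t$ path $\pi_2'$ through a gateway $g_t\in V_g(t,G_E(\calM))$ whose length equals that of $\pi_2$.

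Concatenating $\pi_1'$ and $\pi_2'$ at $p$ then yields a plane \st\ path of length $|\pi_1'|+|\pi_2'|=|\pi_1|+|\pi_2|=|\pi|$, which is therefore a shortest \st\ path, and by construction it passes through the gateway $g_s$ of $s$ in $V_g(s,G_E(\calM))$ and the gateway $g_t$ of $t$ in $V_g(t,G_E(\calM))$. The argument is essentially a transcription of the $G_E$ case, so I do not expect a genuine obstacle; the one point I would state carefully is that $s$ and $t$ are not vertices of $\calV(\calQ)$, so Lemma~\ref{lem:40} is not invoked on $\pi_1'$ or $\pi_2'$ as a whole but only after peeling off the gateway edges, and the fact that a shortest $s$-$p$ path using $G_E(\calM)$ realizes the plane shortest $s$-$p$ length is used through the standard ``insert $s$ as a graph node'' argument referenced in the proof of Lemma~\ref{lem:60}.
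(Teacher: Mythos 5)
Your proof is correct and follows essentially the same route as the paper: the paper states Corollary~\ref{cor:20} as an immediate consequence of Lemma~\ref{lem:60} and its symmetric analogue for $t$ (mirroring how Corollary~\ref{cor:10} follows from Lemma~\ref{lem:20}), and your write-up just makes explicit the split at $p$, the two applications of the lemma, and the length-preserving realization of graph paths as plane paths. The care you take about not invoking Lemma~\ref{lem:40} directly (since $s$, $t$ and the gateways need not lie in $\calV(\calQ)$) and instead peeling off the gateway edges is the right way to fill in the implicit step.
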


The following lemma gives an algorithm for computing the gateways.

\begin{lemma}\label{lem:70}
With a preprocessing of $O(n+h\cdot \log^{3/2} h \cdot 2^{\sqrt{\log h}})$ time and
$O(n+h\cdot \sqrt{\log h} \cdot 2^{\sqrt{\log h}})$ space,
the gateway sets $V_g(s,G_E(\calM))$
and $V_g(t,G_E(\calM))$ can be computed in
$O(\log n)$ time for any two query points $s$ and $t$ in $\calM$.
\end{lemma}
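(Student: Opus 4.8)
The plan is to follow the proof of Lemma~\ref{lem:30} almost verbatim, with $\calQ$ in the role of $\calP$, the boundary $\partial\calQ$ in the role of $\partial\calP$, and the cut-line tree $T(\calM)$ in the role of $T(\calP)$; the single genuinely new point is that $V^1_g(s,G_E(\calM))$ is now defined through the elementary curves of $\calQ$ instead of through individual obstacle edges. I will describe only the computation of $V_g(s,G_E(\calM))$, since $V_g(t,G_E(\calM))$ is symmetric.

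For the preprocessing I would reuse, from the construction of $G_E(\calM)$ in Lemma~\ref{lem:50}, the horizontal and vertical visibility decompositions of $\calQ$ and, in addition, equip each of them with a point-location structure~\cite{ref:EdelsbrunnerOp86,ref:KirkpatrickOp83}; since $\partial\calQ$ has $O(n)$ vertices this costs $O(n+h\log^{1+\epsilon}h)$ time and $O(n)$ extra space, and it answers in $O(\log n)$ time the four projections $\{s^l,s^r,s^u,s^d\}$ of any point $s\in\calM$ on $\partial\calQ$. Next, for every elementary curve $\beta$ of $\calQ$ I would store, in the $xy$-order along $\beta$, the list of $G_E(\calM)$-nodes on $\beta$ (its two endpoints, which lie in $\calV(\calQ)$, together with the type-1 Steiner points on $\beta$), and tag every point-location cell of $\partial\calQ$ with the elementary curve carrying it; these ordered lists are a byproduct of Lemma~\ref{lem:50} and, via the per-curve sort-and-merge used there, cost only $O(n+h\log h)$ time. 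Finally, exactly as in Lemma~\ref{lem:30}, I would precompute $q^u$ and $q^d$ for every Steiner point $q$ on the cut-lines of $T(\calM)$, sort the Steiner points on each cut-line by $y$-coordinate, and build a fractional-cascading structure~\cite{ref:ChazelleFr86} over these sorted lists along $T(\calM)$, so that the lists at all nodes on any root-to-leaf path of $T(\calM)$ can be searched in $O(\log h)$ total time. Because $|\calV(\calQ)|=O(h)$ and there are $O(h\sqrt{\log h}2^{\sqrt{\log h}})$ Steiner points on the cut-lines of $T(\calM)$, all this fits within $O(n+h\log^{3/2}h2^{\sqrt{\log h}})$ time and $O(n+h\sqrt{\log h}2^{\sqrt{\log h}})$ space.

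Given a query point $s\in\calM$, I would first locate $\{s^l,s^r,s^u,s^d\}$ on $\partial\calQ$ in $O(\log n)$ time. To build $V^1_g(s,G_E(\calM))$, for each such projection $q$ I would read off from its point-location cell the elementary curve $\beta$ through $q$ and then binary-search the stored node list of $\beta$ to obtain the two $G_E(\calM)$-nodes adjacent to $q$; these contribute the $O(1)$ gateways of $V^1_g$, and the associated gateway edges ($\overline{sq}$ followed by the appropriate sub-arc of $\beta$) are immediate. To build $V^2_g(s,G_E(\calM))$, I would descend $T(\calM)$ from its root, using $s^l$ and $s^r$ to test the horizontal visibility of $s$ at each encountered cut-line, and thereby collect in $O(\log h)$ time the $O(\sqrt{\log h})$ relevant projection cut-lines of $s$ (at most two per super-level), which lie at nodes along a single root-to-leaf path; then, with the fractional-cascading structure, I would locate $s_h(l)$ among the Steiner points of each such cut-line $l$ in $O(\log h)$ total time, and for the Steiner point $q$ immediately above (resp.\ below) $s_h(l)$ I would decide in $O(1)$ time whether $q$ is visible to $s_h(l)$ using the stored $q^u$ and $q^d$. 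Summing, $V_g(s,G_E(\calM))$ is produced in $O(\log n)+O(\log h)=O(\log n)$ time.

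The part I expect to require the most care is keeping the preprocessing within the stated bounds while still permitting an $O(\log n)$ query: one must avoid sorting the $\Theta(n)$ obstacle vertices of $\partial\calQ$ as a single global list, which is done by sorting only the $O(h)$ Steiner points per elementary curve and merging with that curve's obstacle vertices in linear time; and the fractional-cascading structure remains the only component through which the $4^{\sqrt{\log h}}$-type factors enter, just as in Lemma~\ref{lem:30}. Correctness is not at issue here, since the output is exactly the set defined before the lemma, whose ``controlling'' property was established in Lemma~\ref{lem:60}; and the query bound is $O(\log n)$ rather than $O(\log h)$ solely because of the point location on $\partial\calQ$, every $T(\calM)$-related step running in $O(\log h)$ time.
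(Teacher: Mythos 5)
Your proposal is correct and follows essentially the same route as the paper's proof: both reuse the visibility decompositions of $\calQ$ with point location to find the four projections in $O(\log n)$ time, both handle $V^1_g$ by binary searching a sorted node list stored per elementary curve, and both compute $V^2_g$ by storing $q^u$ and $q^d$ for each cut-line Steiner point and applying fractional cascading along $T(\calM)$. The only inaccuracy is the passing remark that the ``$4^{\sqrt{\log h}}$-type factors'' enter through the fractional-cascading structure --- this lemma and Lemma~\ref{lem:30} only involve $2^{\sqrt{\log h}}$-type factors, with the $4^{\sqrt{\log h}}$ factors arising later (e.g., in Lemma~\ref{lem:80}) from computing all-pairs shortest-path information in the graph; but this is a side comment and does not affect the argument.
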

%
%DANNY
%
% In the proof of this lemma, point-location operations for finding the 4 projections of $s$
% take $O(\log n)$ time already.  But the lemma states that the time is $O(\log h)$.
% This is not true, unless we use $\calQ_{core}$ for point-location data structure.  But
% we do not say how to use $\calQ_{core}$ here ...
%
% If using $\calQ_{core}$, we could get a result for the two-point query case as that
% for the one-point query case: $O(\log h)$ time only for the path LENGTH if all obstacles
% are convex.  Maybe I remember this result incorrectly (or do we have such a result before)?
%
% I did not change the $O(\log h)$ time above, in case this lemma is to be modified.
%
%HAITAO
%the logh in the lemma statement was a typo, it should be logn
\begin{proof}
The algorithm is similar to that for Lemma \ref{lem:30}; we only point
out the differences. We discuss our algorithm only for computing
$V_g(s,G_E(\calM))$; the case for $V_g(t,G_E(\calM))$ is similar.

To compute $V^1_g(s,G_E(\calM))$, we build the horizontal and vertical
visibility decompositions of $\calQ$.
Then, the four projections of $s$ on $\partial\calQ$ can be determined
in $O(\log n)$ time. Consider
any such projection $p$ of $s$. Suppose $p$ is on an elementary curve $\beta$. We need to
determine the two nodes of $G_E(\calM)$ on $\beta$ adjacent to $p$,
which are gateways of $V^1_g(s,G_E(\calM))$.
We maintain a sorted list of all nodes of $G_E(\calM)$ on $\beta$, and
do binary search to find these two gateways of $s$ on $\beta$ in
this sorted list by using only the $y$-coordinates (or the
$x$-coordinates) of the nodes since $\beta$ is
$xy$-monotone. Also, since $\beta$ is $xy$-monotone, for any two points
$q$ and $q'$ on $\beta$, the length of the portion of $\beta$ between $q$ and
$q'$ is equal to the length of $\overline{qq'}$. Hence, after these
two gateways of $s$ on $\beta$ are found, the lengths of the two gateway
edges from $s$ to them can be
computed in constant time. Since $V^1_g(s,G_E(\calM))$ has $O(1)$
%gateways, $V^1_g(s,G_E(\calM))$ can be computed in $O(\log h)$ time.
%
%DANNY
%
% Using the algorithm as given above, it should be $O(\log n)$ time, not $O(\log h)$ time.
%HAITAO
%it should be log n
gateways, $V^1_g(s,G_E(\calM))$ can be computed in $O(\log n)$ time.

To compute $V^2_g(s,G_E(\calM))$, we take the same approach as for Lemma \ref{lem:30}.
%Consider any cut-line $l$. For each intersection $q$ between $l$ and the boundary
%of $\calQ$, if $q$ is visible to at least one Steiner point on $l$, we
%call $q$ an {\em interesting intersection} point. Hence, the total
%number of intersecting intersection points on all cut-lines is
%asymptotically equal to the total number of the Steiner points on all
%cut-lines, i.e., $O(h\sqrt{\log h}2^{\sqrt{\log h}})$. Computing all
%these interesting Steiner points takes $O(h\sqrt{\log h}2^{\sqrt{\log
%h}}\log n)$ time by using the vertical visibility decomposition of $\calQ$.
In the preprocessing, for every cut-line $l$, we maintain a sorted list of all Steiner points
%and interesting intersection points
on $l$, and associate with each such Steiner point its upward and downward projections
on $\partial\calQ$.
Computing these projections for each Steiner point takes $O(\log n)$ time.
Then we build a fractional cascading data structure \cite{ref:ChazelleFr86} for
the sorted lists of Steiner points on all cut-lines along the cut-line tree
$T(\calM)$. Using this fractional cascading data structure, the gateway set
$V^2_g(s,G_E(\calM))$ can be computed in $O(\log h)$ time.

The preprocessing takes totally $O(n+h\sqrt{\log h}2^{\sqrt{\log h}}\log n)$ time and
$O(n+h\sqrt{\log h}2^{\sqrt{\log h}})$ space.
Note that
$n+h\sqrt{\log h}2^{\sqrt{\log h}}\log n=O(n+h\log^{3/2}h2^{\sqrt{\log
h}})$.
The lemma thus follows.
\end{proof}

We summarize our algorithm in Lemma
\ref{lem:80} below for the case when both query points are in $\calM$.

\begin{lemma}\label{lem:80}
With a preprocessing of $O(n+h^2\log^2 h 4^{\sqrt{\log h}})$ time and
$O(n+h^2\log h4^{\sqrt{\log h}})$ space,
each two-point query can be answered in $O(\log n)$
time for any two query points in the ocean $\calM$.
\end{lemma}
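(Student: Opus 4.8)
The plan is to mirror the proof of Theorem~\ref{theo:10}, using the graph $G_E(\calM)$ built on $\calV(\calQ)$ in place of the graph $G_E$ built on the obstacle vertices of $\calP$. In the preprocessing I would first construct $G_E(\calM)$ via Lemma~\ref{lem:50} in $O(n+h\log^{3/2}h\,2^{\sqrt{\log h}})$ time. Next, for every node $v$ of $G_E(\calM)$ I would compute a single-source shortest path tree in $G_E(\calM)$ rooted at $v$, and store a shortest-path-length table so that the $G_E(\calM)$-distance between any two of its nodes is available in $O(1)$ time. Since $G_E(\calM)$ has $N=O(h\sqrt{\log h}\,2^{\sqrt{\log h}})$ nodes and $O(N)$ edges, running a shortest path computation from each node costs $O(N^2\log N)=O(h^2\log^2 h\,4^{\sqrt{\log h}})$ time, and the table needs $O(N^2)=O(h^2\log h\,4^{\sqrt{\log h}})$ space. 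I would also perform the preprocessing of Lemma~\ref{lem:70} for computing gateway sets, and build the horizontal and vertical visibility decompositions of $\calQ$ with point-location structures on them (in $O(n+h\log^{1+\epsilon}h)$ time). The dominant terms give the claimed $O(n+h^2\log^2 h\,4^{\sqrt{\log h}})$ preprocessing time and $O(n+h^2\log h\,4^{\sqrt{\log h}})$ space.

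For a query with $s,t\in\calM$, I would first compute the four projections of each of $s$ and $t$ on $\partial\calQ$ in $O(\log n)$ time using the visibility decompositions, and then test in $O(1)$ additional time whether a \emph{trivial} shortest \st\ path exists (defined through these projection points, in the spirit of Section~\ref{sec:pre}); if so, it is reported directly. Otherwise there is a shortest \st\ path passing through a point of $\calV(\calQ)$ --- the analogue in $\calM$ of the statement of Section~\ref{sec:pre} that a non-trivial shortest path passes through an obstacle vertex; it holds because by \cite{ref:ChenA11ESA,ref:ChenCo12arXiv,ref:ChenL113STACS} there is a shortest \st\ path contained in $\calM'$ (the union of $\calM$ and all corridor paths), which, with respect to $\calQ_{core}$ and $\calV(\calQ)$, behaves exactly as an ordinary shortest path does in a polygonal domain. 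I would then compute $V_g(s,G_E(\calM))$ and $V_g(t,G_E(\calM))$ in $O(\log n)$ time by Lemma~\ref{lem:70}, form the gateway graph on $\{s,t\}\cup V_g(s,G_E(\calM))\cup V_g(t,G_E(\calM))$ whose cross edges are weighted by $O(1)$ table look-ups, and compute a shortest \st\ path in it; since it has $O(\sqrt{\log h})$ nodes and $O(\log h)$ edges, a naive shortest path computation takes $O(\log h)=O(\log n)$ time.

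Correctness of the reported length would follow by combining Corollary~\ref{cor:20} (a non-trivial shortest \st\ path can be routed through a gateway of $s$ and a gateway of $t$) with Lemma~\ref{lem:40} (a shortest path between two points of $\calV(\calQ)$ in $G_E(\calM)$ has the same length as a shortest path in the plane); together with the fact that the gateway edges are themselves shortest sub-paths, the path extracted from the gateway graph is a genuine shortest \st\ path. To output an actual path, I would walk the stored shortest path tree of $G_E(\calM)$ from the chosen gateway of $s$, expanding each graph edge that represents a corridor path or a sub-arc of an elementary curve into the corresponding curve; this takes time linear in the number of edges of the output. The total query time is $O(\log n)$.

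I expect the main obstacle to be the second step of the query: precisely formulating and justifying the trivial/non-trivial dichotomy for two points in $\calM$ --- namely, that whenever no shortest \st\ path uses a point of $\calV(\calQ)$, a trivial shortest \st\ path exists and is detectable from the eight projection points in $O(\log n)$ time --- and making sure the interplay among this dichotomy, the $\calM'$-containment of shortest paths, the gateway control of Corollary~\ref{cor:20}, and the plane-equivalence of Lemma~\ref{lem:40} is airtight, including the subtle point that a shortest path may run along corridor paths rather than through vertices of $\calV(\calQ)$.
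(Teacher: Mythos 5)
Your proposal matches the paper's proof essentially step for step: build $G_E(\calM)$ via Lemma~\ref{lem:50}, compute all-pairs shortest paths and a length table, do the Lemma~\ref{lem:70} preprocessing, and at query time check for a trivial (elementary-curve-based) shortest path before falling back to the gateway graph justified by Corollary~\ref{cor:20} and Lemma~\ref{lem:40}. The only detail you gloss over is that, to expand a graph edge along an elementary curve into the actual obstacle edges when reporting a path, the paper explicitly stores an $O(n)$-space edge list per elementary curve in the preprocessing; this is implicit in your space accounting but worth stating.
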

\begin{proof}
In the preprocessing, we build the graph $G_E(\calM)$, and for each node
$v$ of $G_E(\calM)$, compute a shortest path tree in $G_E(\calM)$
from $v$. We maintain a shortest path length table such that for
any two nodes $u$ and $v$ in $G_E(\calM)$, the shortest path length
between $u$ and $v$ can  be found in $O(1)$ time. Since
$G_E(\calM)$ has $O(h\sqrt{\log h}2^{\sqrt{\log h}})$ nodes and edges, computing and maintaining
all shortest path trees in $G_E(\calM)$ take $O(h^2\log h4^{\sqrt{\log h}})$ space and
$O(h^2\log^2 h4^{\sqrt{\log h}})$ time.
%
%DANNY
%
% A straightforward use of Dijkstra's algorithm to compute all these shortest path
% trees in $G_E(\calM)$ would take altogether $O(h^2 \log^2 h 4^{\sqrt{\log h}})$ time,
% not $O(h^2 \log^{3/2} h 4^{\sqrt{\log h}})$ time.  Maybe some explanation is needed
% on why a factor of $O(log^{1/2} h)$ smaller is possible here for the graph $G_E(\calM)$?
%
% Note that the graph $G_E(\calM)$, unlike the graph $G_E$, contains some edges for
% corridor paths.  Would the above $O(h^2 \log^{3/2} h 4^{\sqrt{\log h}})$ time still
% hold for $G_E(\calM)$?
%
%HAITAO
%again, all 3/2 should be 2

To report an actual shortest path in the plane in time linear to the number of edges of
the output path, we need to maintain additional information. Consider an elementary curve
$\beta$ of $\calQ$. Let $u$ and $v$ be two consecutive nodes of $G_E(\calM)$ on $\beta$.
By our definition of $G_E(\calM)$, there is an edge $(u,v)$
in $G_E(\calM)$. If the edge $(u,v)$ is contained in our output path, we
need to report all obstacle vertices and edges of $\beta$ between $u$ and
$v$. For this, on each elementary curve $\beta$, we explicitly
maintain a list of obstacle edge between each pair of consecutive nodes of
$G_E(\calM)$ along $\beta$. Since the total number of nodes of $G_E(\calM)$
on all elementary curves is $O(h)$ and the total number of obstacle
vertices of $\calQ$ is $O(n)$, maintaining such {\em edge lists} for
all elementary curves takes $O(n)$ space.

In addition, we also perform the preprocessing for Lemma \ref{lem:70}.

The overall preprocessing takes $O(n+h^2\log^{2} h 4^{\sqrt{\log h}})$ time and
$O(n+h^2\log h4^{\sqrt{\log h}})$ space.

Now consider any two query points $s$ and $t$ in $\calM$. As
for Theorem \ref{theo:10}, we first check whether there
exists a trivial shortest \st\ path. But trivial shortest paths
here are defined with respect to the elementary curves of $\calQ$ instead of the obstacle
edges of $\calP$. For example, consider $s^r$ (i.e., the rightward projection of
$s$ on $\partial\calQ$) and $t^d$. If $\overline{ss^r}$
intersects $\overline{tt^d}$, then there is a trivial shortest \st\
path $\overline{sq}\cup \overline{qt}$, where $q=\overline{ss^r}\cap
\overline{tt^d}$; otherwise, if $s^r$ and $t^d$ are both on the same
elementary curve $\beta$ of $\calQ$, then there is a trivial shortest
\st\ path which is the concatenation of $\overline{ss^r}$, the
portion of $\beta$ between $s^r$ and $t^d$, and $\overline{t^dt}$.
Similarly, trivial shortest \st\ paths are also defined by other
projections of $s$ and $t$ on $\partial\calQ$.

We can determine whether there exists a trivial shortest \st\ path in
$O(\log n)$ time by using the vertical and horizontal decompositions of
$\calQ$ to compute the four projection points of $s$ and
$t$ on $\partial\calQ$. If yes, we find such a shortest path in
additional time linear to the number of edges of the output path. Note
that for the case, e.g., when $s^r$ and $t^d$ are both on the same elementary curve
$\beta$, the output path may not be of $O(1)$ size since there
may be multiple obstacle vertices on the portion of $\beta$ between $s^r$
and $t^d$; but we can still output such a path in linear time by using the
edge lists we maintain on each elementary curve.
Below, we assume there is no trivial shortest \st\ path.

By using the cores of $\calQ$ in the proof of Lemma \ref{lem:40} and
a similar analysis as in \cite{ref:ChenSh00}, we can show that
there must be a shortest \st\
path that contains at least one point of $\calV(\calQ)$. By
Corollary \ref{cor:20}, there exists a shortest \st\ path through a
gateway of $s$ and a gateway of $t$ in $G_E(\calM)$. Using Lemma \ref{lem:70}, we
compute the two gateway sets $V_g(s,G_E(\calM))$ and
$V_g(t,G_E(\calM))$. By building a gateway graph for $s$ and $t$ as
%in Section \ref{sec:pre} and
in Theorem \ref{theo:10}, we can compute the length of a
shortest \st\ path in $O(\log h)$ time since
$|V_g(s,G_E(\calM))|=O(\sqrt{\log h})$,
$|V_g(t,G_E(\calM))|=O(\sqrt{\log h})$, and thus
the gateway graph has $O(\sqrt{\log h})$ nodes and $O(\log h)$ edges.
An actual path can then be reported in additional time linear
to the number of edges of the output path, by using the shortest path
trees of $G_E(\calM)$ and the edge lists maintained on the elementary
curves, as discussed above.
The lemma thus follows.
\end{proof}

%
%DANNY-NEW
%
% I thought about adding the following corollary, but it is no better than the
% statement for the general query case.  For CONVEX obstacles, is there anything
% better that we could say?
%
%\begin{corollary}\label{cor:convex}
%When $\calP$ has $h$ convex obstacles with totally $n$ vertices,
%we can build a data structure of size $O(n+h^2\log h4^{\sqrt{\log h}})$
%in $O(n+h^2\log^2 h 4^{\sqrt{\log h}})$ time, which allows to answer
%each two-point query in $O(\log n)$ time.
%\end{corollary}
%\begin{proof}
%When the obstacles in $\calP$ are all convex, the entire free space is the ocean $\calM$.
%The result thus follows from Lemma \ref{lem:80}.
%\end{proof}

\subsection{The General Queries}
\label{subsec:general}

In this section, we show how to handle the general queries in which at least one query point
is not in $\calM$. Without loss of generality, we assume that $s$ is in
a bay or a canal, denoted by $B$. We first focus on the case when
$B$ is a bay. The case when $B$ is a canal can be handled by similar
techniques although it is a little more complicated since each canal
has two gates. The point $t$ can be in $B$, $\calM$,
or another bay or canal, and we discuss these three cases below.
Let $g$ denote the gate of $B$.

%
%DANNY-NEW
%
% I add a little overview before things get started.
%
As an overview of our approach, we characterize the different possible ways that a
shortest \st\ path may cross the gate $g$, show how to find
such a possible path for each way, and finally compute all possible ``candidate"
paths and select the one with the smallest path length as our solution.

\subsubsection{The Query Point $t$ is in $B$}

When the query point $t$ is in $B$, we have the following lemma.

\begin{lemma}\label{lem:90}
If $B$ is a bay and $t\in B$, then there exists a shortest \st\ path in $B$.
\end{lemma}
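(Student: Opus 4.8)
The plan is to show that confining an $\st$ path to $B$ costs nothing: any $\st$ path that leaves $B$ can be shortcut back into $B$ without increasing its $L_1$ length. First I would record the one topological fact that drives everything. Since $B$ is a simple polygon whose boundary is the union of the gate $g$ (a single segment shared with $\calM$) and a connected portion $\gamma$ of a single obstacle's boundary, and obstacles are solid and pairwise disjoint, a path lying in the free space and starting at $s\in B$ can leave $\overline{B}$ only by passing through $g$; it can never cross $\gamma$. Equivalently, the free space immediately adjacent to $B$ is separated from the rest of the free space by $g$ alone.

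Next I would take an arbitrary $\st$ path $\pi$ in the free space. If $\pi\subseteq\overline{B}$ there is nothing to do. Otherwise let $p$ be the first point of $\pi$ lying on $g$ and $q$ the last point of $\pi$ lying on $g$; both exist because $\pi$ must cross $g$ to leave and re-enter $\overline{B}$, and $p\neq s$, $q\neq t$ since a query point lying in a bay is not on that bay's gate. Using the topological fact above, the subpath $\pi[s,p]$ and the subpath $\pi[q,t]$ are each contained in $\overline{B}$ (if either strayed outside $\overline{B}$ it would have to meet $g$ strictly before $p$, resp. strictly after $q$). I would then replace the middle portion $\pi[p,q]$ of $\pi$ by the segment $\overline{pq}$, which lies on $g$, hence inside $\overline{B}$ and in the free space. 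Because $p$ and $q$ both lie on the single segment $g$, the $L_1$ length of $\overline{pq}$ equals the $L_1$ distance between $p$ and $q$, which is at most the $L_1$ length of $\pi[p,q]$. Hence $\pi'=\pi[s,p]\cup\overline{pq}\cup\pi[q,t]$ lies in $\overline{B}$ and has $L_1$ length at most that of $\pi$.

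Combining the two cases, for every free-space $\st$ path $\pi$ there is an $\st$ path inside $\overline{B}$ of length at most that of $\pi$. Since $B$ is a simple polygon, a shortest $\st$ path inside $B$ exists; by the above it is no longer than any free-space $\st$ path, so it is in fact a globally shortest $\st$ path, which proves the lemma. The only step that needs genuine care is the first, topological one — making precise that ``first/last crossing of $g$'' really splits $\pi$ into an inside part, an outside part, and an inside part; once that is nailed down, the length comparison is immediate from the triangle-inequality property of the $L_1$ metric, and this is precisely why the bay case (one gate) is easy compared with the canal case (two gates).
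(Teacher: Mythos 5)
Your proof is correct and takes essentially the same approach as the paper: identify the first and last points $p,q$ where $\pi$ meets the gate $g$, and replace $\pi[p,q]$ by $\overline{pq}\subseteq g$, using that a segment is an $L_1$ shortest path. You spell out the topological step (that $\pi[s,p]$ and $\pi[q,t]$ stay in $\overline{B}$) a bit more carefully, and you start from an arbitrary path rather than a shortest one, but the core argument is identical.
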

\begin{proof}
Let $\pi$ be any shortest \st\ path in the plane. If $\pi$ is in $B$, then we are
done. Otherwise, $\pi$ must intersect the only gate $g$ of $B$; further, since both $s$ and
$t$ are in $B$, if $\pi$ exits from $B$ (through $g$), then it must enter $B$ again
(through $g$ as well).  Let $p$ be the first point on $g$ encountered as
going from $s$ to $t$ along $\pi$ and let $q$ be the last such
point on $g$. Let $\pi'$ be the \st\ path obtained by replacing the portion of $\pi$
between $p$ and $q$ by $\overline{pq}\subseteq g$. Note that $\pi'$ is in $B$.
Since $\overline{pq}$ is a
shortest path from $p$ to $q$, $\pi'$ is also a shortest \st\ path.
The lemma thus follows.
\end{proof}

To handle the case of $t\in B$, in the preprocessing,
we build a data structure for two-point Euclidean shortest path queries
in $B$, denoted by $\calD(B)$,
in $O(|B|)$ time and space \cite{ref:GuibasOp89}.
Since a Euclidean shortest path in any simple polygon is also an $L_1$
shortest path and $B$ is a simple polygon, for $t\in B$, we can use $\calD(B)$ to
answer the shortest \st\ path query in $B$ in $O(\log n)$ time.

\subsubsection{The Query Point $t$ is in $\calM$}

If the query point $t$ is in $\calM$, then a shortest \st\ path must cross the gate $g$ of
$B$. A main difficulty for answering the general queries is to deal with this case.
More specifically, we already have a graph $G_E(\calM)$ on $\calM$,
and our goal is to design a mechanism to connect the bay $B$ with
$G_E(\calM)$ through the gate $g$, so that it can
capture the shortest path information in the union of $B$ and $\calM'$ (recall that $\calM'$ is the union of $\calM$ and all corridor paths).

%\begin{figure}[t]
%\begin{minipage}[t]{\linewidth}
%\begin{center}
%\includegraphics[totalheight=1.2in]{funnel.eps}
%\caption{\footnotesize The dashed lines are Euclidean shortest paths
%from $s$ to $a_1$ and $a_2$ in the bay $B$ whose gate is
%$g=\overline{a_1a_2}$.}
%\label{fig:funnel}
%\end{center}
%\end{minipage}
%\vspace*{-0.15in}
%\end{figure}

We begin with some
observations on how a shortest \st\ path may cross $g$. Without
loss of generality, we assume that $g$ has a positive slope and
the interior of $B$ on $g$ is above $g$.
Let $a_1$ and $a_2$ be the two endpoints of $g$ such that $a_1$
is higher than $a_2$ (see Fig.~\ref{fig:funnel2}).
Let $\pi(s,a_1)$ (resp., $\pi(s,a_2)$) be the Euclidean shortest path in $B$ from $s$ to
$a_1$ (resp., $a_2$).
%and $\pi(s,a_2)$ be the Euclidean shortest path in $B$ from $s$ to $a_2$.
Let $z'$ be the farthest point from $s$ on $\pi(s,a_1)\cap \pi(s,a_2)$
(possibly $z'=s$). Let $\pi(z',a_1)$ (resp., $\pi(z',a_2)$)
be the subpath of $\pi(s,a_1)$ (resp., $\pi(s,a_2)$)
between $z'$ and $a_1$ (resp., $a_2$).
It is well known that both $\pi(z',a_1)$ and
$\pi(z',a_2)$ are convex chains \cite{ref:GuibasLi87,ref:LeeEu84}, and
the region enclosed by $\pi(z',a_1)$, $\pi(z',a_2)$, and $g$ in $B$ is a
``funnel'' with $z'$ as the {\em apex} and $g$ as the {\em base} (see
Fig.~\ref{fig:funnel2}).  Let $F$
denote this funnel and $\partial F$ denote its boundary.

We define four special points $z'_1,z'_2,z_1$, and $z_2$ (see Fig.~\ref{fig:funnel2}).
Suppose we move along $\pi(z',a_1)$ from $z'$; let $z_1'$ be the first point on $\pi(z',a_1)$
we encounter that is {\em horizontally} visible to $g=\overline{a_1a_2}$. Similarly, as
moving along $\pi(z',a_2)$ from $z'$, let $z_2'$ be the first point on $\pi(z',a_2)$ encountered
that is {\em vertically} visible to $g$.
%The point $z_1'$ is always defined to be the lowest vertex of the path $\pi(z',a_1)$. The
%definition of $z_2'$ depends on three cases:
%(1) if $\pi(z',a_2)$ is not $xy$-monotone (see
%Fig.~\ref{fig:funnel2}(a)), then let $z_2'$ be the rightmost vertex of $\pi(z',a_2)$;
%(2) if $\pi(z',a_2)$ is an $xy$-monotone path and the $x$-coordinate
%of $z'$ is no smaller than that of
%$a_2$ (see Fig.~\ref{fig:funnel2}(b)), then let $z'_2$ be the leftmost
%vertex of $\pi(z',a_1)$; (3) if $\pi(z',a_2)$ is an $xy$-monotone path
%and the $x$-coordinate of $z'$ is smaller than that of
%$a_2$ (see Fig.~\ref{fig:funnel2}(c)),
%then let $z'_2$ be the rightmost vertex of $\pi(z',a_2)$, i.e.,
%$z'_2=a_2$.
Note that in some cases $z_1'$ (resp., $z'_2$) can be $z'$, $a_1$, or $a_2$.
Let $z_1$ be the horizontal projection
of $z'_1$ on $g$ and $z_2$ be the vertical projection of
$z'_2$ on $g$  (see Fig.~\ref{fig:funnel2}).
%In fact, suppose we move from $z'$ along $\pi(z',a_1)$ and $p$ is the first point we meet that is horizontally visible to $g=\overline{a_1a_2}$; then $z_1$ is the horizontal projection of $p$ on $g$. Similarly, suppose we move from $z'$ along $\pi(z',a_2)$ and $p$ is the first point we meet that is vertically visible to $g$; then $z_2$ is the vertical projection of $p$ on $g$. We choose the above way to introduce $z_1',z_2',z_1,z_2$ because it will make our later discussions easier.
%Note that the definitions of $z_2'$ in the first case and the third case are essentially the same and thus our discussions

\begin{figure}[t]
\begin{minipage}[t]{\linewidth}
\begin{center}
\includegraphics[totalheight=1.7in]{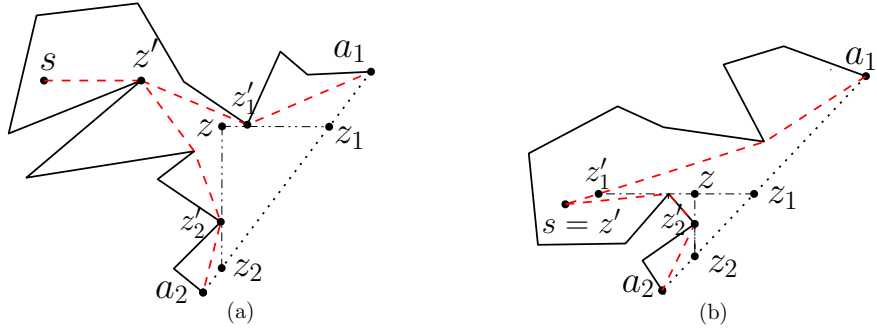}
\caption{\footnotesize Illustrating the definitions of
$z',z'_1,z'_2,z,z_1$, and $z_2$. In (a), $\overline{z_1z_1'}$ is tangent to $\pi(z',a_1)$ (at $z_1'$); in (b), $\overline{z_1z_1'}$ is tangent to $\pi(z',a_2)$.}
%in three possible cases: (a) $\pi(z',a_2)$
%is not $xy$-monotone; (b) $\pi(z',a_2)$
%is $xy$-monotone and $z'$ is to the right of $a_2$; (c) $\pi(z',a_2)$
%is $xy$-monotone and $z'$ is to the left of $a_2$. The paths $\pi(s,a_1)$
%and $\pi(s,a_2)$ are shown as dashed (red) curves. In (c), $a_2=z_2=z_2'$.}
\label{fig:funnel2}
\end{center}
\end{minipage}
\vspace*{-0.15in}
\end{figure}

The points $z_1$ and $z_2$ are particularly useful.
% as shown in Lemma \ref{lem:100}.
We first have the following observation.

\begin{observation}\label{obser:10}
The point $z_1$ is above $z_2$, i.e., the $y$-coordinate of $z_1$ is no smaller than that of $z_2$.
\end{observation}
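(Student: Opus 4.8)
The plan is to derive the claim from the geometry of the funnel $F$ together with the definitions of $z_1',z_2'$ as the first points on the two convex chains $\pi(z',a_1),\pi(z',a_2)$ that are, respectively, horizontally and vertically visible to the base $g$. Recall that $g=\overline{a_1a_2}$ has positive slope, $a_1$ is the higher endpoint, and the interior of $B$ lies above $g$; $z_1$ is the horizontal projection of $z_1'$ onto $g$, and $z_2$ is the vertical projection of $z_2'$ onto $g$. First I would set up coordinates so that I can talk cleanly about which side of a horizontal or vertical line through a point the funnel sits; since the funnel is bounded by the two convex chains and by $g$, and $g$ has positive slope, a horizontal rightward ray and a vertical downward ray from an interior point of $F$ both must first hit $g$ rather than a chain (this is essentially the content of "horizontally/vertically visible to $g$" and is exactly why $z_1',z_2'$ exist and why their projections $z_1,z_2$ land on $g$).

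The key step is to locate $z_1$ and $z_2$ on $g$ relative to the apex $z'$ and to each other. I would argue that, because $\pi(z',a_1)$ is a convex chain spiraling from $z'$ up toward $a_1$ and $z_1'$ is the \emph{first} point on it horizontally visible to $g$, the portion of $\pi(z',a_1)$ strictly between $z'$ and $z_1'$ (if any) blocks horizontal visibility, which forces $z_1$ — the horizontal hit point on $g$ — to lie on the $a_1$-side, i.e. the "upper" part of $g$. Symmetrically, since $\pi(z',a_2)$ descends toward the lower endpoint $a_2$ and $z_2'$ is the first point on it vertically visible to $g$, the vertical projection $z_2$ lands on the $a_2$-side, the "lower" part of $g$. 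Because $g$ has positive slope, "closer to $a_1$ along $g$" means "larger $y$-coordinate", so $z_1$ being on the $a_1$-side of the relevant dividing point and $z_2$ on the $a_2$-side of it yields $y(z_1)\ge y(z_2)$. The cases illustrated in Fig.~\ref{fig:funnel2}(a) and (b) — where $\overline{z_1z_1'}$ is tangent to $\pi(z',a_1)$ versus to $\pi(z',a_2)$ — need to be treated, but in each the same monotonicity of position along $g$ versus $y$-coordinate closes the argument; when $z_1'$ or $z_2'$ degenerates to $z'$, $a_1$, or $a_2$, the inequality holds trivially or as an easy boundary case.

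The main obstacle I anticipate is making precise the claim that the two projection points $z_1$ and $z_2$ cannot "cross" on $g$ — i.e. ruling out the configuration where the horizontal projection of a point high on the $\pi(z',a_1)$ chain somehow lands below the vertical projection of a point on $\pi(z',a_2)$. Handling this cleanly requires carefully using (i) convexity of both funnel sides, (ii) the positive slope of $g$, and (iii) the minimality in the definitions of $z_1'$ and $z_2'$ (that everything strictly before them on their chains is \emph{not} visible to $g$ in the relevant direction), so that the blocking subchains pin the hit points to opposite sides of $g$. Once that positional separation along $g$ is established, converting it to the $y$-coordinate comparison is immediate from the slope sign, and the observation follows.
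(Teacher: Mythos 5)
There is a genuine gap: your write-up identifies the crux of the observation --- that $z_1$ and $z_2$ cannot ``cross'' on $g$ --- and then explicitly defers it (``I would argue that...'', ``the main obstacle I anticipate is making precise the claim...'') without actually carrying out the argument. The statement that $z_1$ lies ``on the $a_1$-side'' and $z_2$ ``on the $a_2$-side'' of ``the relevant dividing point'' has no content as stated because the dividing point is never defined; defining one and showing the two projections straddle it \emph{is} the observation, so the proposal at this point is circular or, at best, a proof plan rather than a proof. In particular, neither the convexity of the chains nor the minimality in the definitions of $z_1',z_2'$ is actually put to use; they are only named as ingredients.

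The paper closes exactly this gap with a concrete local argument. After disposing of the degenerate cases ($z'\in\{a_1,a_2\}$, or $z_1=a_1$), it looks at the pseudo-triangular region bounded by $\overline{a_1z_1}$, the horizontal segment $\overline{z_1z_1'}$, and the convex subchain $\pi(z_1',a_1)$, and argues that every point $p$ in the interior of $\overline{a_1z_1}$ is \emph{vertically} visible to $\pi(z_1',a_1)$ at some $q$. Because $\overline{z_1z_1'}$ is horizontal and $\pi(z_1',a_1)$ is convex, $q\ne z_1'$, and $q$ cannot be $z_2'$ either, so $p$ lies strictly above $z_2$; letting $p\to z_1$ gives $y(z_1)\ge y(z_2)$. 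If you want to rescue your approach, you would need to supply an argument of this calibre --- for instance, by naming an explicit dividing point (say the foot of the perpendicular or of a horizontal/vertical from $z'$) and proving $z_1,z_2$ fall on the correct sides of it --- rather than listing the tools one would ``carefully use.''
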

\begin{proof}
If $z'$ is either $a_1$ or $a_2$, then by their definitions, we have $z_1=z_2=z_1'=z_2'=z'$ and
the observation trivially holds. Suppose $z'$ is neither $a_1$ nor $a_2$. If $z_1=a_1$, then
the observation also holds since $a_1$ is the highest point on $g$. We assume $z_1\neq a_1$,
which implies $z_1'\neq a_1$.

Let $\pi(z_1',a_1)$ be the portion of $\pi(z',a_1)$ between $z_1'$ and $a_1$. Note that the
``pseudo-triangular'' region enclosed by $\overline{a_1z_1}$, $\overline{z_1z_1'}$, and $\pi(z_1',a_1)$ does not contain any point of $\partial B$ in its interior.
For any point $p$ in the interior of $\overline{a_1z_1}$,
since $\pi(z_1',a_1)$ is convex and $\overline{z_1z_1'}$ is horizontal, $p$ must be vertically
visible to $\pi(z_1',a_1)$, say, at a point $q\in \pi(z_1',a_1)$. Clearly, $q$ is not $z_1'$.
Hence, the line containing $\overline{pq}$ cannot be tangent to $\pi(z',a_1)$ at $q$,
implying that $q$ is not $z_2'$. Therefore, the point $p$ must be strictly above $z_2$. Since $p$ is an arbitrary point in the interior of $\overline{a_1z_1}$, $z_1$ must be above $z_2$. The observation thus follows.
\end{proof}

\begin{lemma}\label{lem:100}
For any point $p\in
\overline{a_1z_1}$, there is a shortest path from $s$ to $p$ that
contains $z_1$; likewise,
for any point $p\in \overline{a_2z_2}$, there is a shortest
path from $s$ to $p$ that contains $z_2$.
\end{lemma}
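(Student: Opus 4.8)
The plan is to reduce to a shortest $s$-$p$ path that stays inside the bay $B$, invoke the shortest-path structure of the funnel $F$, and then re-route the tail of the path through $z_1$ (resp.\ $z_2$) using the free ``pseudo-triangular'' region already identified in the proof of Observation \ref{obser:10}. I only describe the argument for $p\in\overline{a_1z_1}$ and $z_1$; the case of $p\in\overline{a_2z_2}$ and $z_2$ will be symmetric, with the roles of the horizontal and vertical directions (and of $z_1',z_1$ versus $z_2',z_2$) interchanged.

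First I would show there is a shortest $s$-$p$ path $\pi$ contained in $B$: since $p\in g$ and $s\in B$, an argument identical to the one in the proof of Lemma \ref{lem:90} lets us replace any excursion of $\pi$ outside $B$ by a subsegment of $g$ without increasing its length. Because $p$ lies on the base $g$ of the funnel $F$ with apex $z'$, a standard property of funnels then lets me take $\pi$ to be $\pi(s,z')$ followed by a shortest path $\sigma$ from $z'$ to $p$ inside $F$, where $\sigma$ consists of an initial portion of the convex chain $\pi(z',a_1)$ from $z'$ to some point $w$, followed by the segment $\overline{wp}$. The crucial step is then to verify that $w$ lies on the subchain $\pi(z_1',a_1)$, i.e.\ that $z_1'$ is encountered along $\sigma$ before $w$; this is where the definition of $z_1'$ (the first point of $\pi(z',a_1)$, going from $z'$, horizontally visible to $g$), the definition of $z_1$ as its horizontal projection, and the configuration in Figure \ref{fig:funnel2} are used: since $p$ lies on the $a_1$-side of $z_1$ on $g$ and $\overline{z_1z_1'}$ is tangent to $\pi(z',a_1)$ at $z_1'$ (Fig.\ \ref{fig:funnel2}(a)), the tangent point from $p$ to $\pi(z',a_1)$ cannot be strictly closer to $z'$ along the chain than $z_1'$, so $z_1'\in\pi$. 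I expect this monotonicity-of-tangent-points claim, together with the handling of the second configuration in Figure \ref{fig:funnel2}(b) (where $\overline{z_1z_1'}$ is instead tangent to $\pi(z',a_2)$), to be the main technical obstacle; the degenerate cases $z_1'\in\{z',a_1,a_2\}$ noted in the text will be checked directly.

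Once $z_1'$ is known to lie on $\pi$, the proof concludes as follows. By the proof of Observation \ref{obser:10}, the pseudo-triangular region $\Delta$ bounded by $\overline{a_1z_1}$, $\overline{z_1z_1'}$, and $\pi(z_1',a_1)$ contains no point of $\partial B$ in its interior, so $\Delta$ lies in the free space. The path $\overline{z_1'z_1}\cup\overline{z_1p}$ is contained in $\Delta$, and it is $xy$-monotone because $\overline{z_1'z_1}$ is horizontal, $\overline{z_1p}\subseteq g$, and (using that $p$ lies between $z_1$ and $a_1$) the two segments vary $x$ and $y$ in consistent directions; hence it is an $L_1$ shortest path from $z_1'$ to $p$ and has the same length as the subpath of $\pi$ from $z_1'$ to $p$. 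Replacing that subpath by $\overline{z_1'z_1}\cup\overline{z_1p}$ yields a shortest $s$-$p$ path containing $z_1$, as desired, and the symmetric argument with ``horizontal'' replaced by ``vertical'' handles $p\in\overline{a_2z_2}$ and $z_2$.
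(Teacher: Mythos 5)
Your proposal diverges from the paper's argument at the crucial step, and the divergence introduces a gap that you yourself flag but do not close. You try to show that the (Euclidean = $L_1$) shortest $z'$-to-$p$ path $\sigma$ in the funnel $F$ passes through the specific point $z_1'$, by appealing to the funnel tangent structure and monotonicity of tangent points along the convex chain $\pi(z',a_1)$. That argument does go through in the configuration of Figure~\ref{fig:funnel2}(a), where $\overline{z_1z_1'}$ is tangent to $\pi(z',a_1)$ at $z_1'$: there the tangent point $w$ from any $p\in\overline{a_1z_1}$ indeed lies on $\pi(z_1',a_1)$, so $z_1'\in\sigma$. But in configuration (b), where $\overline{z_1z_1'}$ is tangent to $\pi(z',a_2)$ rather than to $\pi(z',a_1)$, the shortest $z'$-to-$p$ path need not pass through $z_1'$ at all: the tangent point on $\pi(z',a_1)$ may lie strictly between $z'$ and $z_1'$ (or the path may simply be the direct segment $\overline{z'p}$), in which case $\sigma$ crosses the horizontal chord $\overline{z_1'z_1}$ at some interior point $q\neq z_1'$, and the replacement you describe (which starts at $z_1'$) cannot be performed. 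You explicitly label this case ``the main technical obstacle'' and leave it open, so the proposal as written is incomplete.

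The paper's proof sidesteps this entirely by not insisting that the path contain $z_1'$. It shows that $z'\notin R$ (because $z'\neq z_1'$ forces $z'$ not to be horizontally visible to $g$) while $p\in R$, and that the segment $\overline{z_1'z_1}$ separates $R$ from the rest of $F$; hence \emph{any} shortest $z'$-to-$p$ path in $F$ must cross $\overline{z_1'z_1}$ at some point $q$. It then replaces the tail of the path from $q$ onward by $\overline{qz_1}\cup\overline{z_1p}$, which is $xy$-monotone and therefore $L_1$-shortest. This separation argument is uniform across the configurations (a) and (b) and also handles the ``direct visibility'' situation for free, since it never needs the crossing point to be $z_1'$. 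Your concluding replacement step (from $z_1'$ via $z_1$ to $p$, with the $xy$-monotonicity justification) is essentially the same as the paper's final step but applied at the wrong anchor point; if you instead anchor the replacement at the crossing point $q$ and use the region $R$ as a separator, the case analysis you were worried about disappears.
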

\begin{proof}
We only prove the case of $p\in \overline{a_1z_1}$ since the other case of
$p\in \overline{a_2z_2}$ is symmetric. It suffices to show that there exists a shortest
path from $z'$ to $p\in \overline{a_1z_1}$ that contains $z_1$.

Recall that $z_1$ is the horizontal projection of $z_1'$ on $g$. Let $\pi(z_1',a_1)$ be the portion of $\pi(z',a_1)$ between $z_1'$ and $a_1$.
Consider the ``pseudo-triangular'' region $R$ enclosed by $\overline{a_1z_1}$,
$\overline{z_1z_1'}$, and $\pi(z_1',a_1)$. Since $\pi(z_1',a_1)$ is convex, every point on
$\pi(z_1',a_1)$ is horizontally visible to $g$.

We claim that there exists a shortest path $\pi$ from $z'$ to $p$ that intersects $\overline{z_1'z_1}$. Indeed, if $z'=z_1'$, then the claim is trivially true. Otherwise, since $z_1'$ is the first point on $\pi(z',a_1)$ that is horizontally visible to $g$ if we go from $z'$ to $a_1$ along $\pi(z',a_1)$, $z'$ cannot be horizontally visible to $g$, and thus, $z'$ is not in $R$. Note that $\overline{z_1z_1'}$ partitions the funnel $F$ into two parts, one of which is $R$. Also,
the funnel $F$ contains a shortest path $\pi$ from $z'$ to $p$. Since $p\in R$ and $z'\not\in R$, the path $\pi$ must intersect $\overline{z_1'z_1}$. The claim is proved.

Suppose $\pi$ intersects $\overline{z_1'z_1}$ at a point $q$. Since $\overline{qz_1}\cup \overline{z_1p}$ is $xy$-monotone (and thus is a shortest path), we can obtain another shortest path from $z'$ to $p$ that contains $z_1$ by replacing the portion of $\pi$ between $q$ and $p$ by $\overline{qz_1}\cup \overline{z_1p}$. The lemma thus follows.
\end{proof}

For the case of $t\in \calM$,
Lemma \ref{lem:100} implies the following: If a shortest \st\ path
crosses $g$ at a point on $\overline{a_1z_1}$ (resp.,
$\overline{a_2z_2}$), then there must be a shortest \st\ path that is a
concatenation of a shortest path $\pi(s,z_1)$ (resp., $\pi(s,z_2)$) from $s$ to $z_1$ (resp., $z_2$) in
$B$ and a shortest path $\pi(z_1,t)$ from $z_1$ (resp., $\pi(z_2,t)$ from $z_2$) to $t$ in $\calM'$.
The path $\pi(s,z_1)$ can be found using the data structure $\calD(B)$
and $\pi(z_1,t)$ can be found by Lemma \ref{lem:80} since both $z_1$ and $t$ are in $\calM$.
Hence, such a shortest \st\ path query is answered in $O(\log n)$ time, provided that
we can find $z_1$ and $z_2$ in $O(\log n)$ time (as to be shown in Lemma \ref{lem:170}).

In the following, we assume every shortest \st\ path crosses the interior of $\overline{z_1z_2}$, and in other words, no shortest \st\ paths cross $\overline{a_1z_1}\cup \overline{a_2z_2}$.

%It remains to handle the case when a shortest \st\ path crosses $g$
%at a point on $\overline{z_1z_2}$.
Let $z$ denote the
intersection of the horizontal line containing $\overline{z_1z_1'}$ and the
vertical line containing $\overline{z_2z_2'}$ (see Fig.~\ref{fig:funnel2}).
The point $z$ is useful as shown by the next lemma.

\begin{lemma}\label{lem:110}
The point $z$ is in the funnel $F$, and for any point $p\in
\overline{z_1z_2}$, there is a shortest path
from $s$ to $p$ that contains $z$.
\end{lemma}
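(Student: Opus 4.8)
The plan is to follow the template of the proof of Lemma~\ref{lem:100}. As in that proof, it suffices to show that for every $p\in\overline{z_1z_2}$ there is a shortest path from the funnel apex $z'$ to $p$ that contains $z$; indeed $z'$ lies on $\pi(s,a_1)\cap\pi(s,a_2)$, so some shortest path from $s$ to $p$ passes through $z'$ and stays inside the funnel $F$ afterwards, and a shortest $z'$-to-$p$ path through $z$ inside $F$ extends it to a shortest $s$-to-$p$ path through $z$. I would first dispose of the degenerate configurations ($z'=a_1$, $z'=a_2$, $z_1'=z'$, or $z_2'=z'$), in each of which several of the points $z',z_1',z_2',z_1,z_2,z$ collapse and the claim is immediate, and henceforth assume the nondegenerate case.

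Step~1 is to show $z\in F$, locating $z$ precisely. I would reuse the two pseudo-triangular regions from the proof of Observation~\ref{obser:10}: $R_1$ bounded by $\overline{a_1z_1}\subseteq g$, the horizontal segment $\overline{z_1z_1'}$ and the convex chain $\pi(z_1',a_1)$, and the symmetric region $R_2$ bounded by $\overline{a_2z_2}\subseteq g$, the vertical segment $\overline{z_2z_2'}$ and the convex chain $\pi(z_2',a_2)$. Both lie in $F$ and contain no point of $\partial B$ in their interior, and deleting them from $F$ leaves a sub-polygon $F'$ whose boundary consists of $\overline{z_1z_2}\subseteq g$, the vertical segment $\overline{z_2z_2'}$, the part of $\pi(z',a_2)$ from $z_2'$ to $z'$, the part of $\pi(z',a_1)$ from $z'$ to $z_1'$, and the horizontal segment $\overline{z_1'z_1}$. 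The point $z$ is the intersection of the horizontal line through $z_1'$ (at the height of $z_1$) with the vertical line through $z_2'$ (at the $x$-coordinate of $z_2$). By Observation~\ref{obser:10}, $z_1$ is at or above $z_2$; the key claim is that $z$ lies interior to the boundary segment $\overline{z_2z_2'}$ of $F'$ (equivalently, $z_1'$ is no higher than $z_2'$) and, symmetrically, interior to $\overline{z_1'z_1}$. This I would derive from the convexity of the two funnel chains together with the defining minimality of $z_1'$ and $z_2'$ (each being the first point on its chain, traced from $z'$, horizontally, resp.\ vertically, visible to $g$). Granting it, $z\in F'\subseteq F$; furthermore $\overline{zz_1}\subseteq\overline{z_1'z_1}\subseteq F$ and $\overline{zz_2}\subseteq\overline{z_2z_2'}\subseteq F$, so the axis-parallel triangle with vertices $z,z_1,z_2$ lies in $F$, whence $\overline{zp}\subseteq F$ and $\overline{zp}$ is $xy$-monotone for every $p\in\overline{z_1z_2}$ (as $p$ lies in the closed quadrant to the right of and below $z$).

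Step~2 is the rerouting, in the spirit of the last part of the proof of Lemma~\ref{lem:100}. The staircase $\sigma=\overline{z_1'z}\cup\overline{zz_2'}$ lies in $F$ and joins a point of $\pi(z',a_1)$ to a point of $\pi(z',a_2)$, so it separates $F$ into a part containing $z'$ and a part containing $\overline{z_1z_2}$. Hence any $xy$-monotone shortest path $\pi$ from $z'$ to $p$ inside $F$ (which exists since $F$ is a simple polygon) must cross $\sigma$; letting $q$ be the last point of $\pi$ on $\sigma$, the segment $\overline{qz}$ is horizontal or vertical, and replacing the portion of $\pi$ from $q$ to $p$ by $\overline{qz}\cup\overline{zp}$ gives a path from $z'$ to $p$ through $z$. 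Using the location of $z$ from Step~1 (together with Observation~\ref{obser:10}) one checks this path is $xy$-monotone, hence a shortest $z'$-to-$p$ path, proving the lemma. The main obstacle is Step~1 --- showing that the horizontal ``reach'' of $z_1'$ and the vertical ``reach'' of $z_2'$ interlock so that $z$ falls inside $F$ on the segments $\overline{z_1'z_1}$ and $\overline{z_2z_2'}$; this is where convexity of the funnel chains and minimality in the definitions of $z_1'$ and $z_2'$ are essential, and also where the degenerate configurations need care.
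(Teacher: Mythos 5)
Your high-level plan is sound and shares the skeleton of the paper's argument (reduce to a shortest $z'$-to-$p$ path, show $z$ lies in $F$, then reroute via an $L$-shaped separator through $z$). But the proposal hinges entirely on the ``key claim'' in Step~1 --- that $z$ lies on $\overline{z_1'z_1}$ and on $\overline{z_2z_2'}$, i.e.\ that $z_1'$ is (weakly) to the left of and below $z_2'$ --- which you assert and then immediately ``grant'' without deriving it. This is exactly where the real work sits, and the paper does \emph{not} prove that claim; instead it proves the weaker (and sufficient) fact that $l_1\cap F$ and $l_2\cap F$ are line segments that each intersect the shortest path $\pi(z',p)$. Establishing even that weaker fact already requires a careful three-way case analysis on whether $l_1$ is tangent to $\pi(z',a_1)$ at $z_1'$ (see the paper's Case~1, Case~2a, Case~2b), and the corresponding analysis for $l_2$; none of this appears in your sketch.

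The distinction is not cosmetic. In the paper's Case~1 ($l_1$ tangent at $z_1'$, with $z'$ above $l_1$), the segment $l_1\cap F$ is $\overline{z_1z_1''}$ with $z_1''$ on $\pi(z',a_2)$, strictly to the \emph{left} of $z_1'$; the point $z$ is only guaranteed to lie somewhere on $\overline{z_1z_1''}$, which may be to the left of $z_1'$. If that happens, your staircase $\sigma=\overline{z_1'z}\cup\overline{zz_2'}$ has its horizontal leg $\overline{z_1'z}$ pointing leftward while $\overline{zp}$ points to the lower right, so the rerouted path $\overline{qz}\cup\overline{zp}$ is no longer $xy$-monotone and your length argument breaks. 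The paper avoids this by picking $q_1\in(l_1\cap F)\cap\pi(z',p)$ and $q_2\in(l_2\cap F)\cap\pi(z',p)$ rather than anchoring the staircase at $z_1'$ and $z_2'$: since $z$, $q_1$ both lie on the single segment $l_1\cap F$ and $z$, $q_2$ on $l_2\cap F$, the staircase $\overline{q_1z}\cup\overline{zq_2}$ is automatically in $F$ and $xy$-monotone without ever needing the relative position of $z$ versus $z_1'$, $z_2'$. You should either prove the key claim (and I am not convinced it is true in all configurations) or switch to the paper's decomposition, in which case you must also supply the tangency case analysis showing $l_1\cap F$ and $l_2\cap F$ are segments meeting $\pi(z',p)$.
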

\begin{proof}
We first prove $z\in F$. For this, it suffices to prove that the interior of the triangle
$\triangle zz_1z_2$ does not contain any point on the boundary of $F$.
Let $R$ denote the interior of $\triangle zz_1z_2$.

Assume to the contrary that $R$ intersects $\partial F$. Let $q$ be any point in
$R\cap \partial F$ that is horizontally visible to $\overline{z_1z_2}$. Such a point $q$
always exists if $R\cap\partial F\neq \emptyset$. Note that $q$ is on either $\pi(z',a_1)$
or $\pi(z',a_2)$.
Without loss of generality, assume $q$ is on $\pi(z',a_1)$.
%Let $\pi(z_1',a_1)$ be the portion of $\pi(z',a_1)$ between $z_1'$ and $a_1$.
Observe that $\pi(z_1',a_1)$ is $xy$-monotone since $z_1'$ is horizontally visible to $g$.
Because $q$ is also horizontally visible to $g$, by the definition of $z_1'$, $q$ must be
on $\pi(z_1',a_1)$. Since $q$ is in $R$, $q$ must be strictly below $z_1'$. Since $a_1$ is no
lower than $z_1$, $a_1$ is also no lower than $z_1'$. Thus, when following the path
$\pi(z_1',a_1)$ from $z_1'$ to $a_1$, we have to strictly go down (through $q$) and then go up
(to $a_1$), which contradicts with that the fact the path $\pi(z_1',a_1)$ is
$xy$-monotone. Hence, $R$ cannot contain any point on $\partial F$ and $z$ must be in $F$.

Consider any point $p\in \overline{z_1z_2}$. Below we prove that there is a shortest path from
$s$ to $p$ containing $z$. It suffices to show that there exists a shortest path from $z'$
to $p$ containing $z$. If $z_1=z_2$, then $z=z_1=z_2=p$ and we are done. Below we assume
$z_1\neq z_2$, which implies $z_1\neq a_2$ since otherwise $z_1=z_2$ by Observation
\ref{obser:10}; similarly, $z_2\neq a_1$. Note that $z_1\neq z_2$ also implies
$z'\not\in \{a_1,a_2\}$.

Let $\pi(z',p)$ be a shortest path in $F$ from $z'$ to $p$. Let $l_1$ be the horizontal line containing $\overline{z_1z_1'}$ and $l_2$ be the vertical line containing $\overline{z_2z_2'}$.

In the following, we first prove that $l_1\cap F$ is a line segment and it must intersect the path $\pi(z',p)$.
Consider the line segment $\overline{z_1z_1'}$. Depending on whether $l_1$ is tangent
to $\pi(z',a_1)$ at $z_1'$, there are two possible cases (e.g., see Fig.~\ref{fig:funnel2}).

\begin{enumerate}
\item
If $l_1$ is tangent to $\pi(z',a_1)$ at $z_1'$ (see Fig.~\ref{fig:funnel2}(a)), then we extend
$\overline{z_1z_1'}$ horizontally leftwards until it hits $\partial F$, say, at a point $z_1''$. Since $\pi(z',a_1)$ is convex, $z'$ is above the line $l_1$ and $z_1''$ is on $\pi(z',a_2)$. Since $\pi(z',a_2)$ is also convex and $z'$ is above $l_1$, we obtain $l_1\cap F=\overline{z_1z_1''}$.

Observe that $\overline{z_1'z_1''}$ partitions $F$ into two sub-polygons such that $z'$ and
$p$ are in different sub-polygons. Hence, the path $\pi(z',p)$ must intersect
$\overline{z_1'z_1''}\subseteq l_1\cap F$, which is a line segment.

\item
If $l_1$ is not tangent to $\pi(z',a_1)$ at $z_1'$, then depending on whether $z_1'=z'$, there are two subcases.

\begin{enumerate}
\item
If $z_1'=z'$, then due to the convexity of $\pi(z',a_1)$ and $\pi(z',a_2)$, we have $l_1\cap F=\overline{z_1z_1'}$. Since $z'=z_1'$, it is trivially true that $\pi(z',p)$ intersects $l_1\cap F=\overline{z_1z_1'}$.

\item
If $z_1'\not=z'$ (see Fig.~\ref{fig:funnel2}(b)), then we claim that $\overline{z_1z_1'}$ must be tangent to $\pi(z',a_2)$ at a point, say, $z_1''$. Suppose to the contrary that this is not
the case. Then, since $z_1'\neq z'$, $z_1\neq a_2$, and $l_1$ is not tangent to $\pi(z',a_1)$ at $z_1'$, we can move $l_1$ downwards by an infinitesimal value such that the new $l_1$ intersects
$g$ at a point $z_3$ and intersects $\pi(z',a_1)$ at a point $z_3'$ such that $z_3'$ is horizontally visible to $z_3$. Clearly, $z_3'$ is on $\pi(z',a_1)$ between $z'$ and $z_1'$.
But this contradicts with the definition of $z_1'$, i.e., $z_1'$ is the first point on $\pi(z',a_1)$ horizontally visible to $g$ if we go from $z'$ to $a_1$ along $\pi(z',a_1)$. The claim is thus proved.

By the above claim and the convexity of $\pi(z',a_2)$, $z'$ is below $l_1$. Also by the
convexity of $\pi(z',a_1)$, we have $l_1\cap F=\overline{z_1z_1'}$. Further, observe that $\overline{z_1'z_1''}$ partitions $F$ into two sub-polygons such that $z'$ and $p$ are in different sub-polygons. Hence, the path $\pi(z',p)$ must intersect $\overline{z_1'z_1''}\subseteq l_1\cap F$.

Therefore, $l_1\cap F$ is a line segment that intersects $\pi(z',p)$.
\end{enumerate}
\end{enumerate}

The above arguments prove that $l_1\cap F$ is a line segment that intersects the path $\pi(z',p)$, say, at a point $q_1$. By using a
similar analysis, we can also show that $l_2\cap F$ is a line segment that intersects $\pi(z',p)$, say, at a point $q_2$. Note that this implies that $z$ is on the intersection of the segment $l_1\cap F$ and the segment $l_2\cap F$.
Since $\overline{q_1z}\cup \overline{zq_2}$ is $xy$-monotone (and thus is a shortest path),
if we replace the subpath of $\pi(z',p)$ between $q_1$ and $q_2$
by $\overline{q_1z}\cup \overline{zq_2}$ to obtain another path
$\pi'(z',p)$ from $z'$ to $p$, then $\pi'(z',p)$ is still a shortest path. Since $\pi'(z',p)$ contains $z$, the lemma follows.
\end{proof}

If there is a shortest \st\ path crossing $g$ at a point on $\overline{z_1z_2}$,
then by Lemma \ref{lem:110}, there is a shortest \st\ path that is a
concatenation of a shortest path from $s$ to $z$ in $B$ and a
shortest path from $z$ to $t$ (which crosses $g$). A shortest $s$-$z$ path in
$B$ can be found by using the data structure $\calD(B)$ in $O(\log n)$ time,
provided that we can compute $z$ in $O(\log n)$
time. It remains to show how to compute a shortest $z$-$t$ path that
crosses $g$ at a point on $\overline{z_1z_2}$. Note that
such a shortest $z$-$t$ path either does or does not cross a point in
$\calV(g)\cap \overline{z_1z_2}$, where $\calV(g)$ is the set of
points of $\calV(\calQ)$ lying on $g$
($\calV(g)=\emptyset$ is possible).
For the former case (when $\calV(g)\neq \emptyset$ holds), we shall
build a graph $G_E(g)$ inside $B$ and merge it with the graph $G_E(\calM)$ on $\calM$ so that
the merged graph allows to find a shortest $z$-$t$ path crossing a point in
$\calV(g)\cap \overline{z_1z_2}$. Next, we introduce the graph $G_E(g)$.

Let $h_g=|\calV(g)|$. The graph $G_E(g)$ is defined on the points
of $\calV(g)$ in a similar manner as $G_E$ in Section \ref{sec:newgraph}. One big difference is that $G_E(g)$ is built inside $B$ and uses vertical {\em cut-segments} in $B$
instead of cut-lines. Also, no type-1 Steiner point is needed for $G_E(g)$.
Specifically, we define a {\em cut-segment tree} $T(g)$ as follows. The root $u$ of $T(g)$ is associated with a point set $V(u)=\calV(g)$. Each node $u$ of $T(g)$ is also associated with a vertical {\em cut-segment}
$l(u)$, defined as follows. Let $p$ be the point of $V(u)$ that has the median $x$-coordinate among all
points in $V(u)$. Note that $p$ is on $g$. We extend a vertical line
segment from $p$ upwards into the interior of $B$ until it hits $\partial B$; this segment is the cut-segment $l(u)$.
The left (resp., right) child of $u$ is defined recursively on the points of $V(u)$ to the left (resp., right) of $l(u)$.

Clearly, $T(g)$ has $O(\log h_g)$ levels and $O(\sqrt{\log h_g})$
super-levels. We define the type-2 and type-3 Steiner points on the cut-segments of $T(g)$
in the same way as in Section \ref{sec:newgraph}.
Consider a super-level and let $u$ be any node at the highest level of this super-level. For
every $p\in V(u)$, for each cut-segment $l$ in the subtree $T_u(g)$ of $T(g)$ in the same super-level,
if $p$ is horizontally visible to $l$, then the horizontal projection $p_h(l)$ of $p$ on $l$ is defined as a Steiner point on $l$; we order the Steiner points defined by $p$ from left to right,
and put an edge in $G_E(g)$ connecting every two such consecutive Steiner points.
Hence, there are $O(h_g\sqrt{\log h_g}2^{\sqrt{\log h_g}})$ Steiner points on all cut-segments
of $T(g)$. The above process also defines $O(h_g\sqrt{\log h_g}2^{\sqrt{\log h_g}})$ edges in $G_E(g)$.

The node set of $G_E(g)$ consists of all points of $\calV(g)$ and all Steiner points on the
cut-segments of $T(g)$. In addition to the graph edges defined above, for each cut-segment $l$,
a graph edge connects every two consecutive graph nodes on $l$
(note that here every two such graph nodes are visible to each other).
Clearly, $G_E(g)$ has $O(h_g \sqrt{\log h_g}2^{\sqrt{\log h_g}})$ nodes and  $O(h_g \sqrt{\log h_g}2^{\sqrt{\log h_g}})$ edges.

Let $n_B$ denote the number of obstacle vertices of the bay $B$.
Note that $\calV(g)$ is sorted along $g$.

\begin{lemma}\label{lem:120}
The graph $G_E(g)$ can be constructed in
$O(n_B+h_g\cdot \log^{3/2} h_g \cdot 2^{\sqrt{\log h_g}})$ time.
\end{lemma}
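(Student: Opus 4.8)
The plan is to follow the construction algorithm for $G_E$ in Lemma~\ref{lem:10} (and, for the bay-specific bookkeeping, Lemma~\ref{lem:50}), adapting it to the facts that $G_E(g)$ lives inside the simple polygon $B$, uses vertical cut-segments rather than cut-lines, and has no type-1 Steiner points. As a preprocessing step I would triangulate $B$ in $O(n_B)$ time by a standard linear-time algorithm for simple polygons and, from the triangulation, build in $O(n_B)$ time the horizontal and vertical visibility (trapezoidal) decompositions of $B$, each equipped with an $O(\log n_B)$-time point-location structure; this supports horizontal-projection queries and upward vertical ray-shooting queries inside $B$ in $O(\log n_B)$ time each. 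Recall that $\calV(g)$ is already sorted along $g$, and since $g$ has positive slope this order is also the $x$-order of $\calV(g)$; hence the topology of the cut-segment tree $T(g)$ is obtained by repeated median selection in $O(h_g)$ total time, and the cut-segment $l(u)$ at each of the $O(h_g)$ nodes $u$ of $T(g)$ is obtained by a single upward ray-shooting from the median point of $V(u)$ into $B$, for a total of $O(h_g\log n_B)$ time. As in Lemma~\ref{lem:10}, $T(g)$ has $O(\log h_g)$ levels and $O(\sqrt{\log h_g})$ super-levels.

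Next I would compute the type-2 and type-3 Steiner points together with the graph edges joining the consecutive Steiner points defined by each point of $\calV(g)$. For each $p\in\calV(g)$ I first compute, using the horizontal visibility decomposition of $B$, its leftward and rightward horizontal projections on $\partial B$, in $O(\log n_B)$ time each and $O(h_g\log n_B)$ time overall; these let us test in $O(1)$ time whether $p$ is horizontally visible to any given cut-segment and, if so, produce the Steiner point $p_h(l)$. Then, exactly as in Lemma~\ref{lem:10}, for each super-level, for each node $u$ at its highest level, and each $p\in V(u)$, an in-order traversal of the subtree $T_u(g)$ lists its $O(2^{\sqrt{\log h_g}})$ cut-segments from left to right, and we create the Steiner points of $p$ and the edges of $G_E(g)$ between every two consecutive ones. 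Since the total number of type-2 and type-3 Steiner points is $O(h_g\sqrt{\log h_g}\,2^{\sqrt{\log h_g}})$, this phase costs $O(h_g\sqrt{\log h_g}\,2^{\sqrt{\log h_g}})$ time in addition to the $O(h_g\log n_B)$ spent on the projections.

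It remains to add, for each cut-segment $l$, the edges of $G_E(g)$ joining consecutive graph nodes on $l$. Here the construction is actually simpler than for $G_E$: because each cut-segment lies entirely inside $B$, any two consecutive graph nodes on $l$ are automatically visible to each other, so the sweep used in Lemma~\ref{lem:10} to test visibility of consecutive Steiner points on a cut-line is unnecessary; we merely sort, for each $l$, its graph nodes by $y$-coordinate and connect consecutive ones. Since the total number of Steiner points over all cut-segments is $O(h_g\sqrt{\log h_g}\,2^{\sqrt{\log h_g}})$ and each cut-segment carries at most $h_g^{o(1)}\cdot h_g$ of them, all these sorts take $O(h_g\log^{3/2} h_g\,2^{\sqrt{\log h_g}})$ time, which dominates. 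Summing, the construction runs in $O(n_B + h_g\log n_B + h_g\log^{3/2}h_g\,2^{\sqrt{\log h_g}})$ time; since $h_g\log n_B = O(n_B + h_g\log^{3/2}h_g\,2^{\sqrt{\log h_g}})$ (if $h_g\le n_B/\log n_B$ the middle term is at most $n_B$, and otherwise $n_B<h_g\log n_B$ forces $\log n_B = O(\log h_g)$), the claimed bound follows. The only points needing care, both minor, are (i) carrying out the $O(n_B)$ simple-polygon preprocessing and the absorption of the $O(h_g\log n_B)$ term just noted, and (ii) verifying that the $O(1)$ ``horizontal visibility to a cut-segment'' test is correct, i.e.\ that whenever $p$ is horizontally visible to the vertical line through a cut-segment $l$ the crossing point $p_h(l)$ actually lies on the finite segment $l$; this follows from the fact that, with the interior of $B$ above $g$ and $g$ of positive slope, $p$ can only be horizontally visible to cut-segments to its left, whose bases on $g$ are lower than $p$, so that the portion of $l$ from its base up to height $y_p$ stays inside $B$ — I would spell this out in a short geometric paragraph using the funnel/convex-chain structure of $B$.
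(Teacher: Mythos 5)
Your proof is correct, and it matches the paper's high-level structure (build $T(g)$, generate Steiner points via precomputed horizontal visibility information with $O(1)$-time per-cut-segment tests, then sort on each cut-segment to get the $O(h_g\log^{3/2}h_g\,2^{\sqrt{\log h_g}})$-dominating term). The one place you deviate is in how the visibility information inside $B$ is computed. The paper invokes the linear-time visibility algorithms of Joe--Simpson and Lee twice: once to obtain the vertically visible region of $B$ from $g$ (giving all cut-segments in $O(n_B+h_g)$ total), and once to compute the horizontal projection $p_h(B)\in\partial B$ for all $p\in\calV(g)$ in $O(n_B+h_g)$ total. You instead build trapezoidal decompositions with $O(\log n_B)$-time point location and pay $O(h_g\log n_B)$ across the $O(h_g)$ ray-shooting and projection queries, then absorb the $h_g\log n_B$ term via the case analysis $h_g\lessgtr n_B/\log n_B$. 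That absorption argument is valid (the case $h_g > n_B/\log n_B$ indeed forces $\log n_B=O(\log h_g)$), so you recover the same bound, but the paper's use of the linear-time polygon-visibility subroutines sidesteps the extra term entirely and is the cleaner choice when $B$ is a simple polygon and all query points lie on a single chord $g$. Your final geometric observation --- that the one-sided test against $p_h(B)$ together with the $y$-range check of the cut-segment suffices because points of $\calV(g)$ can only be horizontally visible inside $B$ toward one side of $g$ --- is exactly the reason the paper's $O(1)$ test is sound, and it is worth spelling out as you propose.
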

\begin{proof}
To compute the cut-segments of $T(g)$, for each point $p\in \calV(g)$, we need to
compute the first point on the boundary of $B$ hit by extending a
vertical line segment from $p$ upwards.
%To this end, we can build a
%vertical visibility decomposition on $B$ in $O(n_B)$ time
%\cite{ref:ChazelleTr91} or a ray-shooting data structure on $B$ in
%$O(n_B)$ time \cite{ref:HershbergerA95}.
For this, we first compute the vertically visible region of $B$ from the segment $g$
using the linear time algorithms in \cite{ref:JoeCo87,ref:LeeVi83}, and then
find all such cut-segments from the points of $\calV(g)$, in $O(n_B+h_g)$ time.
The cut-segment tree $T(g)$ can then be computed in
$O(h_g\log h_g)$ time.

To compute the Steiner points on the cut-segments, for each point
$p\in \calV(g)$, we find the first point $p_h(B)$ on the boundary of $B$
% hit by shooting a ray leftwards from $p$, which can be done in $O(\log n_B)$
%time by using the above ray-shooting data structure.
horizontally visible from $p$.
The points $p_h(B)$ for all $p\in \calV(g)$ can be computed in totally
$O(n_B+h_g)$ time by using the algorithms in \cite{ref:JoeCo87,ref:LeeVi83}.

Next, we compute the Steiner points on the cut-segments of $T(g)$.
%Suppose we are about to
Determining whether a point $p\in \calV(g)$ is
horizontally visible to a cut-segment $l$ (and if yes, put a
corresponding Steiner point on $l$) takes $O(1)$ time using
$p_h(B)$, as follows. We first check whether the $y$-coordinate of $p$ is between the
$y$-coordinate of the lower endpoint of $l$ and that of the upper
endpoint of $l$; if yes, we check whether $l$ is between $p$ and
$p_h(B)$ (if yes, then $p$ is horizontally visible to $l$);
% and the Steiner point is the intersection of $l$ and $\overline{pp_h(B)}$.
otherwise, $p$ is not horizontally visible to $l$.
%Note that if $l$ is to the right of $p$, then $p$ cannot be horizontally
%visible to $l$ due to our assumption that the slope of $g$ is
%positive and the interior of $B$ is above $g$.
Thus, all Steiner
points can be obtained in $O(h_g \sqrt{\log h_g}2^{\sqrt{\log h_g}})$ time.

For each cut-segment $l$, to compute the edges between consecutive graph nodes
on $l$, it suffices to sort all Steiner points on
$l$. The sorting on all cut-segments takes $O(h_g\cdot \log^{3/2}h_g\cdot
2^{\sqrt{\log h_g}})$ time.

Hence, the total time for building the graph $G_E(g)$ is $O(n_B+h_g\cdot
\log^{3/2}h_g\cdot 2^{\sqrt{\log h_g}})$.
%which is $O(n_B+h_g\cdot \log^{3/2}h_g\cdot 2^{\sqrt{\log h_g}})$.
The lemma thus follows.
\end{proof}

%Our goal is to find a shortest path from $z$ to $t$ that crosses
%$\overline{z_1z_2}$. To this end, if there is such a shortest path
%that contains a point $p$ in

%Our goal is to use the graph $G_E(g)$ to find a shortest path from the
%point $z$ to any point of $\calV(g)$ on $\overline{z_1z_2}$.
We define a gateway set $V_g(z,G_E(g))$ for $z$ on $G_E(g)$ such that
for any point $p\in \calV(g)\cap \overline{z_1z_2}$,
there is a shortest path from $z$ to
$p$ using $G_E(g)$ containing a gateway of $z$.
$V_g(z,G_E(g))$ is defined similarly as $V^2_g(s,G_E)$ in Section \ref{sec:newgraph},
but only on the Steiner points in the triangle $\triangle zz_1z_2$
(because $\triangle zz_1z_2$ contains a shortest path
from $z$ to any point in $\calV(g)\cap \overline{z_1z_2}$). Specifically, for each
{\em relevant projection cut-segment} $l$  (defined similarly as the
relevant projection cut-lines in Section \ref{sec:newgraph})
of $z$ to the right of $z$, if $z$ is horizontally visible
to $l$, then the node of $G_E(g)$ on $l$ immediately below the horizontal projection
point of $z$ on $l$ is in $V_g(z,G_E(g))$. Thus,
$|V_g(z,G_E(g))|=O(\sqrt{\log h_g})$.

\begin{lemma}\label{lem:130}
For any point $p\in \calV(g)\cap \overline{z_1z_2}$,
there is a shortest path from $z$ to
$p$ in $B$ using $G_E(g)$ that contains a gateway of $z$ in $V_g(z,G_E(g))$.
\end{lemma}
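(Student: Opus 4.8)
The plan is to mirror the proof of Lemma~\ref{lem:20}, using an auxiliary graph $G_{old}(g)$ in place of $G_{old}$, while exploiting the fact that $\triangle zz_1z_2$ is obstacle-free. First I would record, from the proof of Lemma~\ref{lem:110}, that $z\in F$ and that the interior of $\triangle zz_1z_2$ contains no point of $\partial F$; hence $\triangle zz_1z_2\subseteq F\subseteq B$ and it contains no obstacle. Consequently, for any $p\in\calV(g)\cap\overline{z_1z_2}$, every $xy$-monotone path from $z$ to $p$ lying in $\triangle zz_1z_2$ is a shortest $z$-$p$ path in $B$. Since $z_1$ is horizontally aligned with $z$ and $z_2$ is vertically aligned with $z$, the whole segment $\overline{z_1z_2}$ lies weakly to the right of and weakly below $z$, which is why the gateways of $z$ need only sit on relevant projection cut-segments to the right of $z$, immediately below $z_h(l)$.

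Next I would introduce the companion graph $G_{old}(g)$, built on $\calV(g)$ with the same cut-segment tree $T(g)$ but with a single cut-segment per level and only type-2 Steiner points, together with the gateway set $V_g(z,G_{old}(g))$ of $z$ defined analogously to $V^2_g(s,G_{old})$ in Section~\ref{sec:pre} (restricted, as above, to projection cut-segments to the right of $z$). As in \cite{ref:ChenSh00,ref:ClarksonRe87,ref:ClarksonRe88}, $V_g(z,G_{old}(g))$ consists exactly of the nodes that would be adjacent to $z$ were $z$ inserted into $G_{old}(g)$, so there is a shortest $z$-$p$ path along $G_{old}(g)$ through some gateway $a\in V_g(z,G_{old}(g))$; since $\triangle zz_1z_2$ is obstacle-free and contains a shortest $z$-$p$ path, this path, and hence $a$, can be taken inside $\triangle zz_1z_2$, and the path is $xy$-monotone toward $p$. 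Because every edge of $G_{old}(g)$ that is not an edge of $G_E(g)$ is subdivided in $G_E(g)$ into a path of equal length, this is also a shortest $z$-$p$ path along $G_E(g)$.

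Then I would run the argument of Lemma~\ref{lem:20} in this setting. If $a\in V_g(z,G_E(g))$ we are done. Otherwise $a$ lies on a cut-segment $l(v)$ in some super-level $i$ of $T(g)$ that is not a relevant projection cut-segment of $z$; let $l(v')$ be the relevant projection cut-segment of $z$ in super-level $i$, so $ln(v')>ln(v)$, and by the left-to-right monotonicity of the level numbers of projection cut-segments (the analog of \cite{ref:ChenSh00}, Lemma~3.4) the cut-segment $l(v')$ lies between $z$ and $l(v)$. Let $p^{*}\in\calV(g)$ be the point defining $a=p^{*}_h(l(v))$; then $p^{*}\in V(u)$ for the top node $u$ of super-level $i$. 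Since $a$ is horizontally visible to the vertical line through $z$ (the analog of the corresponding observation of \cite{ref:ChenSh00}, which applies because no via point of $z$ here is of type~1) and $l(v')$ lies between them, $a$, hence $p^{*}$, is horizontally visible to $l(v')$, so $p^{*}$ defines a type-3 Steiner point $p^{*}_h(l(v'))$ on $l(v')$. Let $b$ be the node of $G_E(g)$ on $l(v')$ immediately below $z_h(l(v'))$; then $b\in V_g(z,G_E(g))$, and $b$ is no lower than $p^{*}_h(l(v'))$. Hence the concatenation of the gateway edge from $z$ to $b$, the downward subsegment $\overline{b\,p^{*}_h(l(v'))}$ of $l(v')$, the subsegment $\overline{p^{*}_h(l(v'))\,a}$ of the $p^{*}$-chain, and the $a$-to-$p$ portion of the $G_{old}(g)$ path is an $xy$-monotone path from $z$ to $p$ inside $\triangle zz_1z_2$ through the gateway $b$; being $xy$-monotone and obstacle-free, it is a shortest $z$-$p$ path in $B$, proving the lemma.

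The main obstacle is verifying that the imported ingredients really transfer: (i) that the recursive cut-segment construction of $G_{old}(g)$ still enjoys the Clarkson--Reif gateway property for the ``inserted'' point $z$, even though the Steiner-point-defining objects are points of $\calV(g)$ on the chord $g$ rather than obstacle vertices and the enclosing region is the simple polygon $B$; and (ii) that the two auxiliary observations of \cite{ref:ChenSh00} used above (monotonicity of level numbers along projection cut-segments, and horizontal visibility of a via point to the vertical line through the source) continue to hold. Both transfer because within $\triangle zz_1z_2$ there are no obstacles whatsoever, so the only role of the cut-segments and of $\partial B$ is bookkeeping; I would therefore spell out just enough of this bookkeeping to confirm that $a$, $b$, $l(v')$, and the constructed staircase all remain inside $\triangle zz_1z_2$, which is the one place where care is genuinely needed.
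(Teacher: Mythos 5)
Your proof takes a genuinely different and more circuitous route than the paper's. The paper's proof is direct and short: since $p\in\calV(g)$, the point $p$ itself defines a node of $G_E(g)$ and defines Steiner points on the cut-segments of its super-level, and since $\triangle zz_1z_2$ is obstacle-free, $z$ is horizontally visible to the cut-segment $l_p$ through $p$ and any $xy$-monotone staircase inside the triangle is a shortest path. From this the paper explicitly exhibits the path $\overline{zz_h(l(u))}\cup\overline{z_h(l(u))p_h(l(u))}\cup\overline{p_h(l(u))p}$ for a relevant projection cut-segment $l(u)$ between $z$ and $p$ (or the degenerate path $\overline{zz_h(l_p)}\cup\overline{z_h(l_p)p}$ when $l_p$ itself is relevant); the gateway of $z$ on $l(u)$ automatically lies on the middle (vertical) segment because $p_h(l(u))$ is below $z_h(l(u))$. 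No companion graph $G_{old}(g)$ is ever needed, and no Clarkson--Reif-style ``insertion'' argument is invoked.

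By contrast, you introduce an auxiliary graph $G_{old}(g)$ and then mirror the two-stage structure of Lemma~\ref{lem:20}: first obtain a via point $a\in V_g(z,G_{old}(g))$ by invoking a Clarkson--Reif gateway property for the inserted point $z$ in $G_{old}(g)$, then climb down the Steiner chain of the defining point $p^{*}$ of $a$ to reach a gateway $b\in V_g(z,G_E(g))$. This machinery is unnecessary here: the paper climbs down the chain of $p$ itself, which you can do because $p\in\calV(g)$ is one of the graph-defining points — a simplification your argument does not use. More importantly, the central step of your proof — that $G_{old}(g)$ enjoys the gateway property for the inserted point $z$ — is asserted by analogy and explicitly deferred (``the main obstacle is verifying that the imported ingredients really transfer''), but never actually established. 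That property was proved in \cite{ref:ChenSh00} for $G_{old}$ built on obstacle vertices in the free space, and it would indeed require a (short) fresh proof in the cut-segment/simple-polygon setting; since the lemma at hand can be proved without it, this amounts to an unresolved gap in your proposal rather than a genuine alternative. The intuition you offer for why the transfer works (the triangle is obstacle-free) is correct and is precisely what makes the paper's direct construction go through; if you had applied that intuition to build the $xy$-monotone path directly, you would have reproduced the paper's argument without the $G_{old}(g)$ detour.
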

\begin{proof}
Consider a point $p\in \calV(g)\cap \overline{z_1z_2}$. Note that
$p$ defines a node in $G_E(g)$. Let $l_p$ be the cut-segment
through $p$. Since the triangle $\triangle zz_1z_2\subseteq B$ and
$p\in \overline{z_1z_2}$, $z$ is horizontally visible to $l_p$.

If there is no other cut-segment of $T(g)$ strictly
between $z$ and $l_p$, then $l_p$ must be a relevant projection
cut-segment of $z$. Let $p'$ be the gateway of $z$ on $l_p$, i.e., the
graph node on $l_p$ immediately below the horizontal
projection $z_h(l_p)$ of $z$ on $l_p$.
Note that the path
$\overline{zz_h(l_p)}\cup \overline{z_h(l_p)p}$ is a shortest path from $z$
to $p$ since it is $xy$-monotone. Clearly, this path contains
the gateway $p'$.

If there is at least one cut-segment strictly between $z$ and $l_p$,
then if $l_p$ is a relevant cut-segment of $z$, we can prove the
lemma by a similar analysis as above; otherwise,
there is at least one node $u$ in $T(g)$ such that
$l(u)$ is a relevant projection cut-segment of $z$ between $z$ and $p$
and $p$ defines a Steiner point on $l(u)$ (this can be seen
from the definition of the graph $G_E(g)$; we omit the
details). Let $z_h(l(u))$ be the horizontal projection
of $z$ on $l(u)$ and $p_h(l(u))$ be the horizontal projection
of $p$ on $l(u)$. The path
$\overline{zz_h(l(u))}\cup\overline{z_h(l(u))p_h(l(u))}\cup
\overline{p_h(l(u))p}$ is a
shortest path from $z$ to $p$ since it is $xy$-monotone. Because
$p_h(l(u))$ is a Steiner point on $l(u)$, this
path must contain a gateway of $z$ on $l(u)$ (this gateway must be
on $\overline{z_h(l(u))p_h(l(u))}$).
The lemma thus follows.
\end{proof}

Since $\calV(g)\subseteq \calV(\calQ)$, each
point of $\calV(g)$ is also a node of $G_E(\calM)$. We merge
the two graphs $G_E(\calM)$ and $G_E(g)$ into one graph, denoted by $G_E(\calM,g)$,
by treating the two nodes in these two graphs
defined by the same point in $\calV(g)$ as a single node.
%Let $G_E(\calM,g)$ denote the merged graph.
By Lemmas \ref{lem:60} and
\ref{lem:130}, we have the following result.

\begin{lemma}\label{lem:140}
If a shortest \st\ path contains a point in $\calV(g)\cap \overline{z_1z_2}$,
then there is a shortest \st\ path along $G_E(\calM,g)$ containing
a gateway of $z$ in $V_g(z,G_E(g))$ and a gateway of $t$ in
$V_g(t,G_E(\calM))$.
\end{lemma}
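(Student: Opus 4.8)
The plan is to synthesize three facts already in hand: Lemma~\ref{lem:110} (a shortest path from $s$ to any point of $\overline{z_1z_2}$ can be routed through $z$), Lemma~\ref{lem:130} (a shortest path from $z$ to any point of $\calV(g)\cap\overline{z_1z_2}$ can be routed through a gateway of $z$ in $G_E(g)$), and the $t$-analogue of Lemma~\ref{lem:60} (a shortest path from $t$ to any point of $\calV(\calQ)$ can be routed through a gateway of $t$ in $G_E(\calM)$). These three subpaths will be glued at a common point $p^*\in\calV(g)$, and the crucial structural observation is that $p^*$, being in $\calV(g)\subseteq\calV(\calQ)$, is a single node of the merged graph $G_E(\calM,g)$, so the two graph-subpaths concatenate into one path along $G_E(\calM,g)$.

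First I would fix a point $p^*\in\calV(g)\cap\overline{z_1z_2}$ that lies on some shortest \st\ path. Distance additivity along that path gives $d(s,t)=d(s,p^*)+d(p^*,t)$. Applying Lemma~\ref{lem:110} with $p=p^*$ produces a shortest $s$-$p^*$ path through $z$, whence $d(s,p^*)=d(s,z)+d(z,p^*)$, and a shortest $s$-$z$ path can be taken inside the funnel $F\subseteq B$. For the $z$-to-$p^*$ leg I would invoke Lemma~\ref{lem:130} with $p=p^*$: it yields a shortest $z$-$p^*$ path in $B$ using $G_E(g)$ that contains a gateway $\zeta\in V_g(z,G_E(g))$; since that path is $xy$-monotone it has length exactly $d(z,p^*)$. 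For the $p^*$-to-$t$ leg, note $p^*\in\calV(g)\subseteq\calV(\calQ)$, so the $t$-version of Lemma~\ref{lem:60}, together with Lemma~\ref{lem:40} (which guarantees that graph lengths equal free-space distances), gives a shortest $p^*$-$t$ path using $G_E(\calM)$ that contains a gateway $\tau\in V_g(t,G_E(\calM))$ and has length $d(p^*,t)$.

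Next I would concatenate these three legs into a single path $\pi^*$: the shortest $s$-$z$ path in $B$, then the $z$-$p^*$ path through $\zeta$ (edges of $G_E(g)$), then the $p^*$-$t$ path through $\tau$ (edges of $G_E(\calM)$). The first two legs lie in $B$ and the third lies in $\calM'$, and the last two meet only at $p^*\in g=\partial B\cap\partial\calM$, so $\pi^*$ is a legal free-space path; its length is $d(s,z)+d(z,p^*)+d(p^*,t)=d(s,p^*)+d(p^*,t)=d(s,t)$, so $\pi^*$ is a shortest \st\ path. Because $G_E(\calM,g)$ identifies the node of $G_E(g)$ and the node of $G_E(\calM)$ corresponding to $p^*$ (indeed to every point of $\calV(g)$), the portion of $\pi^*$ from $z$ onward is a path along $G_E(\calM,g)$, and it contains the gateway $\zeta$ of $z$ in $V_g(z,G_E(g))$ and the gateway $\tau$ of $t$ in $V_g(t,G_E(\calM))$, which is the assertion of the lemma.

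I do not expect a single deep obstacle here; this is a gluing lemma, and the delicate part is bookkeeping. Specifically, one must be careful that every ``shortest path using graph $X$'' quoted above genuinely has length equal to the corresponding free-space distance (this is exactly why Lemma~\ref{lem:40} is invoked for the $p^*$-$t$ leg and why the $xy$-monotonicity in Lemma~\ref{lem:130} is invoked for the $z$-$p^*$ leg), and that the gluing at $p^*$ is legitimate — i.e.\ that $p^*$ is simultaneously a node of $G_E(g)$ and of $G_E(\calM)$, hence a single node of $G_E(\calM,g)$, so the two subpaths join at no extra cost. The phrase ``a shortest \st\ path along $G_E(\calM,g)$'' is to be read, as elsewhere in this section, as referring to the portion of the path from $z$ to $t$, the $s$-to-$z$ prefix being a path in $B$.
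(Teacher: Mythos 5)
Your proof follows exactly the paper's approach: fix the crossing point $p^*\in\calV(g)\cap\overline{z_1z_2}$ on a shortest \st\ path, invoke Lemma~\ref{lem:110} to route through $z$, Lemma~\ref{lem:130} for the $z$-to-$p^*$ leg in $G_E(g)$, and Lemma~\ref{lem:60} for the $p^*$-to-$t$ leg in $G_E(\calM)$, then glue at $p^*$ as a shared node of $G_E(\calM,g)$. Your version simply spells out the length additivity and the legitimacy of the gluing more explicitly (and cites Lemma~\ref{lem:40} by name), but the argument is the same.
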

\begin{proof}
Let $p$ be a point of $\calV(g)\cap \overline{z_1z_2}$ that is
contained in a shortest \st\ path. By Lemma \ref{lem:110}, there is a
shortest path from $s$ to $p$ that contains $z$. By Lemma
\ref{lem:130}, there is a shortest path from $z$ to $p$ that contains
a gateway of $z$ in $V_g(z,G_E(g))$. On the other hand, since both $t$
and $p$ are in the ocean $\calM$ and $p\in \calV(g)\subseteq \calV(\calQ)$,
by Lemma \ref{lem:60}, there exists
a shortest path from $t$ to $p$ that contains a gateway of $t$ in
$V_g(t,G_E(\calM))$.  This proves the lemma.
\end{proof}

By Lemma \ref{lem:140}, if there is a shortest path from $z$ to $t$
that contains a
point of $\calV(g)\cap \overline{z_1z_2}$, then we can use the gateways
of both $z$ and $t$ to find a shortest path along the graph $G_E(\calM,g)$.
By using a similar algorithm as that for Lemma \ref{lem:30}, we can
compute the gateways of $z$ on $G_E(g)$.

\begin{lemma}\label{lem:150}
With a preprocessing of $O(h_g\log^{3/2}h_g2^{\sqrt{\log h_g}})$ time and
$O(h_g\sqrt{\log h_g}2^{\sqrt{\log h_g}})$ space,  we
can compute the gateway set $V_g(z,G_E(g))$ of $z$ in $O(\log h)$ time.
\end{lemma}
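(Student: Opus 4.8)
The plan is to mirror the algorithm of Lemma~\ref{lem:30} that computes $V^2_g(s,G_E)$, now carried out on the cut-segment tree $T(g)$ inside the simple polygon $B$. Two ingredients bring the query time down to $O(\log h_g)=O(\log h)$ rather than $O(\log^{3/2} h_g)$: a fractional cascading structure along $T(g)$, and the observation that, because the interior of every cut-segment lies in the interior of $B$, no explicit visibility test is needed between the horizontal projection of $z$ onto a cut-segment and the neighboring graph node on that cut-segment (in contrast to Lemma~\ref{lem:30}, where a cut-line may be blocked by obstacles).

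For the preprocessing, recall from Lemma~\ref{lem:120} that $G_E(g)$, the cut-segment tree $T(g)$, and, for every cut-segment $l(u)$, the list of graph nodes of $G_E(g)$ on $l(u)$ sorted by $y$-coordinate are already available, with $O(h_g\sqrt{\log h_g}2^{\sqrt{\log h_g}})$ nodes in total. Treating each node $u$ of $T(g)$ as carrying its sorted node list as a catalog, we build a fractional cascading data structure~\cite{ref:ChazelleFr86} over the binary tree $T(g)$, so that for any root-to-leaf path of $T(g)$ and any query value, the predecessor of that value can be located in the catalog of \emph{every} node along the path in $O(\log h_g)$ time in total (using that the logarithm of any quantity polynomial in $h_g$ is $O(\log h_g)$). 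This structure uses $O(h_g\sqrt{\log h_g}2^{\sqrt{\log h_g}})$ space and is built in $O(h_g\log^{3/2} h_g 2^{\sqrt{\log h_g}})$ time, matching the claimed bounds.

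For the query, suppose $z$ is given together with its leftward and rightward projections on $\partial B$ (these are produced when $z$ is located; cf.\ Lemma~\ref{lem:170}). As in Section~\ref{sec:newgraph}, we walk down $T(g)$ from the root toward the leaf whose $x$-range contains $z$: at a node $u$ we test in $O(1)$ time, using the two stored horizontal projections of $z$ and the $y$-coordinates of the endpoints of $l(u)$, whether $z$ is horizontally visible to $l(u)$, and we descend to the left or right child according to the $x$-coordinate of $z$ relative to $l(u)$. By the same reasoning as for the relevant projection cut-lines in Section~\ref{sec:newgraph}, the $O(\sqrt{\log h_g})$ relevant projection cut-segments of $z$ are found by this walk in $O(\log h_g)$ time and all lie on one root-to-leaf path of $T(g)$. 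For each such cut-segment $l$, the horizontal projection $z_h(l)$ lies on $l$; feeding its $y$-coordinate through the fractional cascading structure along this path returns, in $O(\log h_g)$ time in total, the graph node of $G_E(g)$ on $l$ immediately below $z_h(l)$. No further visibility test is required, since the open segment between $z_h(l)$ and that node is part of $l$ and hence lies in the interior of $B$ in the free space. The resulting set of nodes is $V_g(z,G_E(g))$, computed in $O(\log h_g)=O(\log h)$ time, and $z_h(l)$ together with the portion of $l$ from $z_h(l)$ down to that node yields the associated gateway edge.

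The main obstacle is the $O(\log h)$ query bound: a separate binary search on each of the $O(\sqrt{\log h_g})$ relevant cut-segments would cost $O(\log^{3/2} h_g)$ --- precisely the difficulty flagged in Lemma~\ref{lem:30} --- so one must argue that all relevant projection cut-segments of $z$ lie on a single root-to-leaf path of $T(g)$ and then rely on fractional cascading along that path. A secondary point worth stating cleanly is that the visibility information needed at query time is obtained within the budget: the horizontal visibility of $z$ rides along with the computation of $z$ in Lemma~\ref{lem:170}, and the ``vertical'' visibility along a cut-segment is vacuous by the placement of the cut-segments inside $B$, so neither costs time depending on $n_B$. Degenerate cases --- $z_h(l)$ coinciding with a graph node, or the relevant cut-segment being the one immediately adjacent to $z$ --- are handled as in the proof of Lemma~\ref{lem:30}.
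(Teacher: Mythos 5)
Your proof is correct and follows essentially the same approach as the paper's: sort the graph nodes on each cut-segment, build a fractional cascading structure along $T(g)$ over these lists, exploit that the relevant projection cut-segments of $z$ lie on a single root-to-leaf path, and note that visibility between $z_h(l)$ and the returned node is automatic because cut-segments lie in the interior of $B$. The paper states this quite tersely; your write-up supplies the supporting details (the $O(1)$ horizontal-visibility test via the projections of $z$, the single-path observation justifying the use of cascading, and the vacuity of the along-segment visibility check) but does not deviate in substance.
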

\begin{proof}
The algorithm is similar to that in Lemma \ref{lem:30} for computing
$V^2_g(s,G_E)$. One main difference is that here every two graph
nodes on any cut-segment of $T(g)$ are visible to each other.
% we need not use the interesting intersection points here.
As the preprocessing, we build a sorted list of the graph nodes on each
cut-segment of $T(g)$, and construct a fractional cascading data
structure \cite{ref:ChazelleFr86} along $T(g)$ for the sorted lists of all
cut-segments. Then for a point
$z$, $V_g(z,G_E(g))$ can be computed in $O(\log h)$ time.
% since the height of $T(g)$ is $O(\log h)$.
\end{proof}

So far, we have shown how to find a shortest \st\ path if such a path
contains a point in $\{z_1,z_2\}\cup \{\calV(g)\cap \overline{z_1z_2}\}$.
It remains to handle the case when
no shortest \st\ path contains a point in
$\{z_1,z_2\}\cup \{\calV(g)\cap \overline{z_1z_2}\}$ (including the case of
$\calV(g)=\emptyset$), i.e.,
no shortest path from $z$ to $t$ contains a point in
$\{z_1,z_2\}\cup \{\calV(g)\cap \overline{z_1z_2}\}$.
Lemma \ref{lem:160} below shows that in this case, $t\in \calM$ must be horizontally
visible to $\overline{zz_2}$ and thus there is a trivial shortest
path from $z$ to $t$.

\begin{lemma}\label{lem:160}
If no shortest path $\pi(z,t)$ contains a point in
$\calV(g)\cap \overline{z_1z_2}$ (this includes the case of $\calV(g)=\emptyset$), then
$t$ must be horizontally visible to $\overline{zz_2}$.
\end{lemma}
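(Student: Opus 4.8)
The plan is to argue by contradiction: assume no shortest path $\pi(z,t)$ contains a point of $\calV(g)\cap \overline{z_1z_2}$, and also (by the standing assumption before the lemma) that no shortest \st\ path crosses $\overline{a_1z_1}\cup\overline{a_2z_2}$, so every shortest \st\ path crosses the open segment $\overline{z_1z_2}$. Pick any shortest \st\ path $\pi$ and let $w$ be a point where it crosses $\overline{z_1z_2}$. By Lemma~\ref{lem:110} there is a shortest \st\ path that passes through $z$, so there is a shortest path $\pi(z,t)$ leaving $z$, going to some point $w\in\overline{z_1z_2}$ on $g$, and then continuing in $\calM'$ to $t$. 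First I would record the geometric picture: $z$ sits at the corner of the pseudo-triangle $\triangle zz_1z_2$, the side $\overline{zz_2}$ is vertical (it lies on the line $l_2$ through $\overline{z_2z_2'}$), the side $\overline{zz_1}$ is horizontal, and $\overline{z_1z_2}\subseteq g$ is the ``hypotenuse'' with positive slope. Since $t\in\calM$ lies on the far side of $g$ from the interior of $B$, and a shortest path from $z$ reaching $g$ on $\overline{z_1z_2}$ is $xy$-monotone inside $\triangle zz_1z_2$, the portion of $\pi(z,t)$ inside $B$ is $xy$-monotone and in fact must be going rightward-and-downward (or the mirror situation) from $z$ to $w$.

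Next I would exploit the hypothesis that $\pi(z,t)$ avoids all points of $\calV(g)\cap\overline{z_1z_2}$. The crossing point $w$ is therefore strictly between two consecutive points of $\calV(\calQ)$ along $\partial\calQ$ on the segment $\overline{z_1z_2}$ — equivalently, $w$ lies in the relative interior of an elementary curve of $\calQ$ whose supporting chain contains $\overline{z_1z_2}$ (the gate $g$ is a single edge of $\partial\calM$, hence entirely inside one elementary curve unless one of its $\calV(\calQ)$-endpoints lands on it, which the hypothesis rules out on $\overline{z_1z_2}$). The key point is then that on the $\calM$-side the shortest path can be ``slid'' along this elementary curve: since an elementary curve is $xy$-monotone and a shortest path crossing it transversally at an interior point can be perturbed so that the crossing point moves monotonically toward $z_2$ without increasing length, a standard exchange argument (of the same flavor as the proofs of Lemmas~\ref{lem:90}, \ref{lem:100}, and \ref{lem:110}) shows we may take the crossing point $w$ to be $z_2$ itself — unless the path, before reaching $z_2$, would have to cross a point of $\calV(g)$, which is exactly what we assumed does not happen. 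So in fact there is a shortest path from $z$ to $t$ crossing $g$ at $z_2$; but $z_2\in\overline{z_1z_2}$ and $z_2$ need not be in $\calV(g)$, so this is consistent, and what it buys us is that the whole subpath of $\pi(z,t)$ from $z$ to $z_2$ may be taken to be $\overline{zz_2}$ (it is $xy$-monotone between the corner $z$ and $z_2$, and $\overline{zz_2}\subseteq F\subseteq B$ by Lemma~\ref{lem:110}). Concatenating $\overline{zz_2}$ with a shortest path from $z_2$ to $t$ in $\calM'$ gives a shortest \st\ path through $z_2$; since $z_2$ is vertically visible to $g$ by construction and the path leaves $z_2$ into $\calM$, I would then show this forces $t$ to be reachable from $\overline{zz_2}$ by a horizontal segment, i.e., $t$ is horizontally visible to $\overline{zz_2}$: the $xy$-monotone shortest path from $z_2$ that avoids all of $\calV(g)$ must, on $\partial\calQ$ near $g$, run along the elementary curve containing $z_2$, which is $xy$-monotone, and the only way a shortest path from $z_2$ to $t$ can avoid introducing a new $\calV(\calQ)$ vertex as a turn point is for it to be a single staircase with one horizontal and one vertical leg meeting $\overline{zz_2}$ — and monotonicity together with the vertical direction of $\overline{zz_2}$ pins the contact to be horizontal.

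The main obstacle I expect is the second step: making precise the ``slide the crossing point along the elementary curve'' argument and showing that the obstruction to sliding all the way is genuinely a point of $\calV(g)$ (and not, say, a reflex vertex of $\calM$ that is a corridor-path terminal, or the endpoint $z_1$). This requires carefully using the definition of $\calV(\calQ)$ — that it contains all extreme vertices and all corridor path terminals, together with their four axis projections — to guarantee that any ``turn'' a shortest path is forced to make on $\partial\calQ$ within $\overline{z_1z_2}$ occurs at a point of $\calV(g)$. Once that is nailed down, the rest is the same kind of $xy$-monotone exchange surgery used repeatedly above (replace a subpath by an L-shaped $xy$-monotone detour through $z$ or $z_2$), which is routine. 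I would also need the small observation that $\overline{zz_2}$ lies inside $B$ and that moving the crossing point from $w$ toward $z_2$ keeps the path inside $\calM'$ on the outside and inside $F\subseteq B$ on the inside; both follow from Lemmas~\ref{lem:100}–\ref{lem:110} and the convexity of the funnel sides.
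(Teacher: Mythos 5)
Your proposal has a genuine gap, and the approach you sketch is not the one that works here.

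The central move you propose---``slide the crossing point $w$ along the elementary curve toward $z_2$''---does not hold up. A shortest path from $z$ to $t$ crosses $g$ at whatever point minimizes total length; there is no exchange argument that forces the crossing point to $z_2$. If $t$ lies horizontally across from some interior point $p$ of $\overline{v_iv_{i+1}}$, then crossing near $p$ is strictly shorter than crossing at $z_2$, and no amount of $xy$-monotone surgery changes that. Moreover, your hypothesis is only that no shortest path contains a point of $\calV(g)\cap\overline{z_1z_2}$; that says nothing that would block or enable sliding, since $z_2$ itself typically is not in $\calV(g)$. Even granting the slide, the conclusion you derive from it---that there is a shortest path through $z_2$ whose subpath inside $B$ is $\overline{zz_2}$, and hence $t$ is ``a single staircase away''---does not establish horizontal visibility of $t$ to $\overline{zz_2}$. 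You are trying to read off where $t$ sits from the shape of one particular path, which is the wrong direction of inference.

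The paper's proof does something structurally different, and this is the missing idea: fix the crossing point $p\in\mathrm{int}(\overline{v_iv_{i+1}})$ (it does \emph{not} move), extend horizontal rays from $v_i$ and $v_{i+1}$ into $\calM$ until they land as $u_i,u_{i+1}$ on a common elementary curve $\beta$, and consider the region $R$ bounded by $\overline{v_iu_i}$, $\overline{v_iv_{i+1}}$, $\overline{v_{i+1}u_{i+1}}$, and $\beta(u_i,u_{i+1})$. Every point of $R$ is horizontally visible to $\overline{v_iv_{i+1}}\subseteq\overline{zz_2}$. One then shows $t\in R$ by contradiction: if $t\notin R$, the $\calM'$-portion $\pi(p,t)$ of the shortest path would have to cross $\overline{v_iu_i}$ or $\overline{v_{i+1}u_{i+1}}$, and a standard $xy$-monotone shortcut through $v_i$ or $v_{i+1}$ would then produce a shortest path containing a point of $\{z_1,z_2\}\cup(\calV(g)\cap\overline{z_1z_2})$, contradicting the hypothesis. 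The delicate part---which your ``main obstacle'' paragraph gestures at but does not resolve---is exactly the case analysis when $\beta(u_i,u_{i+1})$ overlaps a canal gate, where one must argue via corridor-path terminals (which are in $\calV(\calQ)$) that $\pi(p,t)$ still effectively crosses one of the two horizontal segments. Without the region $R$ and the contradiction argument pinning down where $t$ lies, your proof does not close.
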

\begin{proof}
Let the points of $\calV(g)\cap \overline{z_1z_2}$ be
$v_1,v_2,\ldots,v_m$ ordered along $\overline{z_1z_2}$ from $z_1$ to
$z_2$, and let $v_0=z_1$ and $v_{m+1}=z_2$. Under the condition of this lemma, since
$t \in \calM$, there exists a shortest path
$\pi$ from $z$ to $t$ that crosses $\overline{z_1z_2}$ once, say, at a point $p$ in
the interior of $\overline{v_iv_{i+1}}$,
for some $i$ with $0\leq i\leq m$ (see Fig.~\ref{fig:visibility}).
For any two points $q_1$ and $q_2$
on $\pi$, let $\pi(q_1,q_2)$ denote the subpath of $\pi$ between
$q_1$ and $q_2$. Hence, $\pi(z,p)$ is in $B$ and $\pi(p,t)$ is outside $B$. Then
$\pi(p,t)$ is in $\calM'$ (i.e., $\calM'$ is the union of $\calM$ and all corridor paths).
% As discussed earlier, such a path $\pi$ always exists.

\begin{figure}[t]
\begin{minipage}[t]{\linewidth}
\begin{center}
\includegraphics[totalheight=1.5in]{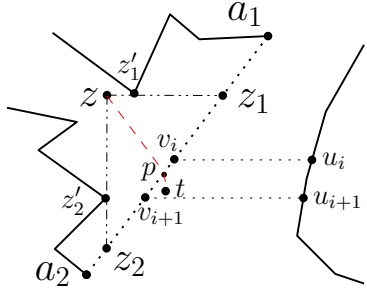}
\caption{\footnotesize Illustrating a shortest path (the red dashed curve) from
$z$ to $t$ crossing the interior of $\overline{v_iv_{i+1}}$ at $p$.}
\label{fig:visibility}
\end{center}
\end{minipage}
\vspace*{-0.15in}
\end{figure}

%DANNY-NEW
%
% The definitions of $u_i$ and $u_{i+1}$ are not quite correct.  There are
% cases that the old definitions cannot capture.  I modify them below.
%
We extend a horizontal line segment from $v_i$ (resp., $v_{i+1}$) to the right until
hitting the first point on $\partial\calQ$, denoted by $u_i$ (resp., $u_{i+1}$);
if $u_i$ and $u_{i+1}$ are not on the same elementary curve of
$\calQ$ (in which case
one or both of $u_i$ and $u_{i+1}$ are extremes on different
elementary curves),
then we keep moving one or both of $u_i$ and $u_{i+1}$ horizontally
to the right until hitting the next point on $\partial\calQ$.
% (see Fig.~\ref{fig:visibility}).
%We define the point $u_{i+1}$ similarly.
By the definitions of $\calV(\calQ)$ and $\calV(g)$, in this way,
we can always put both $u_i$ and $u_{i+1}$ on the same
elementary curve of $\calQ$, say $\beta$ (see Fig.~\ref{fig:visibility}); let
$\beta(u_i,u_{i+1})$ denote the
portion of $\beta$ between $u_i$ and $u_{i+1}$. Let $R$ denote the
region enclosed by $\overline{u_iv_i}$, $\overline{v_iv_{i+1}}$,
$\overline{v_{i+1}u_{i+1}}$, and $\beta(u_i,u_{i+1})$. Note that for
any point $q\in R$, $q$ is horizontally visible to
$\overline{v_iv_{i+1}}$ and thus is horizontally visible to
$\overline{zz_2}$.
In the following, we will show that $t$ must be in $R$, which proves the lemma.

Suppose to the contrary $t\not\in R$. We then show that
the path $\pi(p,t)$ must intersect $\overline{v_iu_{i}}$ or
$\overline{v_{i+1}u_{i+1}}$, which implies that
there is a shortest $z$-$t$ path containing a point in
$\{z_1,z_2\}\cup \{\calV(g)\cap \overline{z_1z_2}\}$,
a contradiction (recall that we have an assumption that no shortest \st\ paths cross $\overline{a_1z_1}\cup \overline{a_2z_2}$). Indeed, if $\pi(p,t)$ intersects $\overline{v_iu_{i}}$
(resp., $\overline{v_{i+1}u_{i+1}}$), say, at a point $q$, then we can obtain a new
$z$-$t$ path $\pi'$ by replacing $\pi(z,q)$
with an $xy$-monotone path $\overline{zv_i}\cup\overline{v_iq}$
(resp., $\overline{zv_{i+1}}\cup\overline{v_{i+1}q}$), and $\pi'$ is
a shortest $z$-$t$ path containing
a point in $\{z_1,z_2\}\cup \{\calV(g)\cap \overline{z_1z_2}\}$.
%If $\pi(t,p)$ intersects $\overline{v_{i+1}u_{i+1}}$, we have a similar argument.
Below, we show that $\pi(p,t)$ must intersect $\overline{v_iu_{i}}$ or
$\overline{v_{i+1}u_{i+1}}$. Note that $\beta(u_i,u_{i+1})$ may
overlap with a gate of a canal. Depending on whether $\beta(u_i,u_{i+1})$
overlaps with any canal gate, there are two possible cases.

\begin{enumerate}
\item
If $\beta(u_i,u_{i+1})$ does not overlap with any canal gate, then
since $t\in \calM$, $t\not\in R$, $p\in R$, and $\pi(p,t)\subseteq \calM'$, if we go
from $t$ to $p$, we must enter $R$.  The only place on the boundary of $R$ we can cross to
enter $R$ is either $\overline{v_iu_i}$ or
$\overline{v_{i+1}u_{i+1}}$. Hence, $\pi(t,p)$ must intersect $\overline{v_iu_{i}}$ or
$\overline{v_{i+1}u_{i+1}}$.

%
%DANNY-NEW
%
% I change the arguments below.
%
\item
If $\beta(u_i,u_{i+1})$ overlaps with a canal gate, say $g_1$, then one may wonder
that $\pi(t,p)$ could enter the interior of $R$ through $g_1$ without
crossing any of $\overline{v_iu_i}$ and $\overline{v_{i+1}u_{i+1}}$.
%However, we prove below that this is not possible.
Since $g_1$ is a canal gate, one of $g_1$'s endpoints, say, $x$, must be a corridor path
terminal, and $x$ may or may not be on $\beta(u_i,u_{i+1})$.
If $x$ is on $\beta(u_i,u_{i+1})$, then since $x$ is in $\calV(\calQ)$,
$x$ cannot be in the interior of $\beta(u_i,u_{i+1})$ and can only be at an endpoint
of $\beta(u_i,u_{i+1})$. Let $C$ be the canal that has $g_1$ as a
gate, and $\pi(C)$ be the corridor path of $C$. If $\pi(t,p)$
enters the interior of $R$ through $g_1$, then it must travel through the canal $C$,
implying that $\pi(t,p)\subseteq \calM'$ contains the corridor path $\pi(C)$.
%since $\pi(t,p)$ is in $\calM'$.
Since $x$ is on $\pi(C)$, $\pi(t,p)$ contains $x$.
%But $x$ is not in the interior of $R$, which contradicts
%with that $\pi(t,p)$ enters the interior of $R$ through $g_1$.
If $x$ is on $\beta(u_i,u_{i+1})$ (and thus is an endpoint of $\beta(u_i,u_{i+1})$),
then $x$ is one of $u_i$ or $u_{i+1}$; hence, $\pi(t,p)$ intersects $\overline{v_iu_{i}}$ or
$\overline{v_{i+1}u_{i+1}}$.
Suppose now $x$ is not on $\beta(u_i,u_{i+1})$.  Then an endpoint of $\beta(u_i,u_{i+1})$,
say, $u_i$, lies on $g_1$ (but $u_i\not=x$). Further, $\pi(t,p)$ goes through $x$, and then
enters $R$, but without intersecting any of $\overline{v_iu_i}$
and $\overline{v_{i+1}u_{i+1}}$.  Thus, $\pi(t,p)$ must cross some point $q$ of $g_1$
to enter $R$.  We can then replace the portion $\pi(x,q)$ of $\pi(t,p)$ by the segment
$\overline{xq}\subseteq g_1$ to obtain a new shortest $t$-$p$ path.
Since $u_i$ divides $g_1$ into two parts, one outside $R$ and
containing $x$ and the other intersecting $R$ and containing $q$, the segment
$\overline{xq}$ contains $u_i$.  Hence, the new shortest $t$-$p$ path
intersects $\overline{v_iu_{i}}$.

%Even if $\pi(p,x)$ may contain
%other corridor paths whose canals have gates intersecting
%$\widehat{u_iv'_{i+1}}$, since $\widehat{u_iv'_{i+1}}$ can intersect
%only at most a finite number of canal gates, the similar analysis will
%eventually prove that $\pi(p,t)$ must intersect  either
%$\overline{v_iu_i}$ or $\overline{v_{i+1}u_{i+1}}$.
\end{enumerate}

The lemma thus follows.
\end{proof}

By Lemma \ref{lem:160}, if the condition of the lemma
holds, then we can always find a trivial shortest path from $z$ to
$t$ by shooting vertical and horizontal rays from $z$ and $t$, respectively.

We have finished all possible cases for finding a shortest \st\ path
when $s\in B$ and $t\in \calM$.
The next lemma is concerned with computing the special
points $z_1,z_2$, and $z$ for any point $s$ in $B$.

\begin{lemma}\label{lem:170}
With a preprocessing of $O(n_B)$ time and space,
the three special points $z_1$, $z_2$, and
$z$ can be found in $O(\log n)$ time for any query point $s$ in $B$, where $n_B=|B|$.
\end{lemma}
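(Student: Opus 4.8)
The plan is to reduce the computation of $z_1,z_2$, and $z$ to two binary searches on the convex sides of the funnel $F$ of $s$ with base $g=\overline{a_1a_2}$, after a point location that retrieves $F$. In the preprocessing I would build, inside the simple polygon $B$, a shortest-path map from $a_1$ and one from $a_2$ in $O(n_B)$ time and space \cite{ref:GuibasLi87,ref:LeeEu84}, each equipped with an $O(n_B)$-size point-location structure; the two-point shortest-path data structure $\calD(B)$ of \cite{ref:GuibasOp89} that we have already built for the bay provides equivalent functionality. These structures allow us, for any query point $s\in B$, to obtain in $O(\log n)$ time the funnel $F$: its apex $z'$ (the farthest point from $s$ on $\pi(s,a_1)\cap\pi(s,a_2)$) together with the two convex chains $\pi(z',a_1)$ and $\pi(z',a_2)$, each returned in a balanced-tree representation on which a binary search costs $O(\log n)$. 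If $z'\in\{a_1,a_2\}$ then, by the definitions of $z_1',z_2',z_1,z_2,z$, we have $z_1=z_2=z=z'$ and we are done, so assume otherwise.

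Next I would locate $z_1'$ and $z_2'$ by binary search on these two convex chains. Recall $z_1'$ is the first point of $\pi(z',a_1)$, walking from $z'$, that is horizontally visible to $g$, and (as used in the proofs of Observation~\ref{obser:10} and Lemma~\ref{lem:110}) the portion $\pi(z_1',a_1)$ beyond $z_1'$ is $xy$-monotone, with \emph{every} point of $\pi(z_1',a_1)$ horizontally visible to $g$ by convexity. Hence ``horizontally visible to $g$'' holds exactly on a suffix of $\pi(z',a_1)$, so the predicate is monotone along the chain and $z_1'$ is its unique transition vertex. Since the edge slopes along a convex chain vary monotonically and the transition is a local slope-threshold condition on the two chain edges incident to a candidate vertex, $z_1'$ is found by a single $O(\log n)$ binary search with $O(1)$-time probes. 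Symmetrically, $z_2'$ is found by an $O(\log n)$ binary search on $\pi(z',a_2)$ for the first point vertically visible to $g$. Finally $z_1$, the horizontal projection of $z_1'$ on $g$, and $z_2$, the vertical projection of $z_2'$ on $g$, are computed in $O(1)$ time, and $z$ --- the intersection of the horizontal line through $z_1'$ (equivalently, through $z_1$) and the vertical line through $z_2'$ --- in $O(1)$ time as well. The total query time is $O(\log n)$ and the preprocessing is $O(n_B)$ time and space.

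The main obstacle is the funnel-retrieval primitive together with the justification of the binary searches: one must argue that $F$ can be extracted in $O(\log n)$ time in a representation supporting logarithmic binary search on each convex side, and that the visibility predicates ``horizontally (resp.\ vertically) visible to $g$'' really are monotone along the corresponding side, so that one binary search returns exactly $z_1'$ (resp.\ $z_2'$). Both facts follow from convexity of funnel sides and from the $xy$-monotonicity already established for $\pi(z_1',a_1)$ and $\pi(z_2',a_2)$, but the degenerate configurations --- $z'$ equal to a gate endpoint, $z_1'=z'$ or $z_1'=a_1$ (and their counterparts), a chain consisting of a single segment, and the tangency versus non-tangency subcases of Lemma~\ref{lem:110} --- must be enumerated so that each is detected during the search in $O(1)$ additional time.
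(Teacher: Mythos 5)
Your high-level strategy (shortest-path maps from $a_1,a_2$ plus point location, then locate $z_1'$ along a convex funnel side by exploiting monotonicity of the visibility predicate) is the same framework the paper uses, but you realize it by a run-time binary search whereas the paper precomputes per-vertex pointers $\alpha_1(\cdot),\alpha_2(\cdot),\beta_1(\cdot)$ (by one DFS on each shortest-path tree) so that the query reduces to two point-location queries and $O(1)$ lookups. That design choice lets the paper avoid your ``funnel-retrieval primitive'' altogether: it never needs a balanced-tree representation of the funnel sides, only the single vertex $v$ (resp.\ $u$) directly adjacent to $s$ in $T_1$ (resp.\ $T_2$), which the maps $M_1,M_2$ return at once.

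There is, however, a genuine gap in your probe and termination logic. You assert that $z_1'$ ``is the unique transition vertex'' of the monotone predicate and that the probe is ``a local slope-threshold condition on the two chain edges incident to a candidate vertex.'' Neither is accurate. First, the predicate ``horizontally visible to $g$'' is not a local slope test on $\pi(z',a_1)$: whether a vertex sees $g$ depends on the opposite chain $\pi(z',a_2)$ and the rest of $\partial B$, which is exactly why the paper preprocesses all vertices of $B$ into h-vertices and non-h-vertices (your proposal omits this preprocessing, so your $O(1)$-time probe is unjustified as stated). Second, and more importantly, $z_1'$ need not be a vertex. Let $v_1$ be the first h-vertex reached from $z'$ along $\pi(z',a_1)$ and $v_2$ its predecessor. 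The paper's case analysis (Fig.~\ref{fig:funnel2}(b), and the $y(v_2)\le y(v_1)$ subcase in the paper's proof) shows $z_1'$ may lie strictly inside the edge $\overline{v_1v_2}$; in that case $z_1'$ is the horizontal projection onto $\overline{v_1v_2}$ of $u_1$, the first h-vertex on $\pi(z',a_2)$. Your proposal never computes $u_1$: your second binary search on $\pi(z',a_2)$ looks for the transition of \emph{vertical} visibility (to get $z_2'$), which is a different predicate from horizontal visibility and generally yields a different vertex than $u_1$. Without $u_1$, the tangency subcase cannot be resolved in $O(1)$ time as you claim, so the proof of the $O(\log n)$ query bound is incomplete.
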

\begin{proof}
Consider any query point $s\in B$. To determine $z_1$, $z_2$, and
$z$, based on our previous discussions, it suffices to compute the two points $z_1'$ and $z_2'$.
We only show how to design a data structure for computing $z_1'$ since the solution for finding $z_2'$ is similar.
Note that $n_B\leq n$.

In the preprocessing, for each vertex $v$ of $B$, we find whether $v$ is horizontally
visible to $g$, and if yes, mark $v$ as an {\em h-vertex}. All h-vertices of $B$ can be marked
by computing the horizontal visibility of $B$ from $g$ in $O(n_B)$ time \cite{ref:JoeCo87,ref:LeeVi83}.
Also, in $O(n_B)$ time, we compute the Euclidean shortest path
tree $T_1$ from $a_1$ to all vertices of $B$ and the corresponding shortest path map
$M_1$ in $B$ \cite{ref:GuibasLi87}; similarly, we compute the shortest path
tree $T_2$ from $a_2$ and the corresponding shortest path map $M_2$.

For each vertex $v\in T_1$, we associate $v$ with two special vertices:
$\alpha_1(v)$ and $\alpha_2(v)$, defined as follows. The vertex $\alpha_1(v)$ is the first
h-vertex on the path in $T_1$ from $v$ to $a_1$ and $\alpha_2(v)$ is the child
vertex of $\alpha_1(v)$ on the path in $T_1$ from $v$ to $a_1$; if $\alpha_1(v)=v$,
then $\alpha_2(v)$ does not exist and we set $\alpha_2(v)=nil$. Note that $\alpha_2(v)$ is
not an h-vertex if it exists. The $\alpha$ vertices for all vertices in $T_1$ can be computed in $O(n_B)$ time by a depth-first search on $T_1$ starting at $a_1$.
For each vertex $v\in T_2$, we compute only one special vertex for $v$, $\beta_1(v)$, which
is the first h-vertex on the path in $T_2$ from $v$ to $a_2$. The $\beta$ vertices for all vertices of $T_2$ can also be computed in $O(n_B)$ time.

This finishes our preprocessing, which takes $O(n_B)$ time in total.

%Consider any query point $s\in B$.
Below we find the point $z_1'$ in $O(\log n)$ time.
Let $\pi(s,a_1)$ and $\pi(s,a_2)$ be the Euclidean shortest paths in $B$
from $s$ to $a_1$ and $a_2$, respectively. For any point $p$, let $y(p)$ denote its $y$-coordinate.

By using the shortest path map $M_1$, we find the vertex, denoted by $v$, which directly connects
to $s$ on $\pi(s,a_1)$. Likewise, we find the vertex $u$ that directly connects to $s$ on
$\pi(s,a_2)$ using $M_2$. Both $v$ and $u$ are found in $O(\log n)$ time. Depending on whether $v=u$, there are two main cases.

\begin{enumerate}
\item
If $v=u$, then clearly $s\neq z'$. Let $v_1=\alpha_1(v)$ and $u_1=\beta_1(u)$. Note that $v_1$
and $u_1$ are available once we find $v$ and $u$. Depending on whether $v_1=u_1$, we further have two subcases.

\begin{enumerate}
\item If $v_1=u_1$, then we claim $z'=v_1=u_1$. Indeed, since $z'$ is the last common vertex
of $\pi(s,a_1)$ and $\pi(s,a_2)$ if we move on them from $s$, no vertex on
$\pi(s,a_1)\cap \pi(s,a_2)$ can be horizontally visible to $g$ except possibly $z'$.
Because $v_1=u_1$, $v_1=u_1$ must be on $\pi(s,a_1)\cap \pi(s,a_2)$. Since $v_1=u_1$ is
horizontally visible to $g$, $v_1=u_1=z'$ must hold.

    By the definition of $z_1'$, the above claim implies $z_1'=z'=u_1=v_1$.

\item If $v_1\neq u_1$, then an easy observation is $y(v_1)\geq y(u_1)$. Let $v_2=\alpha_2(v)$. Note that due to $u=v$ and $v_1\neq u_1$, $\alpha_2(v)$ exists.

    If $y(v_2)>y(v_1)$, then the horizontal visibility of $v_2$ to $g$ is ``blocked'' by the
path $\pi(v_1,a_1)$ (e.g., see Fig.~\ref{fig:funnel2}(a)). Thus we obtain $z_1'=v_1$.

    If $y(v_2)\leq y(v_1)$, then the horizontal visibility of $v_2$ to $g$ is ``blocked'' by
the path $\pi(u_1,a_2)$ (e.g., see Fig.~\ref{fig:funnel2}(b)). Thus we obtain that $z_1'$ is
the horizontal projection of $u_1$ on the line segment $\overline{v_1v_2}$, which can be computed in $O(1)$ time.
\end{enumerate}

\item
If $v\neq u$, then $s=z'$.
%
%DANNY-NEW
%
% Whether $s$ is horizontally visible to $g$ should be checked first.
%
If $s$ is horizontally visible to $g$ (which can be determined in $O(\log n)$ time
using the horizontal visibility decomposition of $B$), then $z_1'=s=z'$.  Otherwise,
let $v_1=\alpha_1(v)$ and $u_1=\beta_1(u)$. Depending on whether $v=v_1$, we further have two subcases.

\begin{enumerate}
\item If $v\neq v_1$, then $\alpha_2(v)$ exists and we let $v_2=\alpha_2(v)$.
Note that $\pi(s,a_1)$ is a convex chain.
% Due to the convexity of $\pi(s,a_1)$, $s$ cannot be horizontally visible to $g$.

    Similar to the above discussion, if $y(v_2)>y(v_1)$, then we have $z_1'=v_1$; otherwise, $z_1'$ is the horizontal projection of $u_1$ on $\overline{v_1v_2}$.

\item
If $v=v_1$, then $s$ connects directly to $v_1$ on $\pi(s,a_1)$.
%If $y(u_1)\leq y(s)\leq y(v_1)$, then due to the convexity of $\alpha(s,a_2)$, $s$ is horizontally visible to $g$ and thus $z_1'=s$.  Otherwise,
Similar to the above discussion, if $y(s)>y(v_1)$, then we have $z_1'=v_1$; otherwise, $z_1'$ is the horizontal projection of $u_1$ on $\overline{v_1s}$.
\end{enumerate}
\end{enumerate}

Therefore, we can find the point $z_1'$ in $O(\log n)$ time. The lemma thus follows.
\end{proof}

We have discussed all possible cases of finding a shortest \st\ path
when $s$ is in a bay $B$ and $t$ is in the ocean $\calM$, and in each case,
we can obtain a shortest path in $O(\log n)$ time.

\subsubsection{The Point $t$ is in Another Bay}

Let $B_s$ be the bay containing $s$ with gate $g_s$, and
$B_t$ be the bay containing $t$ with gate $g_t$.
In this case, any shortest \st\ path must cross both $g_s$ and $g_t$. The
algorithm for this case is similar to the one for the case of $t\in \calM$. Again, we need to
consider different cases of how a shortest \st\ path may cross
different portions of both the gates $g_s$ and $g_t$.

We define the points $z_1$, $z_2$, and $z$ in $B_s$ for $s$ in the
same way as
before, but denote them by $z_1(s)$, $z_2(s)$, and $z(s)$ instead.
Similarly, we define the corresponding three points $z_1(t)$,
$z_2(t)$, and $z(t)$ in $B_t$ for $t$. Based on our previous
discussions, we have the following cases.

\begin{enumerate}
\item
There is a shortest \st\ path containing a point $z_s$ in
$\{z_1(s),z_2(s)\}$ and a point $z_t$ in $\{z_1(t),z_2(t)\}$.
Note that both $z_s$ and $z_t$ are on their bay gates and thus are in $\calM$.

In this case, there must be a shortest \st\ path that is a
concatenation of a shortest path $\pi(s,z_s)$ from $s$ to $z_s$ in
$B_s$, a shortest path $\pi(z_s,z_t)$ from $z_s$ to $z_t$ in $\calM'$,
and a shortest path $\pi(z_t,t)$ from $z_t$ to $t$ in $B_t$. The path
$\pi(s,z_s)$ can be found by using $\calD(B_s)$, i.e., the Euclidean
two-point shortest path query data structure on $B_s$
\cite{ref:GuibasOp89}, and similarly, $\pi(z_t,t)$ can be found by
using $\calD(B_t)$. The path $\pi(z_s,z_t)$ can be found by using our data
structure for $\calM'$ in Lemma \ref{lem:80}.

\item
There is a shortest \st\ path that contains $z(s)$
and a point $z_t$ in $\{z_1(t),z_2(t)\}$.

In this case, there must be a shortest \st\ path that is a
concatenation of a shortest $s$-$z(s)$ path $\pi(s,z(s))$ in
$B_s$, a shortest $z(s)$-$z_t$ path $\pi(z(s),z_t)$, and a shortest $z_t$-$t$ path
$\pi(z_t,t)$ in $B_t$.
The path $\pi(s,z(s))$ (resp., $\pi(z_t,t)$) can be found by using $\calD(B_s)$
(resp., $\calD(B_t)$), and the path
$\pi(z(s),z_t)$ can be found by using similar algorithms
as discussed above since $z_t$ is in $\calM$.

\item
There is a shortest \st\ path that contains $z(t)$
and a point $z_s$ in $\{z_1(s),z_2(s)\}$.

This case is solved by using the similar approach as for Case 2 above.

\item
There is a shortest \st\ path that contains $z(s)$ and $z(t)$.

In this case, there must be a shortest \st\ path that is a concatenation of a shortest path
$\pi(s,z(s))$ from $s$ to $z(s)$ in
$B_s$, a shortest path $\pi(z(s),z(t))$ from $z(s)$ to $z(t)$,
and a shortest path $\pi(z(t),t)$ from $z(t)$ to $t$ in $B_t$.
The path $\pi(s,z(s))$ (resp., $\pi(z(t),t)$) can be found by using $\calD(B_s)$
(resp., $\calD(B_t)$).  It remains to show how to compute $\pi(z(s),z(t))$ below.
\end{enumerate}

Recall that we have defined a graph $G_E(g_s)$ in $B_s$ on the points of
$\calV(g_s)$, which consists of all points of $\calV(\calQ)$ lying on $g_s$. We
also find a gateway set $V_g(z(s),G_E(g_s))$ for $z(s)$ on $G_E(g_s)$. Similarly,
for $B_t$ and its gate $g_t$, we define $\calV(g_t)$, $G_E(g_t)$, and
$V_g(z(t),G_E(g_t))$. Let $G_E(\calM,g_s,g_t)$ be the graph formed by merging
$G_E(\calM)$, $G_E(g_s)$, and $G_E(g_t)$.
A shortest path from $z(s)$ to $z(t)$ can be found based on Lemmas \ref{lem:180} and \ref{lem:190}
below, which are similar to Lemmas \ref{lem:140} and \ref{lem:160}, respectively.

\begin{lemma}\label{lem:180}
If there is a shortest path from $z(s)$ to $z(t)$ containing a point in $\calV(g_s)\cap
\overline{z_1(s)z_2(s)}$ and a point in $\calV(g_t)\cap \overline{z_1(t)z_2(t)}$,
then there is a shortest path from $z(s)$ to $z(t)$ along $G_E(\calM,g_s,g_t)$ that contains
a gateway of $z(s)$ in $V_g(z(s),G_E(g_s))$ and a gateway of $z(t)$ in
$V_g(z(t),G_E(g_t))$.
\end{lemma}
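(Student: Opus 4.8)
The plan is to mimic the proof of Lemma~\ref{lem:140}, but now with \emph{two} bays instead of one: I would split the given shortest $z(s)$-$z(t)$ path into three consecutive subpaths — one inside $B_s$, one in $\calM$, one inside $B_t$ — and replace each subpath by a path of equal length living in the corresponding subgraph of $G_E(\calM,g_s,g_t)$. Concretely, fix a shortest path $\pi$ from $z(s)$ to $z(t)$ containing a point $p_s\in\calV(g_s)\cap\overline{z_1(s)z_2(s)}$ and a point $p_t\in\calV(g_t)\cap\overline{z_1(t)z_2(t)}$; we may assume $p_s$ precedes $p_t$ along $\pi$ (since $\pi$ runs from $B_s$ to $B_t$, a routing/exchange argument like the one in the proof of Lemma~\ref{lem:90} lets us arrange that the crossing of $g_s$ comes before that of $g_t$). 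Write $\pi=\pi_1\cdot\pi_2\cdot\pi_3$ with $\pi_1=\pi(z(s),p_s)$, $\pi_2=\pi(p_s,p_t)$, $\pi_3=\pi(p_t,z(t))$; each $\pi_i$ is a shortest path between its own endpoints.

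Next I would handle the three pieces separately. For $\pi_1$: since $p_s\in\calV(g_s)\cap\overline{z_1(s)z_2(s)}$, Lemma~\ref{lem:130} (applied to the bay $B_s$, gate $g_s$, and graph $G_E(g_s)$) produces a shortest path $\pi_1'$ from $z(s)$ to $p_s$ inside $B_s$ along $G_E(g_s)$ that contains a gateway of $z(s)$ in $V_g(z(s),G_E(g_s))$; because $|\pi_1'|=|\pi_1|$, swapping $\pi_1'$ for $\pi_1$ preserves optimality. Symmetrically, Lemma~\ref{lem:130} applied to $B_t$, $g_t$, $G_E(g_t)$ gives a shortest path $\pi_3'$ from $z(t)$ to $p_t$ along $G_E(g_t)$ through a gateway of $z(t)$ in $V_g(z(t),G_E(g_t))$, with $|\pi_3'|=|\pi_3|$. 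For the middle piece $\pi_2$: both $p_s$ and $p_t$ lie in the ocean $\calM$ (each is a point on a bay gate, and a bay gate is an edge shared with $\calM$) and both belong to $\calV(\calQ)$ (since $\calV(g_s),\calV(g_t)\subseteq\calV(\calQ)$), so Lemma~\ref{lem:40} yields a shortest path $\pi_2'$ from $p_s$ to $p_t$ along $G_E(\calM)$ with $|\pi_2'|=|\pi_2|$.

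Finally I would glue the pieces. In $G_E(\calM,g_s,g_t)$ the node of $G_E(g_s)$ labelled $p_s$ is identified with the node of $G_E(\calM)$ labelled $p_s$, and similarly for $p_t$, so $\pi_1'\cdot\pi_2'\cdot\pi_3'$ (with the gateway edges from $z(s)$ and to $z(t)$ attached at the two ends, following the same convention as for Lemma~\ref{lem:140}) is a well-defined walk in $G_E(\calM,g_s,g_t)$ from $z(s)$ to $z(t)$; its length is $|\pi_1|+|\pi_2|+|\pi_3|=|\pi|$, hence it is a shortest $z(s)$-$z(t)$ path, and by construction it passes through a gateway of $z(s)$ in $V_g(z(s),G_E(g_s))$ and a gateway of $z(t)$ in $V_g(z(t),G_E(g_t))$, as required. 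I do not expect any new geometric difficulty here — Lemmas~\ref{lem:40}, \ref{lem:110}, and \ref{lem:130} already carry all the geometric content — so the only thing to be careful about is the bookkeeping at the two seams: checking that $p_s$ and $p_t$ are genuinely shared nodes of the merged graph so the concatenation is legal, and justifying the ``$p_s$ before $p_t$'' ordering on $\pi$.
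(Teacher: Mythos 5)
Your proposal is correct and takes essentially the same route as the paper: apply Lemma~\ref{lem:130} in $B_s$ and $B_t$, Lemma~\ref{lem:40} in $\calM$, and concatenate at the shared nodes $p_s,p_t\in\calV(\calQ)$ of the merged graph. The paper's proof is just a terser version of yours; the extra bookkeeping you add (ordering $p_s$ before $p_t$ and explicitly checking equal lengths at the seams) is implicit in the paper's setup and does not change the argument.
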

\begin{proof}
Suppose there is a shortest $z(s)$-$z(t)$ path containing a point $p_s$ in $\calV(g_s)\cap
\overline{z_1(s)z_2(s)}$ and a point $p_t$ in $\calV(g_t)\cap \overline{z_1(t)z_2(t)}$.
Then by Lemma \ref{lem:130}, there is a shortest $z(s)$-$p_s$ path $\pi(z(s),p_s)$ along $G_E(g_s)$
containing a gateway of $z(s)$ in $V_g(z(s),G_E(g_s))$ and there is a shortest $p_t$-$z(t)$
path $\pi(p_t,z(t))$ along $G_E(g_t)$ containing a gateway of $z(t)$ in $V_g(z(t),G_E(g_t))$.
Since both $p_s$ and $p_t$ are in $\calV(\calQ)$, by Lemma \ref{lem:40}, there exists a shortest
$p_s$-$p_t$ path $\pi(p_s,p_t)$ along $G_E(\calM)$.

The concatenation of $\pi(z(s),p_s)$, $\pi(p_s,p_t)$, and $\pi(p_t,z(t))$ is a shortest $z(s)$-$z(t)$ path,
which is along the graph $G_E(\calM,g_s,g_t)$ and contains a gateway of $z(s)$ and a gateway of $z(t)$.
% The lemma thus follows.
\end{proof}

\begin{lemma}\label{lem:190}
%Suppose every shortest path from $z(s)$ to $z(t)$ crosses the
%interior of $\overline{z_1(s)z_2(s)}$ and the interior of
%$\overline{z_1(t)z_2(t)}$.
If no shortest $z(s)$-$z(t)$ path contains any point of
$\{z_1(s),z_2(s)\}\cup \{\calV(g_s)\cap \overline{z_1(s)z_2(s)}\}$, then
$z(t)$ must be horizontally visible to $\overline{z(s)z_2(s)}$;
similarly, if no shortest $z(s)$-$z(t)$ path contains any
point of $\{z_1(t),z_2(t)\}\cup \{\calV(g_t)\cap
\overline{z_1(t)z_2(t)}\}$, then $z(s)$ must be horizontally visible
to $\overline{z(t)z_2(t)}$.
\end{lemma}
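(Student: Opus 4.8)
The plan is to adapt the proof of Lemma~\ref{lem:160} almost verbatim, with $t$ replaced by $z(t)$ and the bay $B$ replaced by $B_s$; the second statement then follows by interchanging the roles of $s$ and $t$, so I describe only the first.

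First I would extract, from the hypothesis, a shortest $z(s)$-$z(t)$ path $\pi$ that crosses $\overline{z_1(s)z_2(s)}$ exactly once, at a point $p$ in the interior of a single segment $\overline{v_iv_{i+1}}$, where $v_0=z_1(s),v_1,\dots,v_m,v_{m+1}=z_2(s)$ list the points of $\calV(g_s)\cap\overline{z_1(s)z_2(s)}$ together with the two endpoints. To get this, I would use the $B_s$-version of Lemma~\ref{lem:100} to rule out a crossing of $g_s$ outside the interior of $\overline{z_1(s)z_2(s)}$, since such a crossing can be rerouted through $z_1(s)$ or $z_2(s)$, contradicting the hypothesis; and a shortcut along $g_s$ to reduce to a single crossing, the hypothesis again forbidding the shortcut (or any later rerouting) from passing through any $v_j$. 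Writing $\pi=\pi(z(s),p)\cup\pi(p,z(t))$ we have $\pi(z(s),p)\subseteq B_s$, and since $\pi$ does not re-enter $B_s$ while $z(t)\in B_t$, we have $\pi(p,z(t))\subseteq\calM'\cup B_t$.

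Next I would build the region $R$ exactly as in Lemma~\ref{lem:160}: shoot rightward horizontal rays from $v_i$ and $v_{i+1}$, push the hitting points $u_i,u_{i+1}$ onto a common elementary curve $\beta$ of $\calQ$, and let $R$ be bounded by $\overline{u_iv_i}$, $\overline{v_iv_{i+1}}$, $\overline{v_{i+1}u_{i+1}}$, and $\beta(u_i,u_{i+1})$. As there, every point of $R$ is horizontally visible to $\overline{v_iv_{i+1}}\subseteq\overline{z_1(s)z_2(s)}$, and since $\triangle z(s)z_1(s)z_2(s)\subseteq B_s$ is free, every point of $R$ is horizontally visible to $\overline{z(s)z_2(s)}$; so it suffices to prove $z(t)\in R$. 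Assuming $z(t)\notin R$, the subpath $\pi(p,z(t))$ must leave $R$; it cannot cross $\overline{v_iv_{i+1}}$ (that re-enters $B_s$), so it crosses $\overline{u_iv_i}$, $\overline{v_{i+1}u_{i+1}}$, or $\beta(u_i,u_{i+1})$. If it crosses one of the two horizontal segments, say at $q$, then replacing $\pi(z(s),q)$ by the $xy$-monotone path $\overline{z(s)v_i}\cup\overline{v_iq}$ (or the analogous path through $v_{i+1}$) produces a shortest $z(s)$-$z(t)$ path through $v_i\in\{z_1(s),z_2(s)\}\cup(\calV(g_s)\cap\overline{z_1(s)z_2(s)})$, contradicting the hypothesis. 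Hence $\pi(p,z(t))$ leaves $R$ only through $\beta(u_i,u_{i+1})$, and since $\pi(p,z(t))\subseteq\calM'\cup B_t$ while $\beta(u_i,u_{i+1})\subseteq\partial\calM$, this can occur only where $\beta(u_i,u_{i+1})$ overlaps a canal gate (the path enters that canal along its corridor path) or the gate $g_t$ of $B_t$ (the path enters $B_t$).

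The canal-gate case is handled exactly as Case~2 of Lemma~\ref{lem:160}, using the corridor-path terminal on that gate and, when needed, replacing a portion of $\pi$ by a subsegment of the gate, which forces the path to meet $\overline{v_iu_i}$ or $\overline{v_{i+1}u_{i+1}}$ and reduces to the previous case. The hard part will be the genuinely new case, where $\pi(p,z(t))$ first leaves $R$ by crossing the portion of $g_t$ lying on $\beta(u_i,u_{i+1})$ at a point $q'$ and then stays in $B_t$ until $z(t)$; here $z(t)\notin R$ and the earlier reductions do not apply. I would handle it in the same spirit as the canal case: exploiting that $g_t$ is a chord, that $\beta$ is $xy$-monotone, and the convexity of the funnel of $B_t$ underlying $z_1(t),z_2(t),z(t)$ together with $\triangle z(t)z_1(t)z_2(t)\subseteq B_t$, to show that either $\pi$ can be rerouted through a point of $\calV(g_s)\cap\overline{z_1(s)z_2(s)}$ (a contradiction) or $z(t)$ is already horizontally visible to $\overline{z(s)z_2(s)}$, so that no counterexample to the statement arises. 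This is the only step that does not transcribe directly from the proof of Lemma~\ref{lem:160}; the rest is a mechanical adaptation.
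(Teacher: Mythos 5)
Your framing contains a genuine gap that undermines the whole plan. You write ``so it suffices to prove $z(t)\in R$,'' but $z(t)$ lies in the interior of the bay or canal $B_t$, whereas $R$ is a region of $\calM$ (bounded in part by the elementary curve $\beta(u_i,u_{i+1})\subseteq\partial\calM$). Since $B_t$ and $\calM$ are separated by the gate $g_t\subseteq\partial\calM$, we have $z(t)\notin R$ generically (unless $z(t)$ happens to lie on $g_t$). You implicitly acknowledge this in your last paragraph — ``here $z(t)\notin R$ and the earlier reductions do not apply'' — but you never repair the framing; you simply list some tools (``$g_t$ is a chord, $\beta$ is $xy$-monotone, convexity of the funnel'') without giving the argument that closes the case. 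That ``hard part'' is precisely the content of this lemma, not a mechanical transcription.

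The paper's proof avoids your impasse by targeting a different intermediate claim. It does not try to place $z(t)$ in $R$. Instead, starting from a shortest path $\pi$ that crosses $g_s$ at $p_s$ and $g_t$ at $p_t$ with $\pi(z(s),p_s)\subseteq B_s$, $\pi(p_s,p_t)\subseteq\calM'$, and $\pi(p_t,z(t))\subseteq B_t$, it first invokes the proof of Lemma~\ref{lem:160} to show $p_t\in R$. Because $p_t$ lies on $g_t\subseteq\partial\calQ$, it must lie on $\beta(u_i,u_{i+1})$, so $\overline{q_1q_2}:=\beta(u_i,u_{i+1})\cap g_t$ is a nonempty segment containing $p_t$. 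The key new step is then to show $z_1(t)\in\overline{q_1q_2}$: otherwise (say $q_1=u_i$ lies strictly between $p_t$ and $z_1(t)$ on $g_t$) one extends $\overline{v_iu_i}$ horizontally into $B_t$; the resulting chord separates $p_t$ from $z(t)$ in $B_t$, so $\pi(z(t),p_t)$ crosses it at some $q'$, and replacing $\pi(z(s),q')$ by the $xy$-monotone path $\overline{z(s)v_i}\cup\overline{v_iq'}$ yields a shortest path through $v_i$ — contradiction. Once $z_1(t)\in\overline{q_1q_2}$, every point of $\overline{q_1q_2}$ is horizontally visible to $\overline{z(s)z_2(s)}$, and since $\overline{z(t)z_1(t)}$ is horizontal, $z(t)$ inherits that visibility. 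So the right intermediate claim is about $z_1(t)$, not $z(t)$, and the horizontal edge $\overline{z(t)z_1(t)}$ is exactly the bridge you are missing.
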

\begin{proof}
We prove only the case when no shortest $z(s)$-$z(t)$
path contains any point of $\{z_1(s),z_2(s)\}\cup \{\calV(g_s)\cap \overline{z_1(s)z_2(s)}\}$,
$z(t)$ must be horizontally visible to $\overline{z(s)z_2(s)}$
(the other case is similar).

Let $\pi$ be a shortest $z(s)$-$z(t)$ path that intersects
$g_s$ at a point $p_s$ and intersects $g_t$ at a point $p_t$ (see
Fig.~\ref{fig:vispath}).  Let $\pi(p_1,p_2)$ denote
the subpath of $\pi$ between any two points $p_1$ and $p_2$ on $\pi$. We assume
$\pi(z(s),p_s)\subseteq B_s$, $\pi(p_t,z(t))\subseteq B_t$, and
$\pi(p_s,p_t)\subseteq \calM'$, since such a path $\pi$ always exists.

Let the points of $\calV(g_s)$ on $\overline{z_1(s)z_2(s)}$ be
$v_1,v_2,\ldots,v_m$ ordered along $\overline{z_1(s)z_2(s)}$ from $z_1(s)$ to
$z_2(s)$, and let $v_0=z_1(s)$ and $v_{m+1}=z_2(s)$.
%Without loss of generality,
Suppose $p_s$ is in the interior of
$\overline{v_iv_{i+1}}$, for some $i$ with $0\leq i\leq m$.
We define $u_i$, $u_{i+1}$, $\beta(u_i,u_{i+1})$, and $R$ in the same way
as in the proof of Lemma \ref{lem:160}.

Since $p_t$ is in $\calM$, by the proof of Lemma \ref{lem:160}, $p_t$
must be in the region $R$. Further, since $p_t$ is on $g_t\subseteq
\partial\calQ$, $p_t$ is on $\beta(u_i,u_{i+1})$. Thus,
$\beta(u_i,u_{i+1})\cap g_t$ is not empty. Since $g_t$ is a line segment,
$\beta(u_i,u_{i+1})\cap g_t$ is also a line segment.
Let $\overline{q_1q_2}=\beta(u_i,u_{i+1})\cap g_t$.
Thus, $p_t\in \overline{q_1q_2}$.

Recall that $z(t)$ is visible to $z_1(t)\in g_t$ and
$\overline{z(t)z_1(t)}$ is horizontal. Hence, to prove that
$z(t)$ is horizontally visible to $\overline{z(s)z_2(s)}$, it
suffices to prove that $z_1(t)$ is horizontally visible to
$\overline{z(s)z_2(s)}$. For this, it suffices to prove that $z_1(t)$ must
be on $\overline{q_1q_2}$ since every point on
$\overline{q_1q_2}\subseteq \beta(u_i,u_{i+1})$ is horizontally visible to
$\overline{z(s)z_2(s)}$. In the following, we prove $z_1(t)\in \overline{q_1q_2}$.

\begin{figure}[t]
\begin{minipage}[t]{\linewidth}
\begin{center}
\includegraphics[totalheight=1.5in]{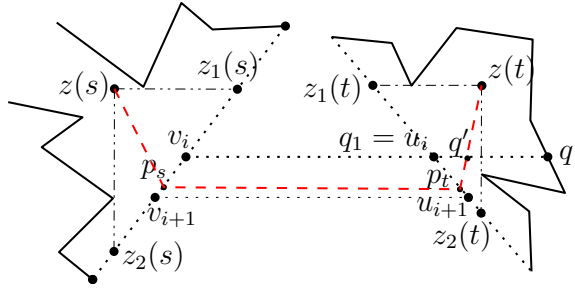}
\caption{\footnotesize Illustrating a possible shortest path (the red dashed curve) from
$z(s)$ to $z(t)$ crossing $\overline{u_iq}$. If this happens, we can
find another shortest path $\pi(z(t),q')\cup \overline{q'v_i}\cup
\overline{v_iz(s)}$, which contains $v_i$. In this example, $q_1=u_i$
and $q_2=u_{i+1}$.}
\label{fig:vispath}
\end{center}
\end{minipage}
\vspace*{-0.15in}
\end{figure}

Suppose to the contrary $z_1(t)\not\in \overline{q_1q_2}$ (see
Fig.~\ref{fig:vispath}).
Without loss of generality, we assume $q_1$ is closer to $z_1(t)$
than $q_2$. Since $p_t\in \overline{q_1q_2}$, $q_1\in \overline{p_tz_1(t)}\subseteq g_t$.
This implies that $q_1$ is not an endpoint of $g_t$, and thus
$q_1$ must be an endpoint of $\beta(u_i,u_{i+1})$ (i.e., one of
$u_i$ or $u_{i+1}$) since
$\overline{q_1q_2}=\beta(u_i,u_{i+1})\cap g_t$; assume $q_1=u_i$.
We extend $\overline{v_iu_i}$ horizontally into the bay $B_t$ until
hitting a point, say $q$, on the boundary of $B_t$ (see
Fig.~\ref{fig:vispath}). The horizontal segment
$\overline{u_iq}$ partitions $B_t$ into two sub-polygons such that $p_t$ and
$z_1(t)$ are in different sub-polygons. Since $\overline{z(t)z_1(t)}$ is
horizontal, $p_t$ and $z(t)$ are also in different sides of
$\overline{u_iq}$, implying
that the path $\pi(z(t),p_t)$ must intersect
$\overline{u_iq}$ since $\pi(z(t),p_t)$ is in $B_t$.
Let $q'$ be the intersection of $\pi(z(t),p_t)$ and $\overline{u_iq}$ (see
Fig.~\ref{fig:vispath}).
Then, the concatenation of  $\pi(z(t),q')$, $\overline{q'v_i}$,
and $\overline{v_iz(s)}$ is also a shortest path from $z(t)$ to $z(s)$ since
$ \overline{q'v_i}\cup \overline{v_iz(s)}$ is $xy$-monotone.
But this means that there is a shortest
$z(s)$-$z(t)$ path containing $v_i$, contradicting
with the lemma condition that no shortest $z(s)$-$z(t)$ path contains any point of
$\{z_1(s),z_2(s)\}\cup \{\calV(g_s)\cap \overline{z_1(s)z_2(s)}\}$.

The above arguments prove that $z_1(t)$ is on $\overline{q_1q_2}$.
The lemma thus follows.
\end{proof}

By Lemmas \ref{lem:180} and \ref{lem:190}, we can find a shortest
$z(s)$-$z(t)$ path by either using the gateways of $z(s)$ and
$z(t)$ in the merged graph
$G_E(\calM,g_s,g_t)$ or shooting horizontal and vertical rays from
$z(s)$ and $z(t)$. We have finished all possible cases for finding a
shortest \st\ path when the two query points are in different bays.
For each case, we compute a ``candidate" shortest \st\ path, and take
the one with the smallest length among all these cases (there are only a
constant number of them).

It remains to solve the canal case, i.e., when the query points are
in canals. The algorithm is similar to that for the bay case; the only
difference is that we have to take care of two gates for each
canal. Specifically, suppose $s$ is in a canal $C_s$ and $t$ is
in a canal $C_t$. If $C_s\neq C_t$, then there must be a shortest \st\ path
$\pi$ that intersects a gate of $C_s$ at a point $p_s$ and intersects a gate of
$C_t$ at a point $p_t$ such that the subpath $\pi(s,p_s)$ is in $C_s$,
the subpath $\pi(p_t,t)$ is in $C_t$, and the subpath $\pi(p_s,p_t)$
is in $\calM'$. Hence, we can use a similar approach as for the bay case to
find a shortest \st\ path by considering all four gate pairs of $C_s$ and
$C_t$. If $C_s=C_t$, while we can treat this case in the same way as for
the case of $C_s\neq C_t$, we need to consider one more possible situation when
a shortest \st\ path may be contained entirely in $C_s$, which is easy since $C_s$
is a simple polygon. If one of $C_s$ or $C_t$ is a bay, the case can be handled
in a similar fashion.

We summarize the whole algorithm in the proof of the following theorem.

\begin{theorem}\label{theo:20}
We can build a data structure of size $O(n+h^2\log h4^{\sqrt{\log
h}})$ in $O(n+h^2\log^{2}h4^{\sqrt{\log h}})$ time that can
answer each two-point $L_1$ shortest path query in $O(\log n)$ time (i.e., for
any two query points $s$ and $t$, the length of a shortest \st\ path can be found in $O(\log n)$
time and an actual path can be reported in additional time linear to the number of edges of the output path).
\end{theorem}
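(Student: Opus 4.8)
The plan is to assemble the machinery developed in this section into one data structure and one query procedure. In the preprocessing phase, I would first triangulate the free space $\calF$ and compute the extended corridor structure, obtaining the ocean $\calM$, all bays and canals, the set $\calQ=\calR\setminus\calM$, its vertex set $\calV(\calQ)$, and all corridor paths; by \cite{ref:Bar-YehudaTr94,ref:ChenA11ESA,ref:ChenCo12arXiv,ref:ChenL113STACS} this takes $O(n+h\log^{1+\epsilon}h)$ time. Then I would carry out the preprocessing of Lemma \ref{lem:80}: build $G_E(\calM)$ by Lemma \ref{lem:50}, compute a shortest path tree from every node of $G_E(\calM)$ together with the pairwise shortest path length table, maintain the edge lists along the elementary curves of $\calQ$, and add the preprocessing of Lemma \ref{lem:70}. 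By Lemma \ref{lem:80} this costs $O(n+h^2\log^2 h4^{\sqrt{\log h}})$ time and $O(n+h^2\log h4^{\sqrt{\log h}})$ space, which already realizes the claimed bounds.

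Next I would do the per-bay and per-canal preprocessing. For every bay or canal $B$, build the $O(|B|)$-size two-point Euclidean query structure $\calD(B)$ \cite{ref:GuibasOp89}, the $O(|B|)$-size structure of Lemma \ref{lem:170} for locating the special points $z_1,z_2,z$, and, for each gate $g$ of $B$, the graph $G_E(g)$ (Lemma \ref{lem:120}) with a shortest path tree from every node and its length table, plus the gateway structure of Lemma \ref{lem:150}. The bays and canals are interior-disjoint simple polygons with total size $O(n)$, so all the $\calD(\cdot)$ and Lemma \ref{lem:170} structures take $O(n)$ space and time. For the gate graphs, I would use that each $\calV(g)\subseteq\calV(\calQ)$ and each point of $\calV(\calQ)$ lies on $O(1)$ gates, hence $\sum_g h_g=O(h)$ and $\sum_g h_g^2\le(\sum_g h_g)^2=O(h^2)$; therefore all the $G_E(g)$'s and their shortest path trees take $O(h^2\log h4^{\sqrt{\log h}})$ space and $O(h^2\log^2 h4^{\sqrt{\log h}})$ time, absorbed by the bound above. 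Finally, add the $O(n\log n)$-time preprocessing of \cite{ref:ChenSh00} for detecting trivial shortest paths.

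For a query with points $s$ and $t$, I would first locate each of $s,t$ among $\calM$, the bays, and the canals in $O(\log n)$ time via the visibility decompositions, and if both lie in $\calM$ answer directly by Lemma \ref{lem:80}. Otherwise I follow the case analysis of Section \ref{subsec:general}: if $s$ is in a bay $B$ with gate $g$, then $t\in B$ is handled by $\calD(B)$ (Lemma \ref{lem:90}); $t\in\calM$ is handled by first computing $z_1,z_2,z$ via Lemma \ref{lem:170} and then producing, for each of the $O(1)$ ways a shortest \st\ path can cross $g$, a candidate path --- crossing $\overline{a_1z_1}$ or $\overline{a_2z_2}$ via $\calD(B)$ and Lemma \ref{lem:80} (Lemma \ref{lem:100}), crossing $\overline{z_1z_2}$ through a point of $\calV(g)$ via the merged graph $G_E(\calM,g)$ and the gateway sets of Lemmas \ref{lem:70}, \ref{lem:150} (Lemmas \ref{lem:110}, \ref{lem:130}, \ref{lem:140}), or a trivial $z$-$t$ path (Lemma \ref{lem:160}) --- and returning the shortest candidate; $t$ in another bay or in a canal is handled by the analogous four-way analysis built on Lemmas \ref{lem:180} and \ref{lem:190}, the canal subcases additionally allowing a path contained in a single canal (a simple polygon). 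Each candidate length is obtained in $O(\log n)$ time and there are $O(1)$ candidates, so a shortest \st\ path length is reported in $O(\log n)$ time; an actual path is then traced in time linear in its number of edges using the precomputed shortest path trees of $G_E(\calM)$ and $G_E(g)$, the edge lists on the elementary curves, and the shortest path maps of $\calD(\cdot)$.

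The step I expect to be the main obstacle is making the interaction between the gate graphs and $G_E(\calM)$ fit the budget: both the accounting $\sum_g h_g^2=O(h^2)$ for preprocessing, and the claim that, given only the $O(\sqrt{\log h_g})$ gateways of $z$ in $G_E(g)$, the $O(\sqrt{\log h})$ gateways of $t$ in $G_E(\calM)$, and the precomputed length tables, one can actually recover a shortest $z$-$t$ path through $\calV(g)\cap\overline{z_1z_2}$ in $O(\log n)$ time rather than in time proportional to $|\calV(g)|$; this is exactly where the merge $G_E(\calM,g)$ and the structure of the cut-segment tree $T(g)$ must be exploited carefully.
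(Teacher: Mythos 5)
You have the right overall decomposition, but there is a genuine gap in the preprocessing, and you actually flag it yourself at the end of your write-up without closing it. Your plan stores a shortest-path tree and a pairwise distance table \emph{separately} for each gate graph $G_E(g)$ and for $G_E(\calM)$. At query time (e.g.\ the case $s\in B$, $t\in\calM$, crossing $g$ at a point of $\calV(g)\cap\overline{z_1z_2}$), you then need the shortest distance in the merged graph $G_E(\calM,g)$ from a gateway $u\in V_g(z,G_E(g))$ to a gateway $v\in V_g(t,G_E(\calM))$. With only the two separate tables you can compute $d_{G_E(g)}(u,p)+d_{G_E(\calM)}(p,v)$ for a candidate ``hinge'' node $p\in\calV(g)$, but minimizing over all $p$ costs $\Theta(|\calV(g)|)$ time, not $O(\log n)$; and for the bay--bay case with gates $g_s$ and $g_t$ you would need two hinge points, making it worse. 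You correctly suspect that ``the merge $G_E(\calM,g)$ and the structure of $T(g)$ must be exploited,'' but nothing in your preprocessing lets a single table lookup span the gate.

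The paper's fix is to perform the merge globally and \emph{before} computing shortest-path trees: merge $G_E(\calM)$ together with $G_E(g)$ for \emph{every} gate $g$ into one graph $G_E(\calP)$ (identifying the copies of each point of $\calV(\calQ)$), which still has only $O(h\sqrt{\log h}\,2^{\sqrt{\log h}})$ nodes because $\sum_g h_g=O(h)$, and then compute a shortest-path tree and a pairwise length table from every node of $G_E(\calP)$. This gives the same $O(h^2\log h\,4^{\sqrt{\log h}})$ space and $O(h^2\log^{2}h\,4^{\sqrt{\log h}})$ time as your accounting, but it is what makes each candidate pair in the gateway graph answerable by a single $O(1)$ lookup regardless of which gate graphs the two gateways live in. Once you replace your per-gate tables by this single global $G_E(\calP)$ table (the per-gate trees for path reporting can be retained, but the merged tree suffices), the rest of your argument --- triangulation and corridor structure, Lemma \ref{lem:80} for $\calM$--$\calM$ queries, $\calD(B)$ and Lemma \ref{lem:170} for the in-bay/canal pieces, the $O(1)$ candidate enumeration via Lemmas \ref{lem:90}--\ref{lem:190} --- matches the paper's proof and the $O(\log n)$ query bound follows.
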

\begin{proof}
Our preprocessing algorithm consists of the following major steps.

\begin{enumerate}
\item
Compute a triangulation of the free space $\calM$ in
$O(n+h\log^{1+\epsilon} h)$ time
\cite{ref:Bar-YehudaTr94,ref:ChazelleTr91}. Then produce
all bays, canals, corridor paths, $\calM$, and $\calV(\calQ)$ in
$O(n+h\log h)$ time
\cite{ref:ChenA11ESA,ref:ChenCo12arXiv,ref:ChenL113STACS}.
\item
Compute the vertical and horizontal visibility decompositions of $\calP$
in $O(n+h\log^{1+\epsilon} h)$ time \cite{ref:Bar-YehudaTr94,ref:ChazelleTr91}.
Build a point location data structure \cite{ref:EdelsbrunnerOp86,ref:KirkpatrickOp83}
for each of the two decompositions in $O(n)$ time, which is used for
performing any vertical or horizontal ray-shooting in $O(\log n)$
time.
\item
Construct the graph $G_E(\calM)$ of size $O(n+h\sqrt{\log
h}2^{\sqrt{\log h}})$ in $O(n+h\log^{3/2} h2^{\sqrt{\log h}})$  time
by Lemma \ref{lem:50}.

\item
Perform the preprocessing of Lemma \ref{lem:70} in
$O(n+h\cdot \log^{3/2} h \cdot 2^{\sqrt{\log h}})$ time and
$O(n+h\cdot \sqrt{\log h} \cdot 2^{\sqrt{\log h}})$ space.

\item
Perform the preprocessing of Lemma \ref{lem:80} in
$O(n+h^2\log^{2} h 4^{\sqrt{\log h}})$ time and
$O(n+h^2\log h4^{\sqrt{\log h}})$ space.

\item
Compute a two-point Euclidean shortest path
query data structure $\calD(B)$ in each bay or canal $B$. Since the
total number of vertices of all bays and canals is $O(n)$, this
step takes $O(n)$ time.

\item
Construct the graph $G_E(g)$ for the gate $g$ of every bay or canal by Lemma
\ref{lem:120}. The total space for all such graphs is $O(h\sqrt{\log
h}2^{\sqrt{\log h}})$ and the total time for building all these graphs
is $O(n+h\log^{3/2} h2^{\sqrt{\log h}})$, as proved below.
First, each point of $\calV(\calQ)$ can be on at most one
bay or canal gate. Thus, the sum of $h_g$'s in Lemma \ref{lem:120} over all
gates $g$ is $O(|\calV(\calQ)|)$, which is $O(h)$. Second, the total number
of obstacle vertices of all bays and canals  is $O(n)$, and each canal has two
gates. Hence, the sum of $n_B$'s in Lemma \ref{lem:120} over all
bay and canals $B$ is $O(n)$.

\item
Perform the preprocessing of Lemma \ref{lem:150} for the graphs $G_E(g)$ of
all gates $g$, which can be done in totally $O(h\log^{3/2} h 2^{\sqrt{\log h}})$ time and
$O(h \sqrt{\log h}2^{\sqrt{\log h}})$ space.

\item
Merge the graph $G_E(\calM)$ and the graphs $G_E(g)$ for all gates
$g$ into a single graph $G_E(\calP)$, which takes $O(h)$ time since there are $O(h)$ points in
$\calV(\calQ)$.
% Let $G_E(\calP)$ be the merged graph.
Thus, the size of $G_E(\calP)$ is $O(h\sqrt{\log h} 2^{\sqrt{\log h}})$.

\item
For each node $v$ of $G_E(\calP)$, compute a shortest path tree rooted at $v$ in
$G_E(\calP)$. Maintain a shortest path length
table such that for any two nodes $u$ and
$v$ of $G_E(\calP)$, the length of a shortest path between $u$ and
$v$ in $G_E(\calP)$ can be obtained in $O(1)$ time. This step takes
$O(h^2\log h4^{\sqrt{\log h}})$ space and $O(h^2\log^{2}h4^{\sqrt{\log h}})$ time.

\item
Perform the preprocessing of Lemma \ref{lem:170} for each bay and canal,
which takes $O(n)$ space and $O(n)$ time in total.
\end{enumerate}

In summary, the total preprocessing space and time are
$O(n+h^2\log h4^{\sqrt{\log
h}})$ and $O(n+h^2\log^{2}h4^{\sqrt{\log h}})$, respectively.

Consider any two query points $s$ and $t$.
Next, we discuss our query algorithm that computes the
length of a shortest \st\ path in $O(\log n)$ time and reports an
actual path in additional time linear to the number of edges of the output
path. We will not explicitly discuss how to report an
actual path (which is similar to that in Lemma \ref{lem:80} and is easy).

First of all, as discussed in Section \ref{sec:pre},
we determine whether there exists a trivial shortest
\st\ path by shooting horizontal and vertical rays from $s$ and $t$,
which can be done in $O(\log n)$ time. In the following, we assume
that there is no trivial shortest \st\ path.
Depending on whether the query points are in the bays, canals, or the
ocean $\calM$, there are several possible cases.

%\begin{enumerate}
\begin{description}
\item[Both query points are in $\calM$.]

In this case, we use the
algorithm for Lemma \ref{lem:80} to find a shortest \st\ path in
$O(\log n)$ time.

\item[Only one query point is in $\calM$.]

Without loss of generality, we assume that $s$ is in a bay or a canal $B$ and
$t$ is in $\calM$. Further, we assume that $B$ is a canal since the
case that $B$ is a bay can be considered as a special case.

Let $g^1$ and $g^2$ be the two gates of $B$. We define three points
$z(s,g^1)$, $z_1(s,g^1)$,
and $z_2(s,g^1)$ for $s$ in $B$ with respect to the gate $g^1$ in the same way as we
defined $z$, $z_1$, and $z_2$ before. Similarly, we define $z(s,g^2)$, $z_1(s,g^2)$,
and $z_2(s,g^2)$ for $s$ in $B$ with respect to the gate $g^2$.  These points
can be computed in $O(\log n)$ time by Lemma \ref{lem:170}. Then, we
compute the lengths of the following ``candidate'' shortest \st\ paths and
return the one with the smallest length.

\begin{enumerate}
\item
For each point $p\in \{z_1(s,g^1),z_2(s,g^1),z_1(s,g^2),z_2(s,g^2)\}$,
the path which is a concatenation of a shortest path $\pi(s,p)$ from $s$ to $p$
in $B$ and a shortest path $\pi(p,t)$ from $p$ to $t$ in $\calM'$.

The path $\pi(s,p)$ can be found in $O(\log n)$ time by using the data structure
$\calD(B)$ on $B$, and the path $\pi(p,t)$ can be found in
$O(\log n)$ time by Lemma \ref{lem:80}.

\item
For each point $p\in \{z(s,g^1),z(s,g^2)\}$, the path which is a
concatenation of a shortest path $\pi(s,p)$ from $s$ to $p$
in $B$ and a particular path $\pi(p,t)$ from $p$ to $t$.

The path $\pi(s,p)$ can be found in $O(\log n)$ time by using the data structure
$\calD(B)$ on $B$. The path $\pi(p,t)$ is determined as follows.
First, based on Lemma \ref{lem:160} (although $B$ is a bay in
Lemma \ref{lem:160}, the result also holds for canals because the lemma
was proved with respect to a gate regardless of whether it is a gate of
a bay or a canal), we check whether there exists a
path from $p$ to $t$ consisting of only two line segments, by performing
horizontal and vertical ray-shootings. If yes, then such a path is
$\pi(p,t)$. Otherwise, by Lemmas \ref{lem:140} and \ref{lem:160}, we
find a shortest path from $p$ to $t$ along the merged graph $G_E(\calP)$
by using the gateways of $p$ and the gateways of $t$, which can be obtained in $O(\log n)$
time by Lemmas \ref{lem:150} and \ref{lem:70}, respectively.
Since both $p$ and $t$ have
$O(\sqrt{\log h})$ gateways, a shortest $p$-$t$ path can be determined
in $O(\log n)$ time using the gateway graph as discussed at the end of
Section \ref{sec:newgraph}.
\end{enumerate}

\item[Neither query point is in $\calM$.] Let $B_s$ be the
bay or canal that contains $s$ and $B_t$ be the bay or canal that contains $t$.

If $B_s=B_t$ and $B_s$ is a bay, then by Lemma \ref{lem:90}, we can find a shortest \st\ path
by using the data structure $\calD(B_s)$ in $O(\log n)$ time.

Suppose $B_s\neq B_t$. Then we assume both $B_s$ and $B_t$ are canals since
the other cases are just special cases of this case. Let $g^1_s$ and $g^2_s$ be
the two gates of $B_s$ and $g^1_t$ and $g^2_t$ be
the two gates of $B_t$. Similarly as before, we define the points $z(s,g_s^i)$,
$z_1(s,g_s^i)$, and $z_2(s,g_s^i)$ for $s$ with respect to $g_s^i$,
and $z(t,g_t^i)$, $z_1(t,g_t^i)$, and $z_2(t,g_t^i)$ for $t$ with respect to $g_t^i$, for
$i=1,2$. These points
can all be determined in $O(\log n)$ time by Lemma \ref{lem:170}. Then we
compute the lengths of the following ``candidate'' shortest \st\ paths and
return the one with the smallest length.

\begin{enumerate}
\item
For each pair of points $p_s$ and $p_t$ such that $p_s\in \{z_1(s,g_s^1),z_2(s,g_s^1),z_1(s,g_s^2),z_2(s,g_s^2)\}$ and $p_t\in \{z_1(t,g_t^1),z_2(t,g_t^1),z_1(t,g_t^2),z_2(t,g_t^2)\}$,
the path which is a concatenation of a shortest path $\pi(s,p_s)$ from $s$ to $p_s$
in $B_s$, a shortest path from $p_s$ to $p_t$ in $\calM'$, and a shortest path $\pi(p_t,t)$ from $p_t$ to $t$ in $B_t$.

The paths $\pi(s,p_s)$ and $\pi(p_t,t)$ can be found in
$O(\log n)$ time by using $\calD(B_s)$ and $\calD(B_t)$, respectively.
The path $\pi(p_s,p_t)$ can be obtained in
$O(\log n)$ time by Lemma \ref{lem:80}.

\item
For each point $p_s\in \{z(s,g_s^1),z(s,g_s^2)\}$ and each point $p_t\in
\{z_1(t,g_t^1),z_2(t,g_t^1),z_1(t,g_t^2),z_2(t,g_t^2)\}$, the path
which is a concatenation of a shortest path from $s$ to $p_s$ in $B_s$,
a particular path $\pi(p_s,p_t)$ from $p_s$ to $p_t$, and a shortest
path from $p_t$ to $t$ in $B_t$.

The paths $\pi(s,p_s)$ and $\pi(p_t,t)$ can be found in
$O(\log n)$ time by using $\calD(B_s)$ and $\calD(B_t)$, respectively.
Since $p_t$ is in $\calM$,
the particular path $\pi(p_s,p_t)$ is defined similarly as
the path $\pi(p,t)$ in the second subcase of
the above case when only one query point $t$ is in $\calM$
and thus can be obtained by the similar approach.

\item
For each point $p_s\in
\{z_1(s,g_s^1),z_2(s,g_s^1),z_1(s,g_s^2),z_2(s,g_s^2)\}$ and each point
$p_t\in \{z(t,g_t^1),z(t,g_t^2)\}$, the path which is a concatenation
of a shortest path from $s$ to $p_s$ in $B_s$, a particular path
$\pi(p_s,p_t)$ from $p_s$ to $p_t$, and a shortest path from $p_t$ to
$t$ in $B_t$.

This subcase is symmetric to the subcase immediately above and can be handled similarly.

\item
For each point $p_s\in \{z(s,g_s^1),z(s,g_s^2)\}$ and each point $p_t\in \{z(t,g_t^1),z(t,g_t^2)\}$, the path
which is a concatenation of a shortest path from $s$ to $p_s$ in $B_s$, a particular path $\pi(p_s,p_t)$ from $p_s$ to $p_t$, and a shortest path from $p_t$ to $t$ in $B_t$.

The paths $\pi(s,p_s)$ and $\pi(p_t,t)$ can be found in
$O(\log n)$ time by using $\calD(B_s)$ and $\calD(B_t)$, respectively.
The particular path $\pi(p_s,p_t)$ is determined similarly as the path
$\pi(p,t)$ in the second subcase of the above case when only
one query point $t$ is in $\calM$, but based on Lemmas \ref{lem:180} and
\ref{lem:190} instead. Note that although $B_s$ and $B_t$ are bays in
these lemmas, the results also hold for canals (actually, they
are proved with respect to two gates regardless of whether they are gates of
bays or canals). Specifically, we determine $\pi(p_s,p_t)$ as follows.
Based on Lemma \ref{lem:190}, we first check whether there exists a
path from $p_s$ to $p_t$ consisting of only two line segments, by
horizontal and vertical ray-shootings. If yes, then such a path is
$\pi(p_s,p_t)$. Otherwise, by Lemmas \ref{lem:180} and \ref{lem:190}, we
find a shortest $p_s$-$p_t$ path along the merged graph $G_E(\calP)$
by using the gateways of $p_s$ and the gateways of $p_t$, which
can be computed in $O(\log n)$ time by Lemma \ref{lem:150}.
Since both $p_s$ and $p_t$ have
$O(\sqrt{\log h})$ gateways, a shortest $p_s$-$p_t$ path can be obtained
in $O(\log n)$ time using the gateway graph as discussed in Section \ref{sec:newgraph}.
\end{enumerate}

\end{description}
%\end{enumerate}

Finally, if $B_s=B_t$ and $B_s$ is a canal, then the algorithm is similar as for
the above case with the difference that we must consider an additional
``candidate'' path that is a shortest \st\ path inside $B_s$, which
can be found in $O(\log n)$ time by using the data structure $\calD(B_s)$.

Hence, in any case, we find a shortest \st\ path in $O(\log n)$ time.
The theorem thus follows.
\end{proof}

If we replace all enhanced graphs, e.g., $G_E(\calM)$ and $G_E(g)$ for
every gate $g$, by the corresponding graphs similar to $G_{old}$ in \cite{ref:ChenSh00}
as discussed in Section \ref{sec:pre}, then we obtain the following results.

%
%DANNY-NEW
%
% Would $O(n+h^2\log h)$ space be enough, like in [6]?  Note that in [6],
% $O(n^2\log n)$ space and $O(n^2\log^2 n)$ time are used.  Here, our
% time and space bounds are the same, which are not like in [6].
% Could we use our algorithms in STACS'13 and build $O(h \log h)$ shortest
% path maps, one for each of the $O(h \log h)$ Steiner points?
%
\begin{corollary}
We can build a data structure in $O(n+h^2\log^2h)$ time and space,
such that each two-point shortest path query is answered in
$O(\log n+\log^2 h)$ time; alternatively,
we can build a data structure in $O(nh\log h+h^2\log^2 h)$ time and
$O(nh\log h)$ space, such that each two-point shortest path query is
answered in $O(\log n\log h)$ time.
\end{corollary}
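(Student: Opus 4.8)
The plan is to re-run the construction and query procedure of Theorem~\ref{theo:20} essentially verbatim, but to replace every ``enhanced'' ($G_E$-style) graph by its ``un-enhanced'' ($G_{old}$-style) counterpart on the same point set and the same cut-line/cut-segment tree. Concretely I would use the graph $G_{old}(\calM)$ already defined in Section~\ref{sec:obstacle} in place of $G_E(\calM)$, use for each gate $g$ the graph $G_{old}(g)$ on $\calV(g)$ built with the cut-segments of $T(g)$ but with no type-3 Steiner points in place of $G_E(g)$, and merge $G_{old}(\calM)$ and all the $G_{old}(g)$'s into a single graph $G_{old}(\calP)$. Since $\calV(\calQ)$ and $\bigcup_g\calV(g)$ have $O(h)$ points and the trees $T(\calM)$ and $T(g)$ have only $O(\log h)$ levels, the ``type-3'' blow-up factor $\sqrt{\log h}\,2^{\sqrt{\log h}}$ disappears and $G_{old}(\calP)$ has $O(h\log h)$ nodes and $O(h\log h)$ edges.

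First I would check that no correctness is lost. This is immediate, because Lemmas~\ref{lem:40}, \ref{lem:60}, \ref{lem:130}, \ref{lem:140}, \ref{lem:160}, \ref{lem:180}, and~\ref{lem:190} were all proved by exhibiting the desired shortest path in a $G_{old}$-style graph and then noting that $G_{old}$-edges are merely subdivided inside the $G_E$-style graph; hence the $G_{old}$-versions of these lemmas hold directly, with the gateway sets $V_g(s,G_{old}(\calM))$, $V_g(z,G_{old}(g))$, etc., now of size $O(\log h)$ each (one gateway per cut-line-tree level instead of $O(\sqrt{\log h})$). The purely geometric ingredients---the special points $z,z_1,z_2$ (Lemmas~\ref{lem:90}, \ref{lem:100}, \ref{lem:110}, Observation~\ref{obser:10}), their computation in $O(\log n)$ time (Lemma~\ref{lem:170}), the two-point structures $\calD(B)$, and the bay/canal case analysis---are untouched.

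Next I would re-tally the bounds for the first data structure. Constructing $G_{old}(\calP)$ takes $O(n+h\log^2 h)$ time (Lemmas~\ref{lem:50} and~\ref{lem:120} without the $2^{\sqrt{\log h}}$ factor); precomputing the pairwise distances among its $O(h\log h)$ nodes, together with shortest-path trees, takes $O(h^2\log^2 h)$ time and space, mirroring the corresponding step of \cite{ref:ChenSh00} with the role of the $n$ obstacle vertices played by the $O(h)$ points of $\calV(\calQ)$; the rest of the preprocessing of Theorem~\ref{theo:20} (triangulation, the visibility decompositions, the $\calD(B)$'s, the elementary-curve edge lists, and a fractional-cascading layer along the cut-line tree for gateway lookup) costs $O(n+h\log^2 h)$. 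At query time, finding the projections, testing for a trivial path, and computing the $O(\log h)$ gateways of $s$ and of $t$ still take $O(\log n)$ time, but the gateway graph now has $O(\log h)$ nodes and $O(\log^2 h)$ edges and is solved in $O(\log^2 h)$ time; together with the $O(\log n)$-time $\calD(B)$ queries this yields the claimed $O(\log n+\log^2 h)$ query time.

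For the alternative data structure I would keep the same graph $G_{old}(\calP)$ but trade preprocessing and space for query time in the gateway-lookup step: omitting the fractional-cascading layer and the pre-stored up/down projections of every Steiner point (the Lemma~\ref{lem:70}/\ref{lem:150} data), I would locate each of the $O(\log h)$ gateways of a query point by a binary search plus a ray-shooting visibility test in the $O(n)$-size decomposition of $\calQ$, at cost $O(\log n)$ per gateway and $O(\log n\log h)$ overall, and store the distance information in the coarser form costing $O(nh\log h)$ space and $O(nh\log h+h^2\log^2 h)$ preprocessing time. The main obstacle---and the only place that really needs care---is the bookkeeping: one must verify that after the substitution the gateway counts are $O(\log h)$ rather than $O(\log n)$ everywhere (which hinges on the corridor structure having shrunk the ``hard'' vertex set to the $O(h)$ points of $\calV(\calQ)$), and that every lemma of Section~\ref{sec:obstacle} stated for $G_E$-style graphs was in fact established through its $G_{old}$-style version, so that nothing new must be reproved.
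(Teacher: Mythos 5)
Your treatment of the first data structure (replace each $G_E$-style graph by its $G_{old}$-style counterpart, obtain the $O(h\log h)$-size merged graph $G_{old}(\calP)$, and re-tally bounds) follows the paper's argument essentially verbatim, and your observation that the $G_{old}$-versions of Lemmas \ref{lem:40}, \ref{lem:60}, \ref{lem:130}, \ref{lem:140}, \ref{lem:160}, \ref{lem:180}, and \ref{lem:190} hold directly (since each $G_E$-lemma was proved by first exhibiting the path in the corresponding $G_{old}$-graph) is the right justification.

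Your proposed mechanism for the alternative $O(nh\log h)$-space, $O(\log n\log h)$-query data structure, however, does not work and misses the paper's key idea. You attribute the $O(\log n\log h)$ query time to a \emph{slower} gateway lookup (omitting fractional cascading and stored projections) and gesture at a ``coarser'' distance store, but those omitted items are only $O(h\log h)$-size auxiliaries, so dropping them cannot turn the $O(h^2\log^2 h)$ pairwise-distance table into $O(nh\log h)$. Moreover, as long as the distance oracle is still a pairwise table on $G_{old}(\calP)$, you must compute gateway sets for \emph{both} $s$ and $t$ and solve a gateway graph with $O(\log^2 h)$ edges, so the query structure is unchanged and no trade-off has actually been made. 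The paper's alternative data structure keeps the fast gateway lookup but replaces the per-node shortest-path trees in the graph $G_{old}(\calP)$ by per-node shortest path maps built in the free space $\calF$ itself (via the Chen--Wang SPM construction): each of the $O(h\log h)$ nodes $v$ of $G_{old}(\calP)$ gets an SPM of size $O(n)$ built in $O(n+h\log h)$ time, giving $O(nh\log h)$ total space and $O(nh\log h+h^2\log^2 h)$ total time. At query time one iterates over the $O(\log h)$ gateways $v$ of just one query point and, for each $v$, queries $v$'s SPM at the other query point directly in $O(\log n)$ time, so no gateway graph and no second gateway set are needed; this is what yields $O(\log h\log n)$. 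This replacement of a graph distance table by geometric shortest path maps is the genuine new step your proposal would need to supply.
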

\begin{proof}
If we replace all the enhanced graphs $G_E(\calM)$ and $G_E(g)$ for every gate $g$ of the
bays and canals by the graphs similar to $G_{old}$ in \cite{ref:ChenSh00}
as discussed in Section \ref{sec:pre}, then the size of the new merged graph,
denoted by $G_{old}(\calP)$, becomes $O(h\log h)$ instead of
$O(h\sqrt{\log h}2^{\sqrt{\log h}})$.
Hence, the data structure for Theorem \ref{theo:20}
needs $O(n+h^2\log^2 h)$ space and can be built in $O(n+h^2\log^2 h)$
time by using the approach in \cite{ref:ChenSh00}.
However, using the new graph $G_{old}(\calP)$, each query for any two points in
$\calM$ can be answered in $O(\log^2 h)$ time because there are $O(\log
h)$ gateways for each query point. Therefore, any general two-point
shortest path query can be answered in $O(\log^2 h+\log n)$ time, by
using a similar query algorithm as in Theorem \ref{theo:20}. We omit the details.

In the result above,
we compute a shortest path tree rooted at each node in the merged graph $G_{old}(\calP)$.
Alternatively, we can compute a shortest path map in the free space $\calF$ for
each node $v$ of $G_{old}(\calP)$, such that given any query point
$t$, the length of a shortest path from $v$ to $t$ can be found in
$O(\log n)$ time and an actual path can be reported in additional
time linear to the number of edges of the output path.
Each such shortest path map is of size $O(n)$ and can be computed in
$O(n+h\log h)$ time
\cite{ref:ChenA11ESA,ref:ChenCo12arXiv,ref:ChenL113STACS} (after
the free space $\calF$ is triangulated). Since the size
of $G_{old}(\calP)$ is $O(h\log h)$, the overall preprocessing time
and space are $O(nh\log
h+h^2\log^2 h)$ and $O(nh\log h)$, respectively. For querying, since
a query point may have
$O(\log h)$ gateways and for each gateway $v$, we can determine the
shortest path from $v$ to the other query point in $O(\log n)$ time,
the total query time is $O(\log h\log n)$. We omit the details.
\end{proof}

%
%DANNY-NEW
%
% For these "alternative" results above, would you like to add any statements here
% for the case when all obstacles are CONVEX, which give "better" results than the
% general obstacle case?
%

\section{The Weighted Rectilinear Case}
\label{sec:weighted}

In this section, we extend our techniques in Section
\ref{sec:newgraph} to the weighted rectilinear case. In the
weighted rectilinear case, every polygonal obstacle $P\in \calP$ is
{\it rectilinear} and {\it weighted}, i.e., each edge of $P$ is
either horizontal or vertical and $P$ has a weight
$w(P)\geq 0$ ($w(P)=+\infty$ is possible). If a line segment $e$ is in $P$,
then the {\em weighted length} of $e$ is $x\cdot (1+w(P))$, where $x$ is the $L_1$ length of
$e$. Any polygonal path $\pi$ can be divided into a sequence of maximal
line segments such that each segment is contained in the same obstacle
or in the free space $\calF$; the {\em weighted length} of $\pi$ is the
sum of the weighted lengths of all maximal line segments of $\pi$.

Consider a vertex $v$ of any rectilinear obstacle $P$ such that the interior angle
of $P$ at $v$ is $3\pi/2$. We define the {\em
internal projections} of $v$ on the boundary $\partial P$ of $P$ as follows.
Suppose $\overline{u_1v}$ and $\overline{u_2v}$ are the two edges of
$P$ incident to $v$. We extend $\overline{u_1v}$ into the interior of
$P$ along the direction from $u_1$ to $v$ until we hit $\partial P$
at the first point, which is an {\em internal projection} of $v$; similarly,
we define another interval projection of $v$  by extending
$\overline{u_2v}$. Internal projections are used to control shortest
paths that pass through the interior of obstacles.

The ``visibility'' in the weighted case is defined in a slightly
different way: Two points $p$ and $q$ are {\em visible} to each other
if $\overline{pq}$ is entirely in either $\calF$ or an obstacle.

Let $\mathcal{V}$ be the set of all obstacle vertices of $\calP$, their internal
projections, and all type-1 Steiner points. Then $|\calV|=O(n)$.
We build a graph $G_E(\calV)$ on $\calV$ similar to the one presented in
Section \ref{sec:newgraph}, with the following differences.
%First, we need not consider type-1 Steiner points here.
(1) %Second,
The visibility here is based on the new definition above.
(2) %Third,
Since a path can travel through
the interior of any obstacle, for each cut-line $l$, an edge
in $G_E(\calV)$ connects every two consecutive Steiner points on
$l$, whose weight is the weighted length of the line segment
connecting the two points.
(3) %Fourth,
In addition to the vertical
cut-lines, there are also horizontal cut-lines, which are defined
similarly and have type-2 and type-3 Steiner points defined on them similarly
to those on the vertical cut-lines.
Thus, $G_E(\calV)$ has $O(n\sqrt{\log n}2^{\sqrt{\log n}})$ nodes and edges.
%We use $T^v(\calV)$ to denote the vertical cut-line tree and use $T^h(\calV)$ to denote the horizontal cut-line tree.

%Consider a node $v$ of in the highest level of any super-level of
%the horizontal or vertical cut-line tree $T$. For each $p\in S(v)$, we
%project $p$ not only on $l(v)$ but also on $l(u)$ for every node $u$
%in the subtree of $v$ in that super-level if $p$ is visible to $l(u)$.
%The size of $G_w'$ is $O(n\sqrt{\log n} 2^{\sqrt{\log n}})$.

\begin{lemma}\label{lem:200}
The graph of $G_E(\calV)$ can be built in $O(n\log^{3/2}n2^{\sqrt{\log n}})$
time.
\end{lemma}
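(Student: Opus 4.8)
The plan is to mirror the proof of Lemma~\ref{lem:10}, making adjustments for the three differences listed above (the new visibility notion, the weighted cut-line edges, and the presence of horizontal cut-lines in addition to vertical ones). First I would compute the point set $\calV$. The obstacle vertices are given; their internal projections and the type-1 Steiner points $p^l,p^r,p^u,p^d$ are obtained in $O(n\log n)$ time from the vertical and horizontal visibility decompositions of $\calP$, which are built by standard sweeps in $O(n\log n)$ time and on each of which we also build an $O(\log n)$-time point-location structure. The edges of $G_E(\calV)$ joining each point of $\calV$ to its projections, together with the edges along each obstacle edge connecting consecutive graph nodes, are then produced in $O(n\log n)$ total time after sorting the nodes lying on each obstacle edge.

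Next I would build two cut-line trees on $\calV$, one using vertical cut-lines and one using horizontal cut-lines. Since $|\calV|=O(n)$, each tree has $O(\log n)$ levels and $O(\sqrt{\log n})$ super-levels, and both are computed in $O(n\log n)$ time. Exactly as in Lemma~\ref{lem:10}, for each node $u$ at the top of a super-level and each $p\in V(u)$ I would determine, in $O(1)$ time using the already-computed projections of $p$ and the new visibility rule (the segment $\overline{p\,p_h(l)}$ lying entirely in $\calF$ or entirely inside one obstacle), which cut-lines $l$ of the subtree $T_u$ the point $p$ is visible to; create the corresponding type-2 and type-3 Steiner points; and, using an in-order list of the cut-lines of $T_u$, add the edges of $G_E(\calV)$ joining the consecutive Steiner points induced by $p$. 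Since the total number of type-2 and type-3 Steiner points over both trees is $O(n\sqrt{\log n}2^{\sqrt{\log n}})$, this step takes $O(n\sqrt{\log n}2^{\sqrt{\log n}})$ time.

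The remaining step, which I expect to be the main obstacle, is computing for every cut-line $l$ the edges of $G_E(\calV)$ between consecutive Steiner points on $l$. Unlike in $G_E$, such an edge always exists (no visibility test), but its \emph{weighted} length must be produced. I would first sort the Steiner points on each cut-line, which costs $O(n\log^{3/2}n2^{\sqrt{\log n}})$ over all cut-lines of both trees. For a pair of consecutive Steiner points $p,q$ on $l$, the weighted length of $\overline{pq}$ is obtained by identifying the face of the arrangement of $\partial\calP$ that contains $\overline{pq}$ (a single point-location query, say at the midpoint, in $O(\log n)$ time) and multiplying the $L_1$ length of $\overline{pq}$ by $1+w(P)$ for the corresponding obstacle $P$ (or by $1$ in the free space). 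Over the $O(n\sqrt{\log n}2^{\sqrt{\log n}})$ such pairs this is $O(n\log^{3/2}n2^{\sqrt{\log n}})$. Summing the three parts yields the claimed $O(n\log^{3/2}n2^{\sqrt{\log n}})$ bound.

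The technical point that needs to be checked carefully in the last step is that a single point-location query per cut-line edge suffices, i.e.\ that two consecutive Steiner points on a cut-line always bound a segment lying in one face of the obstacle arrangement; this follows from the way the Steiner-point set is chosen (as in \cite{ref:ClarksonRe87,ref:ClarksonRe88,ref:ChenSh00}). The only other new bookkeeping relative to Lemma~\ref{lem:10} is that there are now two cut-line trees, which merely doubles the work at each stage and hence does not affect the asymptotics.
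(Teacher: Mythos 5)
Your overall structure follows the paper's proof of Lemma~\ref{lem:10} closely (compute $\calV$ and the two cut-line trees, project, then handle the cut-line edges), and the first two parts are fine. The flaw is in your handling of the cut-line edges, specifically the ``technical point'' you flag but dismiss. It is \emph{not} true that two consecutive Steiner points on a cut-line bound a segment lying in a single face of the obstacle arrangement. The Steiner points on a cut-line $l$ are horizontal projections onto $l$ of certain points of $\calV$ assigned to $l$ by the cut-line tree; the points where $l$ actually crosses an obstacle edge are generally \emph{not} among them, because the obstacle vertex at the relevant $y$-coordinate need not lie in the set $V(u)$ that projects onto $l$, nor need it be weighted-visible to $l$. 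Indeed, the paper's own Lemma~\ref{lem:210} explicitly notes that a vertical segment on a cut-line ``may intersect multiple obstacles,'' and in its preprocessing maintains the full set $I(l)$ of obstacle-edge crossings on $l$ as a separate, much larger list alongside the Steiner points. So a midpoint point-location gives the wrong weight whenever the segment $\overline{pq}$ straddles one or more obstacle boundaries, which does happen.

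The paper resolves this with a plane sweep: it sweeps a vertical line $L$, maintains in a balanced BST the intervals of $L$ delimited by obstacle edges currently crossing $L$, and augments the tree with prefix-sum--style weighted-length information so that for any two points $p,q$ on $L$ the weighted length of $\overline{pq}$ can be extracted in $O(\log n)$ time even when $\overline{pq}$ traverses several obstacles. When $L$ reaches a cut-line, each consecutive Steiner pair is queried against this structure. This is the ingredient your proposal is missing; without it, the last step does not produce correct edge weights, even though the running-time arithmetic in your write-up happens to come out to the right bound.
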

\begin{proof}
We obtain all internal projections of $\calV$ by computing the
horizontal and vertical visibility decompositions of every
obstacle in $\calP$. We find the four projection points on $\partial\calP$ (i.e., $p^r,p^l,p^u$,
and $p^d$) for all obstacle vertices $p$ of $\calP$ in $O(n\log n)$ time by
computing the horizontal and vertical visibility decompositions of $\calF$.
These can be all done in totally $O(n\log n)$ time.

Then we compute the vertical and horizontal cut-line trees, which takes
$O(n\log n)$ time since $|\calV|=O(n)$. Next, we compute
the Steiner points and the graph edges. Below, we only show how to compute those
related to the vertical cut-lines; those related to the horizontal
cut-lines can be computed in a similar way.
Let $T^v(\calV)$ denote the vertical cut-line tree.

As in Lemma \ref{lem:10}, we can compute the type-2 and type-3 Steiner
points on all cut-lines of $T^v(\calV)$ by traversing $T^v(\calV)$ in a top-down
manner. Since the internal projections and $\{p^r,p^l,p^u,p^d\}$ for
each obstacle vertex $p$ have been obtained, we can compute all
$O(n\sqrt{\log n}2^{\sqrt{\log n}})$ such Steiner points in $O(n\sqrt{\log
n}2^{\sqrt{\log n}})$ time; the corresponding horizontal graph
edges connecting these Steiner points and the points of $\calV$ can
also be computed.

It remains to compute the graph edges connecting every pair of consecutive Steiner points
on each cut-line of $T^v(\calV)$, which takes $O(n\log^{3/2} n2^{\sqrt{\log n}})$ time by a
plane sweeping algorithm, as follows. We first sort all Steiner points on each cut-line.
We then sweep a vertical line $L$ from left to
right and use a balanced binary search tree $T$ to maintain the
intervals between the obstacle edges of $\calP$ intersecting $L$. By standard
techniques, we augment $T$ to also maintain the weighted length
information along $L$ such that for any two points $p$ and $q$ on
$L$, the weighted length of $\overline{pq}$ can
be obtained in $O(\log n)$ time using $T$. During the sweeping,
when $L$ encounters a cut-line $l$, for every two consecutive Steiner
points $p$ and $q$ on $l$, we use $T$ to determine in $O(\log n)$ time the weighted
length of the edge connecting $p$ and $q$.
Since there are $O(n\sqrt{\log n} 2^{\sqrt{\log
n}})$ pairs of consecutive Steiner points on all cut-lines, it takes
$O(n\log^{3/2} n2^{\sqrt{\log n}})$ time to compute all these graph edges.

Hence, we can build the graph $G_E(\calV)$ in $O(n\log^{3/2} n2^{\sqrt{\log n}})$ time.
\end{proof}

Consider any two query points $s$ and $t$. For simplicity of discussion, we assume that both $s$ and $t$ are
in $\calF$ (the general case can also be handled similarly). With a preprocessing of $O(n^2)$
time and space, a shortest \st\ path that does not
contain any vertex of $\calV$ can be found in $O(\log n)$ time \cite{ref:ChenSh00}. Thus in the
following, we focus on finding a shortest \st\ path containing at lease one
vertex of $\calV$.

Let $Y(s)$ be the set of $s$ and
the four projections of $s$ on $\partial\calP$, i.e., $Y(s)=\{s,s^l,s^r,s^u,s^d\}$;
similarly, let $Y(t)=\{t,t^l,t^r,t^u,t^d\}$. It was shown in
\cite{ref:ChenSh00} that it suffices to find a shortest path
from $p$ to $q$ containing a vertex of $\calV$ for every $p\in Y(s)$ and every $q\in Y(t)$.
With a little abuse of notation,
we let $s$ be any point in $Y(s)$ and $t$ be any point in $Y(t)$. Our
goal is to find a shortest \st\ path that contains at lease one vertex of $\calV$.
Unless otherwise indicated, any shortest \st\ path
mentioned below refers to a shortest \st\ path that contains a vertex of $\calV$.

In \cite{ref:ChenSh00}, similar to the discussions in Section \ref{sec:pre},
$O(\log n)$ gateways for $s$ and $O(\log n)$ gateways for $t$ were defined,
such that any shortest \st\ path must contain a gateway of $s$ and a
gateway of $t$. Hence by using the gateway graph, a shortest \st\ path can be
found in $O(\log^2 n)$ time.

Based on our enhanced graph $G_E(\calV)$, as in Section \ref{sec:newgraph},
we define a new gateway set $V_g(s,G_E(\calV))$ of size $O(\sqrt{\log n})$ for
$s$ and a new gateway set $V_g(t,G_E(\calV))$ of size $O(\sqrt{\log n})$ for $t$.
The gateway set $V_g(s,G_E(\calV))$ contains $O(\sqrt{\log n})$  Steiner points on the
vertical cut-lines defined in the same way as those in $V^2_g(s,G_E)$ in Section \ref{sec:newgraph}; similarly, $V_g(s,G_E(\calV))$ also contains $O(\sqrt{\log n})$  Steiner points on the
horizontal cut-lines.  The gateway set $V_g(t,G_E(\calV))$ is defined similarly.
Using a similar proof as for Lemma \ref{lem:20}, we can show that
there exists a shortest \st\ path containing a gateway of $s$ in $V_g(s,G_E(\calV))$ and a
gateway of $t$ in $V_g(t,G_E(\calV))$. Next, we show how to compute the two gateway sets and
(the weights of) their gateway edges. Below, we discuss only the case for $s$.

The fractional cascading approach \cite{ref:ChazelleFr86} used in
Section \ref{sec:newgraph} can still compute the gateway
set $V_g(s,G_E(\calV))$ in $O(\log n)$ time, but it cannot compute the weights of the
gateway edges in $O(\log n)$ time for the following reasons.
Consider a gateway $v\in V_g(s,G_E(\calV))$, say on a vertical cut-line $l$. Then
there is a gateway edge $(s,v)$ that consists of two line segments
$\overline{ss_h(l)}$ and $\overline{s_h(l)v}$ (recall that $s_h(l)$ is the horizontal projection of $s$ on $l$). Hence, the weighted length
of the edge $(s,v)$ is the sum of the weighted lengths of these two line segments.
It was shown in \cite{ref:ChenSh00} that $\overline{ss_h(l)}$ must be in the free
space (since $s$ is in $\calF$); thus, the weighted length of
$\overline{ss_h(l)}$ is easy to compute. However, the vertical segment
$\overline{s_h(l)v}$ may intersect multiple
obstacles \cite{ref:ChenSh00}. We give an algorithm to compute in
$O(\log n)$ time the gateways and the weights of the
gateway edges for $s$ in the next lemma.

%Based on a number of observations, an $O(\log n)$ time algorithm is given in
%\cite{ref:ChenSh00} to compute the gateways and the weights of the corresponding
%in $O(\log n)$ time in the following lemma.

\begin{lemma}\label{lem:210}
With a preprocessing of $O(n^2\log n)$ time and $O(n^2)$ space,
the gateways of $V_g(s,G_E(\calV))$ for $s$ and their weighted edges
can be computed in $O(\log n)$ time.
\end{lemma}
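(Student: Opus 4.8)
The plan is to recycle the fractional-cascading machinery from the proof of Lemma~\ref{lem:30} to \emph{locate} the gateways on their relevant projection cut-lines, and to enrich it so that the \emph{weighted lengths} of the gateway edges fall out of the same search. Recall that a gateway $v\in V_g(s,G_E(\calV))$ lies on a relevant projection cut-line $l$ of $s$ (one of the $O(\sqrt{\log n})$ such lines in the vertical cut-line tree $T^v(\calV)$, or one of the $O(\sqrt{\log n})$ in the horizontal tree $T^h(\calV)$), and the gateway edge from $s$ to $v$ is the axis-parallel two-segment polyline $\overline{s\,s_h(l)}\cup\overline{s_h(l)\,v}$ (stated for a vertical $l$; the horizontal case is symmetric, with horizontal projections replaced by vertical ones). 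By the result quoted from \cite{ref:ChenSh00}, the segment $\overline{s\,s_h(l)}$ always lies in $\calF$, so its weighted length is just $|x(s)-x(l)|$. The only nontrivial ingredient is the weighted length of $\overline{s_h(l)\,v}$, an axis-parallel segment along $l$ that may run through $\Theta(n)$ obstacles.

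To handle this, I would attach to each cut-line $l$ a piecewise-linear \emph{cumulative weighted-length function} $W_l$ along $l$, measured from a fixed endpoint of $l$, whose breakpoints are exactly the points where $l$ crosses obstacle edges; then the weighted length of any axis-parallel segment $\overline{pq}\subseteq l$ is $|W_l(p)-W_l(q)|$. In the preprocessing, for each of the $O(n)$ cut-lines of $T^v(\calV)$ and $T^h(\calV)$, I would gather the at most $n$ obstacle edges crossing $l$ together with all Steiner points and points of $\calV$ on $l$, sort them along $l$ in $O(n\log n)$ time, and record at each sorted entry the value of $W_l$ there and the weight of the strip between it and the previous entry ($0$ for a free-space strip). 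Over all cut-lines this costs $O(n^2\log n)$ time; the space is $O(n^2)$, since a cut-line meets at most $n$ obstacle edges and there are $O(n)$ cut-lines, while the total number of Steiner points is only $O(n\sqrt{\log n}\,2^{\sqrt{\log n}})=o(n^2)$. In particular $W_l$ is now known at every graph node on $l$.

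Next I would build, along each of $T^v(\calV)$ and $T^h(\calV)$, one fractional-cascading data structure \cite{ref:ChazelleFr86} over these per-cut-line sorted lists, so that for any key the predecessor of that key in the list of every cut-line along a root-to-leaf path can be retrieved in $O(\log n)$ total time; this subsumes the preprocessing of Lemma~\ref{lem:30}, since the Steiner points already sit in the lists. Given a query point $s\in Y(s)$, I would first use $s^l,s^r$ (resp.\ $s^u,s^d$), obtained in $O(\log n)$ time from the horizontal and vertical visibility decompositions of $\calP$, to descend $T^v(\calV)$ (resp.\ $T^h(\calV)$) and mark the $O(\sqrt{\log n})$ relevant projection cut-lines. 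Then I would search the single key $y(s)$ through $T^v(\calV)$'s fractional-cascading structure (resp.\ $x(s)$ through $T^h(\calV)$'s): this simultaneously yields, for each relevant cut-line $l$, the Steiner points $v_1,v_2$ immediately below and above $s_h(l)$ — which are precisely the gateways of $s$ on $l$ — and the sorted entry of $W_l$ immediately below $s_h(l)$. From that entry, its stored $W_l$-value, the strip weight, and the offset of $s_h(l)$, I would compute $W_l(s_h(l))$ in $O(1)$, and then report $|x(s)-x(l)|+|W_l(s_h(l))-W_l(v)|$ as the weighted length of the gateway edge $(s,v)$. Traversing each root-to-leaf path once and performing the $O(\sqrt{\log n})$ constant-time evaluations along it together take $O(\log n)$, so the query runs in $O(\log n)$.

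The main obstacle is exactly the step just described: a naive separate binary search for $y(s)$ on each of the $O(\sqrt{\log n})$ relevant cut-lines would cost $\Theta(\log^{3/2} n)$, and the segment $\overline{s_h(l)\,v}$ cannot be traced obstacle by obstacle since it may meet $\Theta(n)$ of them. What makes $O(\log n)$ possible is (a) folding the obstacle-crossing breakpoints of every $W_l$ into the \emph{same} fractional-cascading structure already used to find the gateways, indexed along the cut-line tree, and (b) the observation that $s_h(l)$ has the fixed coordinate $y(s)$ on every vertical cut-line (resp.\ $x(s)$ on every horizontal one), so one key drives the whole path. Verifying that the breakpoints number $O(n^2)$ in total (hence the $O(n^2)$ space) and that all the preprocessing — the per-cut-line sorting, the prefix computation of the $W_l$'s, and the construction of the two fractional-cascading structures — stays within $O(n^2\log n)$ time is then routine.
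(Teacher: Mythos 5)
Your proposal is correct and follows essentially the same route as the paper: sort, per cut-line of $T^v(\calV)$ (and $T^h(\calV)$), the Steiner points together with the $O(n)$ obstacle-edge crossings, precompute prefix weighted lengths along each cut-line, and answer queries by a single fractional-cascading search keyed on $y(s)$ (resp.\ $x(s)$) along the root-to-leaf path, retrieving the nearest-entry information in $O(1)$ per node after the first. The only cosmetic difference is that you fold everything into one fractional-cascading structure per tree, whereas the paper builds two (one over just the Steiner-point lists to recover $S_v$, one over the merged lists $I'(l)$ to locate $v'(l)$), and you measure the cumulative weight $W_l$ from a fixed endpoint while the paper measures it from the topmost entry $p^*$ of $I'(l)$; both choices are equivalent. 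For the single-structure version to literally return the gateway (the nearest \emph{Steiner point}, not merely the nearest list entry, which could be an obstacle crossing) you would additionally store at each list entry a pointer to the next Steiner point above and below it, a standard $O(1)$ augmentation that you should make explicit.
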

\begin{proof}
We discuss only how to
compute the gateways of $V_g(s,G_E(\calV))$ that are on the vertical cut-lines
since those on the horizontal cut-lines can be computed similarly.
Further, for simplicity of discussion, we only compute the gateways of
$V_g(s,G_E(\calV))$ above $s$ (i.e., above the horizontal line through $s$) since those
below $s$ can be computed similarly.
Below, with a little abuse of notation, we let $V_g(s,G_E(\calV))$ refer to the
set of its gateways on the vertical cut-lines and above $s$.

We follow the terminology in Section \ref{sec:newgraph}.
Recall that $s$ has $O(\log n)$ projection cut-lines in the vertical cut-line tree $T^v(\calV)$.
Let $S_l$ be the set of all projection cut-lines of $s$ in $T^v(\calV)$.
For each projection cut-line $l\in S_l$, let $v(l)$ be the Steiner
point on $l$ immediately above the horizontal projection $s_h(l)$ of $s$ on $l$.
% i.e., the horizontal projection of $s$ on $l$.
Let $S_v=\{v(l)\ |\ l\in S_l\}$. By their definitions,
$V_g(s,G_E(\calV))$ is a subset of $S_v$ (since each gateway of $V_g(s,G_E(\calV))$
is on a {\it relevant} projection cut-line of $s$ in $T^v(\calV)$).
Hence, to compute $V_g(s,G_E(\calV))$ and their gateway
edges, it suffices to compute the set $S_v$ and the weighted lengths of
$\overline{ss_h(l)}\cup \overline{s_h(l)v(l)}$ for all projection cut-lines $l\in S_l$.
%in fact, according to Chen \etal \cite{ref:ChenSh00},
Since $\overline{ss_h(l)}$ is in
$\calF$ for any projection cut-line $l$ of $s$ \cite{ref:ChenSh00} (because $s\in
\calF$), it suffices to compute the weighted length of
$\overline{s_h(l)v(l)}$. Below, for any line segment $\overline{ab}$, let $d_w(\overline{ab})$
denote the weighted length of $\overline{ab}$. Let $S_w=\{\overline{s_h(l)v(l)}\ |\ l\in S_l\}$.

We use fractional cascading \cite{ref:ChazelleFr86}
to obtain $S_v$ in $O(\log n)$ time,
with a similar approach as for Lemma \ref{lem:30}.
%But the algorithm does not compute the weighted lengths
%of the segments in $S_w$ because those segments may
%penetrate many obstacles, as discussed in \cite{ref:ChenSh00}.
To compute the weighted lengths of the segments in $S_w$, we need to build another fractional
cascading data structure in the preprocessing.

For every cut-line $l$ of $T^v(\calV)$, we compute the intersections of $l$ with all
obstacle edges of $\calP$; let $I(l)$ be the set of such intersections.
Clearly, $|I(l)|=O(n)$. We sort these intersections and the Steiner points on $l$ to obtain a
sorted list $I'(l)$. For all $n$ cut-lines of $T^v(\calV)$,
this takes totally $O(n^2\log n)$ time,
because the total number of Steiner points is $O(n\sqrt{\log
n}2^{\sqrt{\log n}})$ (which is $O(n^2)$) and the total number of
intersections between the cut-lines and the obstacle edges is $O(n^2)$.

Consider the sorted set $I'(l)$ for any cut-line $l$ of $T^v(\calV)$. For
any two consecutive points $p_1$ and $p_2$ in $I'(l)$, the entire segment
$\overline{p_1p_2}$ is either in $\calF$ or in the same
obstacle. From top to bottom in $I'(l)$, for each point $p\in I'(l)$,
we compute the weighted length $d_w(\overline{pp^*})$ and associate it with
$p$, where $p^*$ is the highest point in $I'(l)$.  Further, for each point $p\in I'(l)$,
we maintain a weight $w_p$, defined as follows: Suppose $p'$ is the point in
$I'(l)$ immediately below $p$; if the interior of $\overline{pp'}$ is
contained in an obstacle, then $w_p$ is the weight of that obstacle, and
$w_p=0$ otherwise.
Since $I'(l)$ is sorted, computing such information in $I'(l)$ takes $O(|I'(l)|)$ time.
With such information, for any query point
$q$ on $l$, suppose $p$ is the point in $I'(l)$ that is
immediately above $q$; then we have
$d_w(\overline{qp^*})=d_w(\overline{pp^*})+(1+w_p)\cdot d_w(\overline{pq})$.
Hence, once we know the point $p$ for $q$,
 $d_w(\overline{qp^*})$ can be computed in $O(1)$ time; further,
for any point $p'$ in $I'(l)$ above $q$, we have $d_w(\overline{qp'})=d_w(\overline{qp^*})-d_w(\overline{p'p^*})$, which
is computed in $O(1)$ time since the value $d_w(\overline{p'p^*})$ is already stored at $p'$.

In the preprocessing, we build another fractional cascading data structure on $T^v(\calV)$
and the sorted lists $I'(l)$ for all cut-lines $l$ of $T^v(\calV)$, which takes $O(n^2)$ space and $O(n^2\log n)$ time.

For any query point $s$, we first use a similar approach as for Lemma \ref{lem:30} to compute
the set $S_v$ in $O(\log n)$ time. For each projection cut-line $l\in S_l$, let $v'(l)$
be the point in $I'(l)$ immediately above $s_h(l)$. Note that $v'(l)$ is between $v(l)$
and $s_h(l)$. We can use the above fractional cascading data structure to compute the points
$v'(l)$ for all $l\in S_l$ in $O(\log n)$ time (since the cut-lines of $S_l$ are at the nodes
of a path from the root to a leaf in $T^v(\calV)$). Then for each $l\in S_l$, to compute
$d_w(\overline{s_h(l)v(l)})$, as discussed above, we have  $d_w(\overline{s_h(l)v(l)})=d_w(\overline{s_h(l)p^*})-d_w(\overline{v(l)p^*})$, where $p^*$ is the highest
point in $I'(l)$ and $d_w(\overline{s_h(l)p^*})=d_w(\overline{v'(l)p^*})+(1+w_{v'(l)})\cdot d_w(\overline{s_h(l)v'(l)})$.
Since both $v(l)$ and $v'(l)$ have been computed, $d_w(\overline{s_h(l)v(l)})$ is obtained
in $O(1)$ time. Hence, the weighted lengths of all segments in $S_w$ are computed in $O(\log n)$ time.

The lemma thus follows.
\end{proof}

The following theorem summarizes our algorithm for the weighted rectilinear case.

%
%DANNY-NEW
%
% In [6], the time and space bounds are the same.  However, below, our time and
% space bounds are different.  Could we lower the time bound in this case by a
% factor of $O(\log n)$?
%
% There are some special structures in the graph $G_E(\calV)$, I think.
%
\begin{theorem}\label{theo:30}
For the weighted rectilinear case, we can build a data structure of
size $O(n^2\log n4^{\sqrt{\log n}})$ in $O(n^2\log^{2}n4^{\sqrt{\log
n}})$ time that can answer each query in $O(\log n)$ time (i.e., for
any two query points $s$ and $t$, the weighted length of a shortest \st\ path can be found in $O(\log n)$
time and an actual path can be reported in additional time linear to the number of edges of the output path).
\end{theorem}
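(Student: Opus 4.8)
The plan is to assemble the pieces already established for the weighted rectilinear case into a single preprocessing/query scheme, in direct analogy with the proof of Theorem \ref{theo:10}. In the preprocessing I would first construct the enhanced graph $G_E(\calV)$ by Lemma \ref{lem:200}, in $O(n\log^{3/2}n2^{\sqrt{\log n}})$ time; this graph has $N=O(n\sqrt{\log n}2^{\sqrt{\log n}})$ nodes and $N=O(n\sqrt{\log n}2^{\sqrt{\log n}})$ edges. Then, from each node of $G_E(\calV)$ I would compute a single-source shortest path tree using weighted edge lengths, and maintain a distance table so that the weighted distance between any two nodes of $G_E(\calV)$ is retrievable in $O(1)$ time. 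Since the graph has $O(N)$ edges, each shortest path computation costs $O(N\log N)$, so all $N$ of them together take $O(N^2\log N)=O(n^2\log^2 n\cdot 4^{\sqrt{\log n}})$ time, and the trees together with the table occupy $O(N^2)=O(n^2\log n\cdot 4^{\sqrt{\log n}})$ space, matching the claimed bounds. In addition I would run the preprocessing of Lemma \ref{lem:210} ($O(n^2\log n)$ time, $O(n^2)$ space) and the $O(n^2)$-time, $O(n^2)$-space preprocessing of \cite{ref:ChenSh00} used to report a shortest \st\ path that avoids every vertex of $\calV$; both are dominated by the above.

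For a query on $s$ and $t$, I would reduce, as in \cite{ref:ChenSh00}, to computing a shortest $p$-$q$ path for each pair $(p,q)$ with $p\in Y(s)$ and $q\in Y(t)$ (only a constant number of such pairs), and returning the overall minimum. For a fixed pair there are two cases. If some shortest $p$-$q$ path contains no vertex of $\calV$, it is found in $O(\log n)$ time by the data structure of \cite{ref:ChenSh00}. Otherwise, by the weighted analog of Lemma \ref{lem:20} stated in the text, there is a shortest $p$-$q$ path through a gateway of $p$ in $V_g(p,G_E(\calV))$ and a gateway of $q$ in $V_g(q,G_E(\calV))$, each gateway set having size $O(\sqrt{\log n})$ once the vertical and horizontal cut-lines are both counted. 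Using Lemma \ref{lem:210} I would compute these two gateway sets together with the weighted lengths of all their gateway edges in $O(\log n)$ time, then form the gateway graph: it has $O(\sqrt{\log n})$ nodes and $O(\log n)$ edges, whose weights are read off in $O(1)$ each from the stored distance table, so a shortest $p$-$q$ path in it — and hence the weighted length of a shortest \st\ path — is obtained in $O(\log n)$ time. An actual path is then reported in additional time linear in its number of edges by tracing the appropriate shortest path trees of $G_E(\calV)$ (together with the corresponding routine from \cite{ref:ChenSh00} when the path uses no vertex of $\calV$).

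The main obstacle is already isolated in Lemma \ref{lem:210}: unlike the unweighted case, a gateway edge $\overline{ss_h(l)}\cup\overline{s_h(l)v(l)}$ may pierce many obstacles, so its weighted length cannot be recovered from a single sorted list, and one needs the second fractional-cascading layer built over the sorted merge of cut-line/obstacle-edge intersections with prefix weighted-length values, so that after an $O(\log n)$ search one evaluates each gateway edge in $O(1)$ time. Beyond that, the remaining work is bookkeeping: checking the time/space arithmetic of the all-pairs shortest path step (using $\log N=\Theta(\log n)$ and $4^{\sqrt{\log n}}=(2^{\sqrt{\log n}})^2$), noting that $n^2\log^2 n\cdot 4^{\sqrt{\log n}}=O(n^{2+\epsilon})$ for any constant $\epsilon>0$, verifying that the horizontal cut-lines are handled symmetrically to the vertical ones both in the gateway definition and in Lemma \ref{lem:210}, and recalling that the general situation in which $s$ or $t$ lies inside an obstacle is dispatched exactly as in \cite{ref:ChenSh00} and contributes no new ideas.
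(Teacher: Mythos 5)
Your proposal is correct and follows essentially the same approach as the paper's proof: build $G_E(\calV)$ by Lemma~\ref{lem:200}, run all-pairs shortest paths from each node to build a distance table (giving the claimed $O(n^2\log n\,4^{\sqrt{\log n}})$ space and $O(n^2\log^2 n\,4^{\sqrt{\log n}})$ time), preprocess Lemma~\ref{lem:210} and the $O(n^2)$ structure of \cite{ref:ChenSh00} for vertex-free paths, and at query time combine the trivial-path check, gateway computation via Lemma~\ref{lem:210}, and a small gateway graph with $O(\sqrt{\log n})$ nodes. The only cosmetic difference is that you make the $Y(s)\times Y(t)$ reduction and the time/space arithmetic explicit, which the paper handles implicitly by citing its earlier discussion.
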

\begin{proof}
In the preprocessing, we compute the graph $G_E(\calV)$ by Lemma \ref{lem:200}. For each node
$v$ of $G_E(\calV)$, we compute a shortest path tree rooted at $v$ in $G_E(\calV)$. We maintain
a shortest path length table such that for any two nodes $u$ and $v$ in $G_E(\calV)$, the
(weighted) length of the shortest path from $u$ to $v$ in $G_E(\calV)$ is obtained in $O(1)$ time.
Computing all shortest path trees in $G_E(\calV)$ takes
$O(n^2\log n4^{\sqrt{\log n}})$ space and $O(n^2\log^{2}n4^{\sqrt{\log n}})$ time. We also
perform the preprocessing for Lemma \ref{lem:210}. Hence, the preprocessing takes
$O(n^2\log n4^{\sqrt{\log n}})$ space and $O(n^2\log^{2}n4^{\sqrt{\log n}})$ time in total.

Consider any two query points $s$ and $t$. First, we use the approach in \cite{ref:ChenSh00}
to find a shortest \st\ path that does not contain any obstacle vertex of $\calP$ (if any),
after a preprocessing of $O(n^2)$ time and space. Below, we focus on finding a shortest \st\
path containing an obstacle vertex of $\calP$, which must contain a gateway of $s$ in
$V_g(s,G_E(\calV))$ and a gateway of $t$ in $V_g(t,G_E(\calV))$. By Lemma \ref{lem:210}, we
can compute both $V_g(s,G_E(\calV))$ and $V_g(t,G_E(\calV))$ in $O(\log n)$ time. Then, a
shortest \st\ path can be found by building a gateway graph (as discussed in
Section \ref{sec:newgraph}) in $O(\log n)$ time since the sizes of both $V_g(s,G_E(\calV))$
and $V_g(t,G_E(\calV))$ are $O(\sqrt{\log n})$. As in \cite{ref:ChenSh00}, after the shortest
\st\ path length is computed, an actual shortest \st\ path can be reported by using the shortest
path trees of the nodes in $G_E(\calV)$, in time linear to the number of edges of the output path.

The theorem thus follows.
\end{proof}

%==========================end of document===========================

\bibliographystyle{plain}

\end{document}